\newif\ifPreprint \Preprintfalse
\title{Verifying that a compiler preserves concurrent value-dependent information-flow security}
\titlerunning{Verifying that a compiler preserves concurrent value-dependent infoflow security}
\author{Robert Sison}{Data61, CSIRO,
Australia \and UNSW Sydney, Australia
}{Robert.Sison@data61.csiro.au}{https://orcid.org/0000-0003-0313-9764}{Australian Government RTP Scholarship \& Data61 Research Project Award}
\author{Toby Murray}{University of Melbourne,
Australia}{toby.murray@unimelb.edu.au}{}{}
\authorrunning{R. Sison and T. Murray}
\keywords{Secure compilation, Information flow security, Concurrency, Verification}
\newsavebox{\mybox}
\newcommand{\kw}[1]{\textbf{#1}}
\newcommand{\myindent}[1]{{\multido{}{#1}{\qquad}}}
\newcommand{\Seq}[2]{#1;\\#2}
\newcommand{\Skip}{\kw{skip}}
\newcommand{\Stop}{\kw{stop}}
\newcommand{\LockAcq}[1]{\kw{lock}(#1)}
\newcommand{\LockRel}[1]{\kw{unlock}(#1)}
\newcommand{\Assign}[2]{#1\ensuremath{{}\mathbin{:=}{}}#2}
\newcommand{\ITE}[3]{\kw{if}\ #1\ \kw{then}\\#2\\\kw{else}\\#3\\\kw{fi}}
\newcommand{\ITEi}[4]{\myindent{#1}\kw{if}\ #2\ \kw{then}\\#3\\\myindent{#1}\kw{else}\\#4\\\myindent{#1}\kw{fi}}
\newcommand{\While}[2]{\kw{while}\ #1\ \kw{do}\\#2\\\kw{od}}
\newcommand{\Whilei}[3]{\myindent{#1}\kw{while}\ #2\ \kw{do}\\#3\\\myindent{#1}\kw{od}}
\newcommand{\var}[1]{\ensuremath{\mathit{#1}}}
\newcommand{\const}[1]{\textsf{#1}\xspace}
\newcommand{\ITEg}[3]{\kw{if}\ #1\ \kw{then}\ #2\ \kw{else}\ #3\ \kw{fi}}
\newcommand{\Whileg}[2]{\kw{while}\ #1\ \kw{do}\ #2\ \kw{od}}
\newcommand{\Seqg}[2]{#1\ensuremath{{}\mathbin{;}{}}#2}
\newcommand{\type}[1]{\mathit{#1}}
\newcommand{\Lock}{\type{Lock}}  
\newcommand{\Mode}{\type{Mode}}    
\newcommand{\Mem}{\type{Mem}}    
\newcommand{\Var}{\type{Var}}    
\newcommand{\Reg}{\type{Reg}}    
\newcommand{\cmd}{\type{cmd}}
\newcommand{\Val}{\type{Val}}
\newcommand{\expr}{\type{exp}}
\newcommand{\set}{\type{set}}
\newcommand{\Lab}{\type{Lab}}
\newcommand{\nat}{\type{nat}}
\newcommand{\oftype}{::}
\newcommand{\deftype}{::=}
\newcommand{\Load}{\kw{Load}}
\newcommand{\Store}{\kw{Store}}
\newcommand{\Jmp}{\kw{Jmp}}
\newcommand{\Jz}{\kw{Jz}}
\newcommand{\Nop}{\kw{Nop}}
\newcommand{\MoveK}{\kw{MoveK}}
\newcommand{\MoveR}{\kw{MoveR}}
\newcommand{\Op}{\kw{Op}}
\newcommand{\RISCLockAcq}{\kw{LockAcq}}
\newcommand{\RISCLockRel}{\kw{LockRel}}
\newcommand{\LocalConf}[3]{\langle #1, #2, #3 \rangle}
\newcommand{\LocalConfAbs}[3]{\langle #1, #2, #3 \rangle_\mathsf{A}}
\newcommand{\LocalConfConc}[3]{\langle #1, #2, #3 \rangle_\mathsf{C}}
\newcommand{\LocalConfWhile}[3]{\langle #1, #2, #3 \rangle_\mathsf{w}}
\newcommand{\LocalConfRISC}[5]{\langle ((#1, #2), #3), #4, #5 \rangle_\mathsf{r}}
\newcommand{\EvalStep}{\rightsquigarrow}
\newcommand{\NEvalStepAbs}[1]{\rightsquigarrow_\mathsf{A}^{#1}}
\newcommand{\EvalStepConc}{\rightsquigarrow_\mathsf{C}}
\newcommand{\NEvalStepWhile}[1]{\rightsquigarrow_\mathsf{w}^{#1}}
\newcommand{\EvalStepRISC}{\rightsquigarrow_\mathsf{r}}
\newcommand{\thyparam}[1]{\mathit{#1}}
\newcommand{\lockinterp}{\thyparam{lock{\text -}interp}}
\newcommand{\abssteps}{\thyparam{abs{\text -}steps}}
\newcommand{\thyvar}[1]{\mathit{#1}}
\newcommand{\lc}{\thyvar{lc}}
\newcommand{\pc}{\thyvar{pc}}
\newcommand{\tps}{\thyvar{tps}}
\newcommand{\regs}{\thyvar{regs}}
\newcommand{\mds}{\thyvar{mds}}
\newcommand{\mem}{\thyvar{mem}}
\newcommand{\failed}{\thyvar{failed}}
\newcommand{\PCs}{\thyvar{PCs}}
\newcommand{\Cs}{\thyvar{Cs}}
\newcommand{\nl}{\thyvar{nl}}
\newcommand{\unused}[1]{{\color{gray!80}#1}}
\newcommand{\defined}[1]{\mathsf{#1}}
\newcommand{\CouplingInvPres}{\defined{coupling{\text -}inv{\text -}pres}}
\newcommand{\ComSecure}{\defined{com{\text -}secure}}
\newcommand{\SysSecure}{\defined{sys{\text -}secure}}
\newcommand{\StrongLowBisimMM}{\defined{strong{\text -}low{\text -}bisim{\text -}mm}}
\newcommand{\NoHighBranching}{\defined{no{\text -}high{\text -}branching}}
\newcommand{\SecureRefinement}{\defined{secure{\text -}refinement}}
\newcommand{\SimplerRefinementSafe}{\defined{decomp{\text -}refinement{\text -}safe}}
\newcommand{\CgConsistent}{\defined{cg{\text -}consistent}}
\newcommand{\SecureRefineSimpler}{\defined{secure{\text -}refinement{\text -}decomp}}
\newcommand{\ClosedOthers}{\defined{closed{\text -}others}}
\newcommand{\PreservesMM}{\defined{preserves{\text -}modes{\text -}mem}}
\newcommand{\High}{\defined{High}}
\newcommand{\Low}{\defined{Low}}
\newcommand{\Writable}{\defined{writable}}
\newcommand{\Readable}{\defined{readable}}
\newcommand{\Sym}{\defined{sym}}
\newcommand{\CompileExpr}{\defined{compile{\text -}expr}}
\newcommand{\CompileCmd}{\defined{compile{\text -}cmd}}
\newcommand{\CompilerInputReqs}{\defined{compile{\text -}cmd{\text -}input{\text -}reqs}}
\newcommand{\CompiledCmdConfigConsistent}{\defined{compiled{\text -}cmd{\text -}config{\text -}consistent}}
\newcommand{\stops}{\defined{stops}}
\newcommand{\AsmrecMdsConsistent}{\defined{asmrec{\text -}mds{\text -}consistent}}
\newcommand{\RegrecMemConsistent}{\defined{regrec{\text -}mem{\text -}consistent}}
\newcommand{\VarStable}{\defined{var{\text -}stable}}
\newcommand{\RegrecStable}{\defined{regrec{\text -}stable}}
\newcommand{\NoUnstableExprs}{\defined{no{\text -}unstable{\text -}exprs}}
\newcommand{\AbsStepsWR}{\defined{abs{\text -}steps}_\mathsf{wr}}
\newcommand{\RefRelWR}{\mathcal{R}_\mathsf{wr}}
\newcommand{\CouplInvWR}{\mathcal{I}_\mathsf{wr}}
\newcommand{\BCof}{\mathcal{B}\defined{_Cof}}
\newcommand{\BCofApplied}{\BCof~\mathcal{B}~\mathcal{R}~\mathcal{I}}
\newcommand{\Regrec}{\defined{regrec}}
\newcommand{\Asmrec}{\defined{asmrec}}
\newcommand{\fst}{\defined{fst}}
\newcommand{\snd}{\defined{snd}}
\newcommand{\map}{\defined{map}}
\newcommand{\mapfst}{\map\ \fst}
\newcommand{\mapsnd}{\map\ \snd}
\newcommand{\False}{\defined{False}}
\newcommand{\True}{\defined{True}}
\newcommand{\Suc}{\defined{Suc}}
\newcommand{\Joinable}{\defined{joinable}}
\newcommand{\JoinableFwd}{\defined{joinable{\text -}forward}}
\newcommand{\JoinableBwd}{\defined{joinable{\text -}backward}}
\newcommand{\length}{\defined{length}}
\newcommand{\None}{\defined{None}}
\newcommand{\Some}{\defined{Some}}
\newcommand{\Cset}{\mathcal{C}}
\newcommand{\Cvars}{\mathcal{C}\defined{vars}}
\newcommand{\dmaFunc}{\mathcal{L}}
\newcommand{\dmaApp}[2]{\dmaFunc\ {#1}\ #2}
\newcommand{\LowMdsEqOp}[1]{=_{#1}^{\Low}}
\newcommand{\LowMdsEq}[3]{#2 \LowMdsEqOp{#1} #3}
\newcommand{\LCLowMdsEqOp}{=_{\defined{mds}}^{\Low}}
\newcommand{\LCLowMdsEq}[2]{#1 \LCLowMdsEqOp #2}
\newcommand{\LCSameMdsOp}{=_{\defined{mds}}}
\newcommand{\LCSameMdsMemOp}{\LCSameMdsOp^{\defined{mem}}}
\newcommand{\LCSameMds}[2]{#1 \LCSameMdsOp #2}
\newcommand{\LCSameMdsMem}[2]{#1 \LCSameMdsMemOp #2}
\newcommand{\Covern}{\textsc{Covern}\xspace}
\newcommand{\WRCompiler}{\textsf{wr-compiler}\xspace}
\begin{document}

\renewcommand{\sectionautorefname}{Section}
\renewcommand{\subsectionautorefname}{Section}
\renewcommand{\subsubsectionautorefname}{Section}

\maketitle

\ifPreprint
\hideLIPIcs
\pagestyle{fancyplain}
\thispagestyle{fancyplain}
\fancyhead[C]{Extended version: 1st July 2019. To appear in ITP 2019}
\fancyhead[L]{}
\fancyhead[R]{}
\fancyfoot[C]{\vspace*{2.5\baselineskip}\thepage}
\fi

\begin{abstract}
It is common to prove by reasoning over source code that programs do not leak sensitive data.
But doing so leaves a gap between
  reasoning and reality that can only be filled by accounting for the
  behaviour of the compiler.
This task is complicated when programs enforce \emph{value-dependent} information-flow security properties---in which classification of locations can vary depending on values in other locations---and complicated further when programs exploit shared-variable concurrency.

  Prior work has formally defined a notion of concurrency-aware refinement
  for preserving value-dependent security properties.
However, that notion is considerably more complex than standard refinement definitions typically applied in the verification of semantics preservation by compilers.
To date it remains unclear whether it can be applied to a realistic compiler,
because there exist no general decomposition principles for separating it into
smaller, more familiar, proof obligations.

In this work, we provide such a decomposition principle, which we show
  can almost halve the complexity of proving secure refinement. Further,
  we demonstrate its applicability to secure compilation, by proving in Isabelle/HOL the preservation of value-dependent
  security by a proof-of-concept compiler from an imperative While language to
  a generic RISC-style assembly language, for programs with shared-memory
  concurrency mediated by locking primitives.
  Finally, we execute our compiler in Isabelle on a While language model of the Cross Domain Desktop Compositor, demonstrating to our knowledge the first use of a compiler verification result to carry an information-flow security property down to the assembly-level model of a non-trivial concurrent program.

\end{abstract}

\section{Introduction}

It is well known that program translations of the kind carried out by
compilers can in principle break security properties like
confidentiality~\cite{Kaufmann16,Barthe18}.
Yet source level reasoning about confidentiality remains
common~\cite{Murray_SE_18,Mantel_SS_11,Lourenco_Caires_15}. Existing verified
compilers like CompCert \cite{Leroy09} and CakeML \cite{Kumar_MNO_14}
preserve semantics, but semantics preservation alone may be insufficient to
preserve confidentiality, especially for shared memory concurrent programs
whose threads must guard against timing leaks in order to prevent them manifesting as storage leaks~\cite{Murray_SPR_16}.

Supporting secure compilation of
programs that must enforce \emph{value-dependent} security policies
poses an additional challenge, because in such policies the sensitivity
of a memory location can depend on the values held in other memory locations.
Thus, unlike prior work on secure compilation~\cite{Barthe10},
preserving security under refinement requires a refinement relation that is strong enough to preserve those memory contents on which the policy depends.

In prior work~\cite{Murray_SPR_16}, we presented a definition for a notion
of value-dependent security-preserving refinement that is compositional
for concurrent
programs: by applying it to each thread individually,
one can derive a secure refinement of the concurrent composition.

\begin{figure}
    \begin{subfigure}[l]{0.56\textwidth}
        {\setlength{\mathindent}{0cm}\footnotesize
        \begin{align*}
            & \CouplingInvPres\ \mathcal{B}\ \mathcal{R}\ \mathcal{I}\ \equiv \\
            & \forall \lc_{1A}\ \lc_{1C}. \ 
              (\lc_{1A}, \lc_{1C}) \in \mathcal{R} \longrightarrow \\
            & \quad (\forall \lc_{1C}'. \ 
                     \lc_{1C} \EvalStepConc \lc_{1C}' \longrightarrow \\
            & \qquad (\exists n\ \lc_{1A}'.\ 
                     \lc_{1A} \NEvalStepAbs{n} \lc_{1A}'\ \land\
                     (\lc_{1A}', \lc_{1C}') \in \mathcal{R}\ \land \\
            & \qquad \quad (\forall \lc_{2A}\:\lc_{2C}\:\lc_{2A}'.\,
                             (\lc_{1A}, \lc_{2A}) \in \mathcal{B}\land
                             \LCSameMds{\lc_{1A}}{\lc_{2A}}\,\land \\
            & \qquad \qquad (\lc_{2A}, \lc_{2C}) \in \mathcal{R}\,\land \ 
                            (\lc_{1C}, \lc_{2C}) \in \mathcal{I}\ \land \\
            & \qquad \qquad \LCSameMds{\lc_{1C}}{\lc_{2C}}\,\land\,
                            \lc_{2A} \NEvalStepAbs{n} \lc_{2A}'\,\land\,
                            \LCSameMds{\lc_{1A}'}{\lc_{2A}'} \\
            & \qquad \qquad \longrightarrow \,
                            (\exists \lc_{2C}'.\ 
                                \lc_{2C} \EvalStepConc \lc_{2C}'\ \land\ 
                                \LCSameMds{\lc_{1C}'}{\lc_{2C}'}\ \land\\
            & \qquad \qquad \qquad \quad \ (\lc_{2A}', \lc_{2C}') \in \mathcal{R}\ \land \ 
                                           (\lc_{1C}', \lc_{2C}') \in \mathcal{I}))))
        \end{align*}}
    \end{subfigure}
    \begin{subfigure}[l]{0.43\textwidth}
        \begin{tikzpicture}[scale=0.7]
            \node at (0,0) {\footnotesize{$1A$}};
            \draw [->, dashed] (0.4,0) -- (3.6,0);;
            \node at (2,-0.3) {\footnotesize$n$};
            \node at (4,0) {\footnotesize{$1A'$}};

            \node at (0.5,1) {\footnotesize{$2A$}};
            \draw [->] (0.9,1) -- (4.1,1);;
            \node at (2.5,0.7) {\footnotesize$n$};
            \node at (4.5,1) {\footnotesize{$2A'$}};

            \node at (0.25,0.5) {$\mathcal{B}$};
            \node at (4.25,0.5) {$\mathcal{B}$};

            \node at (0.25,-3.5) {$\mathcal{I}$};
            \node at (4.25,-3.5) {$\mathcal{I}$};

            \node at (-1.85,0.5) {\begin{minipage}{2cm}\centering\ \ \ \ \  {\em abstract execution}\end{minipage}};

            \node at (-1.85,-3.5) {\begin{minipage}{2cm}\centering\ \ \ \ \ {\em concrete execution}\end{minipage}};

            \draw (0,-0.5) -- (0,-1.5);
            \node at (0,-2) {$\mathcal{R}$};
            \draw (0,-2.5) -- (0,-3.5);

            \draw (0.5,0.5) -- (0.5,-0.5);
            \node at (0.5,-1) {$\mathcal{R}$};
            \draw (0.5,-1.5) -- (0.5,-2.5);

            \draw [dashed] (4,-0.5) -- (4,-1.5);
            \node at (4,-2) {$\mathcal{R}$};
            \draw [dashed] (4,-2.5) -- (4,-3.5);

            \draw [dashed] (4.5,0.5) -- (4.5,-0.5);
            \node at (4.5,-1) {$\mathcal{R}$};
            \draw [dashed] (4.5,-1.5) -- (4.5,-2.5);

            \node at (0,-4) {\footnotesize{$1C$}};
            \draw [->] (0.4,-4) -- (3.6,-4);;
            \node at (2,-4.3) {\footnotesize$1$};
            \node at (4,-4) {\footnotesize{$1C'$}};

            \node at (0.5,-3) {\footnotesize{$2C$}};
            \draw [->, dashed] (0.9,-3) -- (4.1,-3);;
            \node at (2.5,-3.3) {\footnotesize$1$};
            \node at (4.5,-3) {\footnotesize{$2C'$}};
        \end{tikzpicture}
    \end{subfigure}
    \vspace{-1\baselineskip}
    \caption{Definition, graphical depiction of refinement preservation for $\SecureRefinement$ (Def. \ref{def:secure-refinement})}
    \label{fig:coupling-inv-pres}
\end{figure}
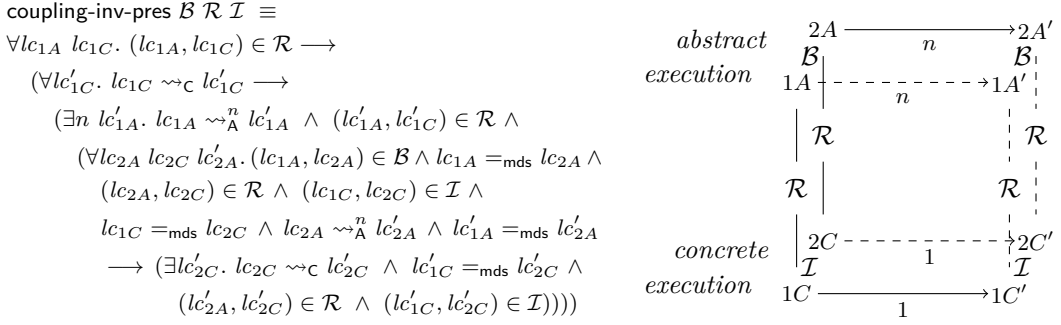

The essence of this notion of security-preserving refinement (presented fully in \autoref{sec:cvdni-refinement}) is in its refinement preservation obligation ($\CouplingInvPres$ in Figure~\ref{fig:coupling-inv-pres}).
Here, the usual square-shaped commuting diagram that is commonly used to depict (semantics-preserving) refinement (Figure~\ref{fig:decomp-R}) has been replaced by a \emph{cube} (Figure~\ref{fig:coupling-inv-pres}).
The additional dimension of this cube reflects that it preserves a 2-safety hyperproperty~\cite{Clarkson10} that compares two executions rather than examining a single one.
As such, it is significantly more complicated to prove than standard notions of
semantics-preserving refinement typical in verified
compilation~\cite{Leroy09,Kumar_MNO_14}.

To date there exist no verified compilers for shared-variable concurrent
programs proved to preserve value-dependent information-flow security.
We argue that without a \emph{decomposition principle} the cube-shaped refinement notion is too cumbersome to prove for realistic compilers.

In this paper, we tackle the central problem of making our notion of secure refinement applicable to verified secure compilation.
Firstly, we present a decomposition principle that makes the cube-shaped notion more tractable.
Secondly, we demonstrate its tractability with our major contribution: a machine-checked formal proof of concurrent value-dependent security preservation, for a proof-of-concept compiler.

In \autoref{sec-simpler} we present our decomposition principle, which decomposes the cube (\autoref{fig:coupling-inv-pres})
into three separate obligations (\autoref{fig:decomp}). The first
of these is akin to semantics-preserving refinement, while the second
and third essentially ensure together that the refinement has not
introduced any termination- and timing-leaks.

In \autoref{sec-h-branch-example} we show how the decomposition principle can almost halve the effort to prove secure refinement -- in this case, of a program that is especially prone to introduced timing leaks because it branches on secrets (a feature not yet allowed by our compiler).
There, we present a side-by-side comparison of the proof effort, both with and without the decomposition principle.
We find that using it reduces the proof's complexity by 44\%.

In \autoref{sec-compiler}, we present our compiler and its formal verification, as an application of the decomposition principle.
This compiler translates concurrent programs written in an imperative While
language, with locking primitives for mediating access to shared memory,
into a RISC-style assembly language. It does so by compiling each thread
individually, and in doing so preserves a formal security property that remains compositional between threads.
Furthermore, our compiler demonstrates a way of formalising and proving when it is safe for a compiler to perform optimisations in the presence of concurrency.
To ensure that the contents of shared memory locations are preserved under compilation despite potential interference from other threads, our compiler tracks which shared memory locations are \emph{stable} (free from any such interference).
It then makes use of this tracking to avoid redundant loads from stable shared variables safely, that would otherwise be considered unsafe to omit.

All results are mechanised in Isabelle/HOL,\footnote{The $\WRCompiler$ totals $\sim$7k lines, and verification + compilation of the 2-thread CDDC model totals $\sim$1.6k lines of Isabelle proof script, excluding whitespace and comments. See ``Supplement Material''.}
and in \autoref{sec:exec-instantiation} we explain how, in order to validate our theory, we instantiated it so that we could execute our compiler in Isabelle.
This enabled us to execute it over a While language model of the Cross Domain Desktop Compositor \cite{Beaumont_MM_16} (CDDC), a concurrent program that enforces information flow control over value-dependently classified input.
To our knowledge this is the first proof of information flow security for an assembly-level model of a non-trivial concurrent program, demonstrating the power of verified secure compilation for deriving security properties of compiled code.

\section{Background and example}\label{sec-background}

We begin by introducing with an illustrative example (\autoref{fig:example-worker}) the challenges of verifying \emph{value-dependent information-flow security} in the presence of \emph{shared-variable concurrency}.

Consider the task of verifying a multithreaded system that manages the user interface (UI) for a \emph{dual-personality smartphone}, a phone that provides clearly distinguished user contexts (\emph{personalities}), typically for work versus leisure.
Specifically, our task is to verify that it does not leak \emph{sensitive} information intended only for one of those personalities, which we classify $\High$ (\autoref{fig:example-worker-dma-high}), to locations belonging to the other, which we classify $\Low$ (\autoref{fig:example-worker-dma-low}).

Here and generally, our \emph{attacker model} is an entity that can read from the system's \emph{untrusted sinks}: some subset of permanently $\Low$-classified locations not subject to synchronisation.
In our example, this may include
WLAN device registers in a hostile environment.

The smartphone's UI system consists of a number of threads running concurrently with a shared address space, and we aim to verify that as a whole it satisfies the security requirement.
But to avoid a state space explosion that is exponential in the number of threads, we must do this \emph{compositionally}: one thread at a time, then combining the results of these analyses.

We focus on a particular worker thread (\autoref{fig:example-worker-code}), the one responsible for sending touchscreen input from the \var{source} variable to its intended destination.

The first challenge is that the destination depends on which personality the phone is currently providing, which is indicated by the value of \var{domain}.
This is reflected by the classification of \var{source} being dependent on the value of \var{domain}: \var{source} is classified $\Low$ exactly when $\var{domain} = \const{LOW}$ (where $\const{LOW}$ is a designated constant), and is classified $\High$ otherwise.
Due to this dependency, \var{domain} is known as a \emph{control variable} of \var{source}.

The second challenge is the worker thread runs in a shared address space that might be accessed or modified by other threads, for various purposes.
One of these threads may be responsible for maintaining that $\var{domain} = \const{LOW}$ exactly when the phone indicates it is providing the $\Low$ personality (\autoref{fig:example-worker-dma-low}), so the user knows not to type in anything sensitive.
Another thread may be responsible for assigning $\Assign{\var{suspended}}{\const{TRUE}}$ when the user turns the phone's screen off, to make the worker stop processing touchscreen input.
We may then wish for \var{workspace} to be usable by some other thread---e.g.~processing input from a fingerprint scanner---in such a way that it can assume \var{workspace} no longer contains any sensitive values.

When we analyse one thread like this worker in terms of our compositional security property (\autoref{sec:cvdni-security}), all of the other threads in the system are trusted to do two things:
\begin{enumerate}
    \item They follow a \emph{synchronisation scheme}:
        here, if read- or write-access to a certain variable is governed by a lock, they must hold it in order to access the variable in that manner.
    \item They themselves do not leak values from $\High$-classified locations
        (we refer to such values themselves as $\High$)
        to $\Low$-classified locations that are read-accessible to other threads.
        Note we are proving that the thread we are analysing can be trusted in the same way.
\end{enumerate}
Even under these assumptions, the concurrency gives rise to some tricky considerations.

Firstly, it is important that no thread in the system (including the thread under analysis) modifies any control variables carelessly.
For example, writing $\Assign{\var{domain}}{\const{LOW}}$ immediately after the worker reads a $\High$ value from \var{source}, will cause it to leak to \var{low\_sink}.
To prevent this, the worker uses \var{source\_lock}, granting it \emph{exclusive write-access} to \var{source} and \var{domain}.

Furthermore as noted above, we may want to ensure that a \emph{non-attacker-observable} location is nevertheless cleared of any sensitive values before being used by another thread.
In our example, we classify \var{workspace} $\Low$ for the analysis to enforce this when the worker is suspended, but as the worker sometimes uses it to process $\High$ values, it is important to know \var{workspace} is accessible only to the worker during that time.
To ensure this, the worker uses \var{workspace\_lock}, granting it \emph{exclusive read- and write-access} to \var{workspace}.
It is then responsible for clearing it of any $\High$ values by the time it releases exclusive read-access.

\begin{figure}
    \begin{subfigure}{0.5\textwidth}
{\footnotesize$
\While{\const{TRUE}}{
    \Seq{\myindent{1}\LockAcq{\var{workspace\_lock}}}{
    \Seq{\Whilei{1}{!\var{suspended}}{
    \Seq{\myindent{2}\LockAcq{\var{source\_lock}}}{
    \Seq{\myindent{2}\Assign{\var{workspace}}{\var{source}}}{
        \myindent{2}\text{/* \ldots\ operations on \var{workspace} \ldots */}\\
    \Seq{\ITEi{2}{\var{domain} = \const{LOW}}{
            \myindent{3}\Assign{\var{low\_sink}}{\var{workspace}}
        }{
            \Seq{\myindent{3}\Assign{\var{high\_sink}}{\var{workspace}}}{
            \myindent{3}\Assign{\var{workspace}}{\const{0}}}
        }}{
        \myindent{2}\LockRel{\var{source\_lock}}}}}
    }}{
    \myindent{1}\Seq{\LockRel{\var{workspace\_lock}}}{
    \myindent{1}\Whileg{\var{suspended}}{\Skip}}}}
}
$}
        \caption{Input processing worker thread program}
        \label{fig:example-worker-code}
    \end{subfigure}
    \begin{subfigure}{0.46\textwidth}
        \centering
        \includegraphics[height=0.1\textheight]{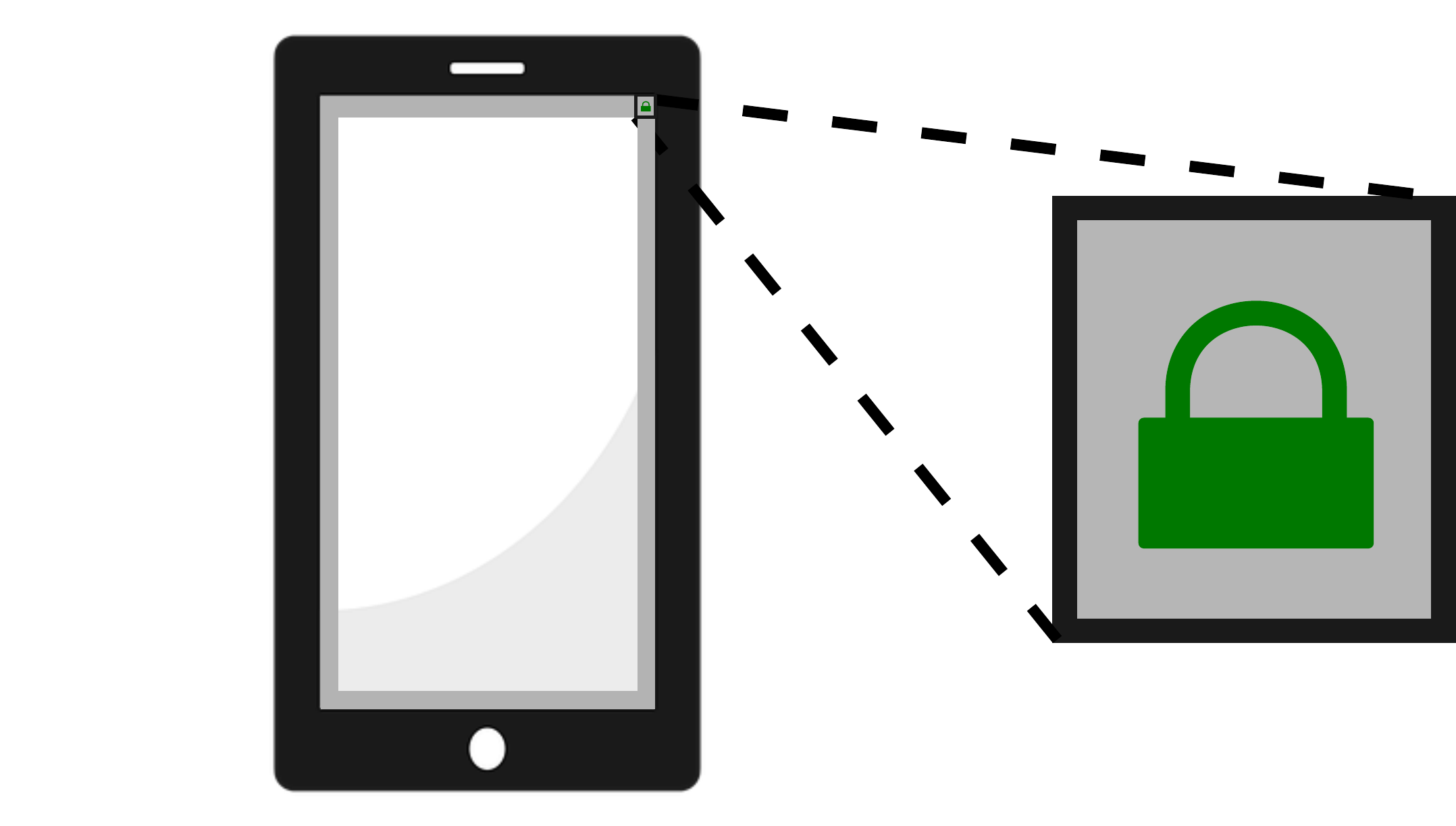}
        \caption{The phone providing the $\High$ personality:
                 $\var{domain} \neq \const{LOW}$, and
                 $\var{source}$ is classified $\High$
                 to reflect that the user might type in secrets.}
        \label{fig:example-worker-dma-high}

        \vspace{0.5\baselineskip}

        \includegraphics[height=0.1\textheight]{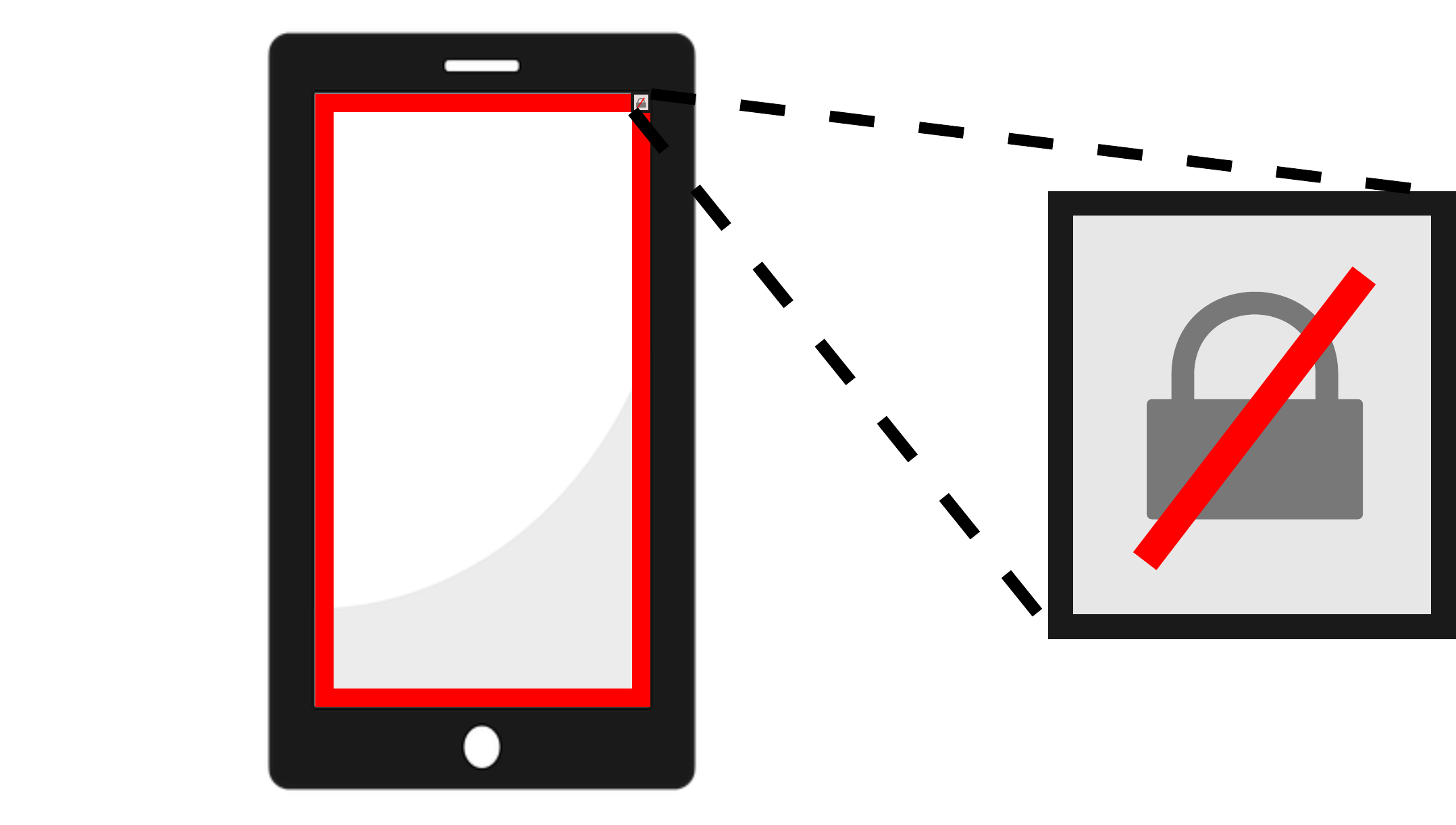}
        \caption{The phone displaying visual indicators that it is providing the $\Low$ personality:
                 $\var{domain} = \const{LOW}$, and
                 $\var{source}$ is classified $\Low$
                 to reflect that we trust the user not to type in secrets.}
        \label{fig:example-worker-dma-low}
    \end{subfigure}
    \caption{Example: Touchscreen input processing for a dual-personality smartphone}
    \label{fig:example-worker}
\end{figure}

\subsection{Concurrent value-dependent noninterference (CVDNI)}\label{sec:cvdni-security}

Having illustrated the challenges with an example, we now focus on the formalisation of our information-flow security property CVDNI, which we target with our per-thread analysis, and which our compiler preserves.
It is defined in terms of two main elements:
\begin{enumerate}
    \item a binary \emph{strong low-bisimulation (modulo modes)} relation $\mathcal{B}$ between program configurations, that establishes the required information-flow security property.
        Like Goguen \& Meseguer-style noninterference \cite{Goguen_Meseguer_82}, any states it relates must agree on their ``low'' portions, and it demands that lock-step execution preserve that correspondence.
        This section will explain how it is specialised further for shared-variable concurrency.
    \item a \emph{classification} function $\dmaFunc$ that determines the ``low'' portion of a program configuration, thus affecting $\mathcal{B}$'s requirements.
        Unlike \cite{Goguen_Meseguer_82} however, $\dmaFunc$ here can depend on values in the program configuration itself, thus expressing dynamic and not just static classifications.
\end{enumerate}

We now present definitions from Section III-2b of our previous work \cite{Murray_SPR_16} simplified as noted.
The theory is parameterised over the type of values~$\Val$, a finite set of shared variables~$\Var$, and a \emph{deterministic evaluation step semantics} $\EvalStep$ between \emph{local configurations} (of a thread in a concurrent program) each denoted by a triple $\LocalConf{\tps}{\mds}{\mem}$:
\begin{itemize}
    \item $\tps$ is the \emph{thread-private state}, which is permanently inaccessible to the attacker and the other threads.
        Note that due to this inaccessibility, we allow the user of the theory to parameterise the type of $\tps$, and do not impose any particular structure.
    \item $\mds \oftype \Mode \Rightarrow \Var\ \set$
      is the \emph{(access) mode state}, which is ghost state associating each
      $\Mode = \{\mathbf{AsmNoW},\mathbf{AsmNoRW},\mathbf{GuarNoW},\mathbf{GuarNoRW}\}$
      with a set of shared variables.
      Intuitively, it identifies the set of variables for which the thread currently possesses (or respects) a kind of exclusivity of access granted (or obligated) by a synchronisation scheme.
      This facilitates compositional, assume-guarantee~\cite{Jones:phd} style reasoning.
      For example, when our worker thread holds \var{source\_lock}, it \emph{assumes no other threads write} to \var{source} or its control variable ($\{\var{source}, \var{domain}\} \subseteq \mds\ \mathbf{AsmNoW}$), otherwise it \emph{guarantees it does not write} to them ($\mathbf{GuarNoW}$).
      Similarly, holding \var{workspace\_lock} it assumes no other threads \emph{read or write} to \var{workspace} ($\var{workspace} \in \mds\ \mathbf{AsmNoRW}$), and at all other times it makes the corresponding guarantee ($\mathbf{GuarNoRW}$).
    \item $\mem \oftype \Mem$ is \emph{shared memory} considered potentially accessible to the attacker and other threads.
        In order to make what is accessible amenable to analysis, we impose the structure $\Mem = \Var \Rightarrow \Val$, a total map from shared variable names to their values.
\end{itemize}
The theory is then further parameterised by the value-dependent classification function $\dmaFunc \oftype \Mem \Rightarrow \Var \Rightarrow \{\High, \Low\}$, and a function $\Cvars \oftype \Var \Rightarrow \Var\ \set$ that returns all the control variables of a given variable.
In our worker thread example, $\dmaApp{\mem}{x}$ gives:
\begin{itemize}
    \item $\High$ when $x$ is \var{high\_sink}, meaning \var{high\_sink} is classified $\High$ at all times.
    \item when $x$ is \var{source}: $\Low$ if $\mem\ \var{domain}=\const{LOW}$, and $\High$ otherwise.
    \item $\Low$ for all other variables $x$, meaning they are classified $\Low$ at all times.
\end{itemize}
The set $\Cset = \{y\ |\ \exists x.\ y \in \Cvars\ x\}$ is then defined to contain all control variables in the system.
Thus in our worker thread example, $\Cvars\ \var{source} = \{\var{domain}\}$ and $\Cset = \{\var{domain}\}$.

To support compositionality for concurrent programs, the ``low'' portion demanded to be equal by the analysis is tightened up to be \emph{modulo modes} -- it includes non-control variables only if they are assumed to be \emph{readable} by other threads according to the mode state:
$\Readable\ \mds\ x \equiv x \notin \mds\ \mathbf{AsmNoRW}$.
Thus intuitively, the user of the theory should model permanent untrusted output sinks of the whole concurrent program, as variables for which $\dmaFunc$ \emph{always returns $\Low$}, ungoverned by any synchronisation scheme that the attacker cannot be trusted to follow.
(In our example, \var{low\_sink} is untrusted permanently in this way, but \var{workspace} is untrusted only when unlocked.)
The notion of observational indistinguishability used for the noninterference property is then defined over memories as follows.

\begin{definition} [Low-equivalent memories modulo modes] \label{def:low_mds_eq-mem}
\begin{align*}
    & \LowMdsEq{\mds}{\mem_1}{\mem_2}\ \equiv \\
    & \forall x.\ x \in \Cset\ \lor\ \dmaApp{\mem_{1}}{x} = \Low\ \land\ \Readable\ \mds\ x\ \longrightarrow \
      mem_1\ x = \mem_2\ x
\end{align*}
\end{definition}
For this paper, we will use notation $\lc_1 \LCLowMdsEqOp \lc_2$ to lift $\LowMdsEqOp{\mds}$ to local program configurations, asserting also that $\lc_1$ and $\lc_2$ are \emph{modes-equal} (have the same mode state).
Additionally, we will use notation $\LCSameMds{\lc_1}{\lc_2}$ to denote (alone) that $\lc_1$ and $\lc_2$ are modes-equal.

The per-thread compositional security property $\ComSecure$ asserts the existence of a witness relation $\mathcal{B}$ for every possible observationally equivalent pair of starting configurations:

\begin{definition} [Per-thread compositional CVDNI property] \label{def:com-secure}
\begin{align*}
    & \ComSecure\ (\tps, \mds)\ \equiv \ 
      \forall mem_1\ \mem_2.\ \LowMdsEq{\mds}{\mem_1}{\mem_2} \longrightarrow \\
    & \qquad (\exists \mathcal{B}.\ 
              \StrongLowBisimMM\ \mathcal{B}\ \land \
              (\LocalConf{\tps}{\mds}{\mem_1},
               \LocalConf{\tps}{\mds}{\mem_2}) \in \mathcal{B})
\end{align*}
\end{definition}
where all such witness relations $\mathcal{B}$ must be a \emph{strong low-bisimulation (modulo modes)}:
\begin{align*}
    & \StrongLowBisimMM\ \mathcal{B}\ \equiv \ 
      \CgConsistent\ \mathcal{B}\ \land \ 
      \Sym\ \mathcal{B}\ \land \\
    & (\forall \lc_1\ \lc_2.\ (\lc_1, \lc_2) \in \mathcal{B}\ \land \ 
      \LCSameMds{\lc_1}{\lc_2} \longrightarrow \ 
      \LCLowMdsEq{\lc_1}{\lc_2}\ \land \\
    & \qquad (\forall \lc_1'.\ \lc_1 \EvalStep \lc_1' \longrightarrow \ 
             (\exists \lc_2'.\ \lc_2 \EvalStep \lc_2'\ \land \ 
              \LCSameMds{\lc_1'}{\lc_2'}\ \land \ 
              (\lc_1', \lc_2') \in \mathcal{B})))
\end{align*}

That is, $\mathcal{B}$ must maintain observational indistinguishability by requiring that all configuration pairs it relates that have the same mode state, are low-equivalent modulo modes.

Furthermore, it must be a \emph{bisimulation} by being symmetric and \emph{progressing to itself}: any step taken by one of the configurations must be able to be matched by a step taken by the configuration related to it, such that the destinations remain related by $\mathcal{B}$ (and modes-equal).

Finally---and the most crucial element ensuring the property's compositionality for concurrent programs---is the condition that $\mathcal{B}$ must be $\CgConsistent$: \emph{closed under globally consistent changes} made to memory by other threads,
which is to say, changes that preserve low-equivalence and are permitted by the current mode state $\mds$.
Specifically, the environment (of other threads) is permitted to change either of variable~$x$'s value or its classification only when~$x$ is \emph{writable}:
$\Writable\ \mds\ x \equiv x \notin \mds\ \mathbf{AsmNoW}\ \land\ x \notin \mds\ \mathbf{AsmNoRW}$.
\begin{definition} [Closedness under globally consistent changes] \label{def:cg-consistent}
\begin{align*}
    & \CgConsistent\ \mathcal{B}\ \equiv \ 
      \forall \tps_1\ \mem_1\ \tps_2\ \mem_2\ \mds. \\
    & \qquad (\LocalConf{\tps_1}{\mds}{\mem_1},
              \LocalConf{\tps_2}{\mds}{\mem_2}) \in \mathcal{B}\ \longrightarrow \\
    & \qquad (\forall \mem_1'\ \mem_2'.\ 
              (\forall x.\ (\mem_1\ x \ne \mem_1'\ x\ \lor \ 
                            \mem_2\ x \ne \mem_2'\ x\ \lor \\
    & \qquad \ \ \dmaApp{\mem_1}{x} \ne \dmaApp{\mem_1'}{x}) \ 
                 \longrightarrow\ \Writable\ \mds\ x)\ \land \ 
                 \LowMdsEq{\mds}{\mem_1'}{\mem_2'}\ \longrightarrow \\
    & \qquad (\LocalConf{\tps_1}{\mds}{\mem_1'},
              \LocalConf{\tps_2}{\mds}{\mem_2'}) \in \mathcal{B})
\end{align*}
\end{definition}

Theorem 3.1 of our prior work \cite{Murray_SPR_16} then gives us that the parallel composition
of $\ComSecure$ programs is itself a program that enforces a system-wide value-dependent noninterference property ($\SysSecure$, for whose details we refer the reader to Section III-2(a) of~\cite{Murray_SPR_16}).

\begin{figure}
    \begin{subfigure}[t]{0.33\textwidth}
        \begin{tikzpicture}[overlay]
            \node[rounded corners,anchor=west,fill=gray!55,minimum width=1.2cm,minimum height=\baselineskip-0.05cm] (RrelAif) at (-0.25em,0.1cm) {};
            \node[rounded corners,anchor=west,fill=gray!20,minimum width=1.1cm,minimum height=\baselineskip-0.05cm] (RrelAt) at (1.25em,0.1cm-\baselineskip) {};
            \node[rounded corners,anchor=west,fill=gray!90,minimum width=1.6cm,minimum height=\baselineskip-0.05cm] (RrelAe) at (1.25em,0.1cm-3\baselineskip) {};
        \end{tikzpicture}%
        {\footnotesize$
        \ITE{\var{h} \neq \const{0}}{
            \myindent{1}\Assign{\var{x}}{\var{y}}
        }{
            \myindent{1}\Assign{\var{x}}{\var{y} + \var{z}}
        }$}
        \caption{Abstract if-conditional. \\
        Relation $\mathcal{R}$ pairs configurations of this program
        with configurations of the program
        in \autoref{fig:refinement-example-conc}
        that are of the same-shaded region.}
        \label{fig:refinement-example-abs}
    \end{subfigure}
    \hfill
    \newcommand{\ITEthen}[2]{\kw{if}\ #1\ \kw{then}\\#2}
    \newcommand{\ITEelse}[1]{\kw{else}\\#1\\\kw{fi}}
    \begin{subfigure}[t]{0.61\textwidth}
    \begin{subfigure}[t]{0.45\textwidth}
        \begin{tikzpicture}[overlay]
            \node[rounded corners,anchor=west,fill=gray!55,minimum width=1.6cm,minimum height=2\baselineskip-0.05cm] (RrelCif) at (-0.25em,0.1cm-0.5\baselineskip) {};
            \node[rounded corners,anchor=west,fill=gray!20,minimum width=1.6cm,minimum height=4\baselineskip-0.05cm] (RrelCt) at (1.5em,0.1cm-3.5\baselineskip) {};
            \node[rounded corners,anchor=west,fill=gray!90,minimum width=3cm,minimum height=4\baselineskip-0.05cm] (RrelCe) at (14.9em,0.1cm-3.5\baselineskip) {};
        \end{tikzpicture}%
        {\footnotesize$
        \Seq{\Assign{\var{reg3}}{h}}{
        \vspace{-0.33\baselineskip}
        \ITEthen{\var{reg3} \neq \const{0}}{
            \myindent{1}\tikz[overlay, remember picture]\node[anchor=west](Irel1t){\Skip;} ; \\
            \myindent{1}\tikz[overlay, remember picture]\node[anchor=west](Irel2t){\Skip;} ; \\
            \myindent{1}\tikz[overlay, remember picture]\node[anchor=west](Irel3t){\Assign{\var{reg0}}{\var{y}};} ; \\
            \myindent{1}\tikz[overlay, remember picture]\node[anchor=west](Irel4t){\Assign{\var{x}}{\var{reg0}}};
        }} \\ \ldots$}
    \end{subfigure}
    \hfill
    \begin{subfigure}[t]{0.45\textwidth}
        {\footnotesize$ \ldots \\
        \vspace{-0.33\baselineskip}
        \ITEelse{
            \myindent{1}\tikz[overlay, remember picture]\node[anchor=west](Irel1e){\Assign{\var{reg1}}{\var{y}};} ; \\
            \myindent{1}\tikz[overlay, remember picture]\node[anchor=west](Irel2e){\Assign{\var{reg2}}{\var{z}};} ; \\
            \myindent{1}\tikz[overlay, remember picture]\node[anchor=west](Irel3e){\Assign{\var{reg0}}{\var{reg1} + \var{reg2}};} ; \\
            \myindent{1}\tikz[overlay, remember picture]\node[anchor=west](Irel4e){\Assign{\var{x}}{\var{reg0}}};
        }$}
        \hfill
    \end{subfigure}
    \caption{Concrete if-conditional.
        Relation $\mathcal{I}$ pairs configurations of this program as shown by the dashed lines.}
        \label{fig:refinement-example-conc}
        \begin{tikzpicture}[overlay, remember picture]
            \draw[dashed] (Irel1t)--(Irel1e);
            \draw[dashed] (Irel2t)--(Irel2e);
            \draw[dashed] (Irel3t)--(Irel3e);
            \draw[dashed] (Irel4t)--(Irel4e);
        \end{tikzpicture}
    \end{subfigure}
    \caption{Excerpts from refinement example \cite{Murray_SPR_16} that was used to compare proof effort (\autoref{sec-h-branch-example}).}
    \label{fig:refinement-example}
\end{figure}

\subsection{CVDNI-preserving refinement}\label{sec:cvdni-refinement}

Having described the formal security property that we wish to be preserved under refinement (and compilation), we now define formally a suitable notion of secure refinement that preserves it.
The proof of CVDNI-preserving refinement for a thread of a concurrent program relies on two binary relations (illustrated by \autoref{fig:refinement-example}) to be nominated by the user of the theory:
\begin{enumerate}
    \item a \emph{refinement relation} $\mathcal{R}$ relating local configurations of the abstract program to local configurations of the concrete program:
        abstract must simulate concrete, in a sense typical of much other work on program refinement, including compiler verification efforts.
    \item a \emph{concrete coupling invariant} $\mathcal{I}$ that allows us to use $\mathcal{B}$ and $\mathcal{R}$ to build a new strong low-bisimulation (modulo modes) for the concrete program, by discarding unreachable pairs of local configurations \emph{after the refinement}.
        It thereby witnesses that any changes a refinement (or compiler) makes to execution time, do not introduce any timing channels.
\end{enumerate}

The essence of the proof technique is to require that a number of conditions---analogous to those for $\StrongLowBisimMM$---be imposed on the nominated $\mathcal{R}$ and $\mathcal{I}$ in relation to a given witness relation $\mathcal{B}$ establishing CVDNI for the abstract program.
The definitions to follow are adapted from Murray et al.~\cite{Murray_SPR_16} Section V.
For better readability, we present a simplified version in which no new shared variables are added by the refinement.
Consequently we introduce the notation $\LCSameMdsMemOp$ to denote that two local configurations have equal mode state and memory, regardless of whether relating configurations of the same or differing languages.

Regarding the maintenance of modes- and observational-equivalence across the relation, the restrictions on refinement are tighter than those that applied to $\StrongLowBisimMM$.
The refinement relation $\mathcal{R}$ is required to preserve the shared memory in its entirety:
\begin{definition} [Preservation of modes and memory] \label{def:preserves_modes_mem}
    \[
      \PreservesMM\ \mathcal{R}\ \equiv \ 
      \forall \lc_{A}\ \lc_{C}. \ 
      (\lc_{A}, \lc_{C}) \in \mathcal{R} \longrightarrow \ 
      \LCSameMdsMem{\lc_A}{\lc_C}
    \]
\end{definition}

Regarding the closedness under changes by other threads that ensures compositionality for concurrency, on $\mathcal{I}$ we again impose
$\CgConsistent$ (\autoref{def:cg-consistent}) from \autoref{sec:cvdni-security}.
However in the case of $\mathcal{R}$, we instead impose
$\ClosedOthers$, a simplification of $\CgConsistent$ considering only environmental actions that affect the memories on both sides of the relation identically.
Furthermore it ensures equality of \emph{all} shared variables, not just those judged observable:
\begin{definition} [Closedness of refinements under changes by others] \label{def:closed-others}
\begin{align*}
    & \ClosedOthers\ \mathcal{R}\ \equiv \ 
      \forall \tps_A\ \tps_C\ \mds\ \mem\ \mem'. \\
    & \qquad (\LocalConfAbs{\tps_A}{\mds}{\mem},
              \LocalConfConc{\tps_C}{\mds}{\mem}) \in \mathcal{R})\ \land \\
    & \qquad (\forall x.\ (\mem\ x \ne mem'\ x\ \lor \ 
                           \dmaApp{\mem}{x} \ne \dmaApp{\mem'}{x}) \ 
                        \longrightarrow \ 
                        \Writable\ \mds\ x)\ \longrightarrow \\
    & \qquad (\LocalConfAbs{\tps_A}{\mds}{\mem'},
              \LocalConfConc{\tps_C}{\mds}{\mem'}) \in \mathcal{R})
\end{align*}
\end{definition}

The final major requirement for CVDNI-preservation is then to prove $\mathcal{R}$ and $\mathcal{I}$ closed simultaneously under the pairwise executions of the concrete and abstract programs, using the aforementioned cube-shaped diagram ($\CouplingInvPres$, \autoref{fig:coupling-inv-pres}) whose edges are pairs in $\mathcal{B}$, $\mathcal{R}$, and $\mathcal{I}$.
All that then remains is for the nominated concrete coupling invariant $\mathcal{I}$ to be symmetric, and the predicate $\SecureRefinement$ puts together all the requirements:
\begin{definition} [Requirements for secure refinement of the per-thread CVDNI property]\label{def:secure-refinement}
\begin{align*}
    \SecureRefinement\ \mathcal{B}\ \mathcal{R}\ \mathcal{I}\ \equiv\ 
  & \PreservesMM\ \mathcal{R}\ \land \ 
    \ClosedOthers\ \mathcal{R}\ \land \\
  & \CgConsistent\ \mathcal{I}\ \land \ 
    \Sym\ \mathcal{I}\ \land \ 
    \CouplingInvPres\ \mathcal{B}\ \mathcal{R}\ \mathcal{I}
\end{align*}
\end{definition}
Theorem 5.1 of our prior work \cite{Murray_SPR_16} gives us that under the aforementioned conditions,
\begin{align*}
\BCofApplied \equiv
\{(\lc_{1C}, \lc_{2C})\ |\ 
    & \exists \lc_{1A}\ \lc_{2A}.\ 
      (\lc_{1A}, \lc_{1C}) \in \mathcal{R}\ \land \ 
      (\lc_{2A}, \lc_{2C}) \in \mathcal{R}\ \land \\
    & (\lc_{1A}, \lc_{2A}) \in \mathcal{B}\ \land \ 
                  \LCLowMdsEq{\lc_{1C}}{\lc_{2C}}\ \land \ 
                  (\lc_{1C}, \lc_{2C}) \in \mathcal{I}\}
\end{align*}
is a witness $\StrongLowBisimMM$ for the concrete program:
\[
    \StrongLowBisimMM\ \mathcal{B}\ \land\ \SecureRefinement\ \mathcal{B}\ \mathcal{R}\ \mathcal{I} \implies \StrongLowBisimMM\ (\BCofApplied)
\]

\section{Decomposition principle for CVDNI-preserving refinement}\label{sec-simpler}

Having presented our previous work \cite{Murray_SPR_16}'s formalisation of our security property CVDNI and its preservation by refinement, we now present our first contribution:
an alternative way of proving $\SecureRefinement$ (\autoref{def:secure-refinement})
that does away with the use of the cube-shaped, two-sided refinement obligation $\CouplingInvPres\ \mathcal{B}\ \mathcal{R}\ \mathcal{I}$ (depicted by \autoref{fig:coupling-inv-pres}), by decomposing its concerns into (1) proving $\mathcal{R}$ closed under the pairwise executions of the concrete and abstract programs alone using a square-shaped diagram (depicted by \autoref{fig:decomp-R}, which is akin to ordinary semantics-preserving refinement), and (2) a number of smaller and more separable obligations gathered together under the side-condition predicate $\SimplerRefinementSafe$.
\begin{definition} [Decomposed requirements for CVDNI-preserving secure refinement]\label{def:secure-refinement-simpler}
\begin{align*}
    & \SecureRefineSimpler\ \mathcal{B}\ \mathcal{R}\ \mathcal{I}\ \abssteps\ \equiv \\
    & \PreservesMM\ \mathcal{R}\ \land \ 
      \ClosedOthers\ \mathcal{R}\ \land \ 
      \CgConsistent\ \mathcal{I}\ \land \ 
      \Sym\ \mathcal{I}\ \land \\
    & \SimplerRefinementSafe\ \mathcal{B}\ \mathcal{R}\ \mathcal{I}\ \abssteps\ \land \ 
      (\forall \lc_A\ \lc_C.\ (\lc_A, \lc_C) \in \mathcal{R} \longrightarrow \\
    & \qquad (\forall \lc_C'.\ \lc_C \EvalStepConc \lc_C' \longrightarrow \ 
             (\exists \lc_A'.\ \lc_A \NEvalStepAbs{(\abssteps\ \lc_A\ \lc_C)} \lc_A'\ \land \ (\lc_A', \lc_C') \in \mathcal{R})))
\end{align*}
\end{definition}
The decomposition requires the provision of a new refinement parameter that we will call $\abssteps$ or the \emph{pacing function}, whose role is to dictate the pace of the refinement by returning the number of abstract steps that ought to be taken for a single concrete step, for a given abstract-concrete local configuration pair related by $\mathcal{R}$.
The side-conditions on all of the refinement parameters (depicted by Figures \ref{fig:decomp-abs-steps}, \ref{fig:decomp-I}) are then defined as follows:

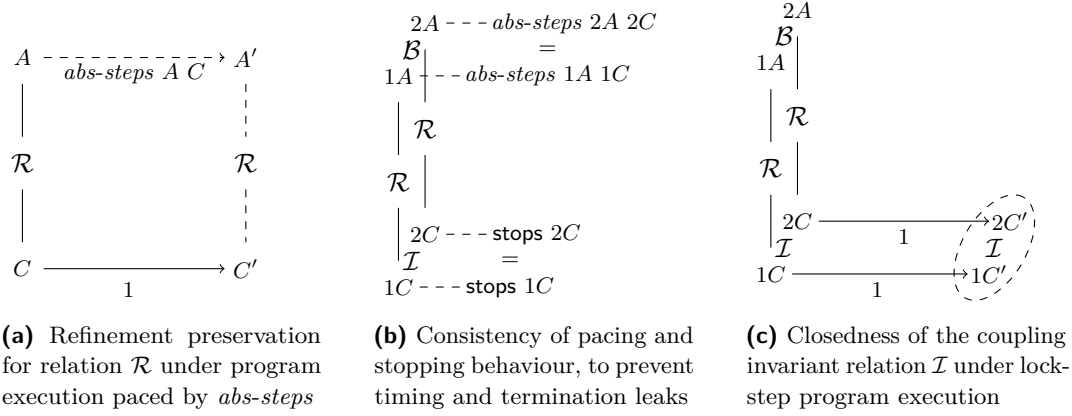
\begin{figure}
    \centering
    \begin{subfigure}[b]{0.3\textwidth}
        \begin{tikzpicture}[scale=0.7]
            \node at (0,0) {\footnotesize{$A$}};
            \draw [->, dashed] (0.4,0) -- (3.8,0);;
            \node at (2.1,-0.3) {\footnotesize{$\abssteps\ A\ C$}};
            \node at (4.2,0) {\footnotesize{$A'$}};

            \draw (0,-0.5) -- (0,-1.5);
            \node at (0,-2) {$\mathcal{R}$};
            \draw (0,-2.5) -- (0,-3.5);

            \draw [dashed] (4.2,-0.5) -- (4.2,-1.5);
            \node at (4.2,-2) {$\mathcal{R}$};
            \draw [dashed] (4.2,-2.5) -- (4.2,-3.5);

            \node at (0,-4) {\footnotesize{$C$}};
            \draw [->] (0.4,-4) -- (3.8,-4);;
            \node at (2,-4.4) {\footnotesize$1$};
            \node at (4.2,-4) {\footnotesize{$C'$}};
        \end{tikzpicture}
        \caption{Refinement preservation for relation $\mathcal{R}$ under program execution paced by $\abssteps$}\label{fig:decomp-R}
    \end{subfigure}\hfill
    \begin{subfigure}[b]{0.3\textwidth}
        \begin{tikzpicture}[scale=0.7]
            \node at (0,0) {\footnotesize{$1A$}};
            \draw [-, dashed] (0.4,0) -- (1.2,0);;
            \node at (2.8,0) {\footnotesize{$\abssteps\ 1A\ 1C$}};
            \node at (2.8,0.5) {\footnotesize{=}};

            \node at (0.5,1) {\footnotesize{$2A$}};
            \draw [-, dashed] (0.9,1) -- (1.7,1);;
            \node at (3.3,1) {\footnotesize{$\abssteps\ 2A\ 2C$}};

            \node at (0.25,0.5) {$\mathcal{B}$};

            \node at (0.25,-3.5) {$\mathcal{I}$};

            \draw (0,-0.5) -- (0,-1.5);
            \node at (0,-2) {$\mathcal{R}$};
            \draw (0,-2.5) -- (0,-3.5);

            \draw (0.5,0.5) -- (0.5,-0.5);
            \node at (0.5,-1) {$\mathcal{R}$};
            \draw (0.5,-1.5) -- (0.5,-2.5);

            \node at (0,-4) {\footnotesize{$1C$}};
            \draw [-, dashed] (0.4,-4) -- (1.2,-4);;
            \node at (2.1,-4) {\footnotesize{\textsf{stops} $1C$}};
            \node at (2.1,-3.5) {\footnotesize{=}};

            \node at (0.5,-3) {\footnotesize{$2C$}};
            \draw [-, dashed] (0.9,-3) -- (1.7,-3);;
            \node at (2.6,-3) {\footnotesize{\textsf{stops} $2C$}};
        \end{tikzpicture}
        \caption{Consistency of pacing and stopping behaviour, to prevent timing and termination leaks}\label{fig:decomp-abs-steps}
    \end{subfigure}\hfill
    \begin{subfigure}[b]{0.3\textwidth}
        \begin{tikzpicture}[scale=0.7]
            \node at (0,0) {\footnotesize{$1A$}};

            \node at (0.5,1) {\footnotesize{$2A$}};

            \node at (0.25,0.5) {$\mathcal{B}$};

            \node at (0.25,-3.5) {$\mathcal{I}$};
            \node at (4.2,-3.5) {$\mathcal{I}$};
            \draw [dashed] (4.2,-3.5) circle [x radius=1.1, y radius=0.6, rotate=60];;

            \draw (0,-0.5) -- (0,-1.5);
            \node at (0,-2) {$\mathcal{R}$};
            \draw (0,-2.5) -- (0,-3.5);

            \draw (0.5,0.5) -- (0.5,-0.5);
            \node at (0.5,-1) {$\mathcal{R}$};
            \draw (0.5,-1.5) -- (0.5,-2.5);

            \node at (0,-4) {\footnotesize{$1C$}};
            \draw [->] (0.4,-4) -- (3.7,-4);;
            \node at (2,-4.3) {\footnotesize$1$};
            \node at (4.1,-4) {\footnotesize{$1C'$}};

            \node at (0.5,-3) {\footnotesize{$2C$}};
            \draw [->] (0.9,-3) -- (4.1,-3);;
            \node at (2.5,-3.3) {\footnotesize$1$};
            \node at (4.5,-3) {\footnotesize{$2C'$}};
        \end{tikzpicture}
        \caption{Closedness of the coupling invariant relation $\mathcal{I}$ under lockstep program execution}\label{fig:decomp-I}
    \end{subfigure}
    \caption{Graphical depictions of refinement decomposition obligations\label{fig:decomp}}
\end{figure}

\begin{definition} [Side-conditions for CVDNI-preserving refinement decomposition] \label{def:simpler-refinement-safe}
\begin{align*}
    & \SimplerRefinementSafe\ \mathcal{B}\ \mathcal{R}\ \mathcal{I}\ \abssteps\ \equiv \ 
      \forall \lc_{1A}\ \lc_{2A}\ \lc_{1C}\ \lc_{2C}. \ 
      (\lc_{1A}, \lc_{2A}) \in \mathcal{B}\ \land \\
    & \LCSameMds{\lc_{1A}}{\lc_{2A}}\,\land
      (\lc_{1A}, \lc_{1C}) \in \mathcal{R}\,\land
      (\lc_{2A}, \lc_{2C}) \in \mathcal{R}\,\land
      (\lc_{1C}, \lc_{2C}) \in \mathcal{I}\land\,
      \LCSameMds{\lc_{1C}}{\lc_{2C}} \\
    & \longrightarrow\,\stops\ \lc_{1C} = \stops\ \lc_{2C}\ \land \ 
             \abssteps\ \lc_{1A}\ \lc_{1C} = \abssteps\ \lc_{2A}\ \lc_{2C}\ \land \\
    & \quad\ \  (\forall \lc_{1C}'\ \lc_{2C}'. \ 
             \lc_{1C} \EvalStepConc \lc_{1C}'\ \land \ 
             \lc_{2C} \EvalStepConc \lc_{2C}' \longrightarrow \ 
             (\lc_{1C}', \lc_{2C}') \in \mathcal{I}\,\land \,
             \LCSameMds{\lc_{1C}'}{\lc_{2C}'})
\end{align*}
\end{definition}
On the intuitive meaning of the side-conditions in \autoref{def:simpler-refinement-safe}:
\begin{itemize}
    \item $\stops\ \lc_{1C} = \stops\ \lc_{2C}$ ensures that the refinement has not introduced any termination leaks, by asserting \emph{consistent stopping behaviour} for $\mathcal{I}$-related concrete program configurations, which we know to be observationally indistinguishable.
    \item $\abssteps\ \lc_{1A}\ \lc_{1C} = \abssteps\ \lc_{2A}\ \lc_{2C}$ ensures that the refinement has not introduced any timing leaks, by asserting \emph{consistency of the pace of the refinement} for $\mathcal{R}$-related program configurations, which we again know to be observationally indistinguishable.
    \item The final $\forall$-quantified clause asserts $\mathcal{I}$'s suitability as a coupling invariant, in that it must remain \emph{closed under lockstep evaluation} of the concrete program configurations it relates.
        Furthermore it must \emph{maintain mode state equality} with each lockstep evaluation, which ensures that the refinement has not introduced any inconsistencies in the memory access assumptions and guarantees needed for the concurrent compositionality of the property.
\end{itemize}
Note the $\mathcal{B}$- and $\mathcal{R}$-edges in \autoref{fig:decomp-I} may capture useful
facts about a particular program verification technique and compiler, so their availability as assumptions is intended to reduce greatly the effort needed to specify a coupling invariant $\mathcal{I}$ and prove it satisfies the condition.

Assuming the fulfilment of all of the decomposed requirements, we obtain that they are a sound method for establishing secure refinement of the per-thread CVDNI property:
\begin{theorem} [Soundness of $\SecureRefineSimpler$] \label{thm:secure-refinement-simpler-sound}
\begin{align*}
\SecureRefineSimpler\ \mathcal{B}\ \mathcal{R}\ \mathcal{I}\ \abssteps\ \implies \SecureRefinement\ \mathcal{B}\ \mathcal{R}\ \mathcal{I}
\end{align*}
\end{theorem}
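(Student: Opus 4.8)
The plan is to unfold both $\SecureRefineSimpler$ (\autoref{def:secure-refinement-simpler}) and $\SecureRefinement$ (\autoref{def:secure-refinement}) and to observe that four of the five conjuncts of the goal --- $\PreservesMM\ \mathcal{R}$, $\ClosedOthers\ \mathcal{R}$, $\CgConsistent\ \mathcal{I}$, and $\Sym\ \mathcal{I}$ --- appear verbatim among the hypotheses, so they discharge immediately. Everything therefore reduces to deriving the cube-shaped obligation $\CouplingInvPres\ \mathcal{B}\ \mathcal{R}\ \mathcal{I}$ (\autoref{fig:coupling-inv-pres}) from the two remaining hypotheses: the square-shaped $\mathcal{R}$-refinement paced by $\abssteps$, and $\SimplerRefinementSafe\ \mathcal{B}\ \mathcal{R}\ \mathcal{I}\ \abssteps$ (\autoref{def:simpler-refinement-safe}).

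To build the cube, first fix $\lc_{1A}$, $\lc_{1C}$ with $(\lc_{1A},\lc_{1C})\in\mathcal{R}$ and a concrete step $\lc_{1C}\EvalStepConc\lc_{1C}'$. Instantiating the square-shaped obligation at this pair yields $\lc_{1A}'$ with $\lc_{1A}\NEvalStepAbs{(\abssteps\ \lc_{1A}\ \lc_{1C})}\lc_{1A}'$ and $(\lc_{1A}',\lc_{1C}')\in\mathcal{R}$; take $n := \abssteps\ \lc_{1A}\ \lc_{1C}$. This already supplies the ``front-face'' witnesses of the cube, so it remains to prove the inner implication: given $\lc_{2A}$, $\lc_{2C}$, $\lc_{2A}'$ with $(\lc_{1A},\lc_{2A})\in\mathcal{B}$, $\LCSameMds{\lc_{1A}}{\lc_{2A}}$, $(\lc_{2A},\lc_{2C})\in\mathcal{R}$, $(\lc_{1C},\lc_{2C})\in\mathcal{I}$, $\LCSameMds{\lc_{1C}}{\lc_{2C}}$, $\lc_{2A}\NEvalStepAbs{n}\lc_{2A}'$ and $\LCSameMds{\lc_{1A}'}{\lc_{2A}'}$, produce a matching concrete step $\lc_{2C}\EvalStepConc\lc_{2C}'$ with $\LCSameMds{\lc_{1C}'}{\lc_{2C}'}$, $(\lc_{2A}',\lc_{2C}')\in\mathcal{R}$ and $(\lc_{1C}',\lc_{2C}')\in\mathcal{I}$.

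Here I would feed exactly the available hypotheses $(\lc_{1A},\lc_{2A})\in\mathcal{B}$, $\LCSameMds{\lc_{1A}}{\lc_{2A}}$, $(\lc_{1A},\lc_{1C})\in\mathcal{R}$, $(\lc_{2A},\lc_{2C})\in\mathcal{R}$, $(\lc_{1C},\lc_{2C})\in\mathcal{I}$, $\LCSameMds{\lc_{1C}}{\lc_{2C}}$ into $\SimplerRefinementSafe$, obtaining three facts: $\stops\ \lc_{1C} = \stops\ \lc_{2C}$; $\abssteps\ \lc_{1A}\ \lc_{1C} = \abssteps\ \lc_{2A}\ \lc_{2C}$ (so $\abssteps\ \lc_{2A}\ \lc_{2C} = n$); and that any pair of concrete steps out of $\lc_{1C}$ and $\lc_{2C}$ lands back in $\mathcal{I}$ with equal mode states. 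Since $\lc_{1C}$ steps it is not stopped, so by the $\stops$-consistency neither is $\lc_{2C}$, whence by determinacy of $\EvalStepConc$ there is a (unique) $\lc_{2C}'$ with $\lc_{2C}\EvalStepConc\lc_{2C}'$. Instantiating the square-shaped obligation at $(\lc_{2A},\lc_{2C})$ and this step gives some $\lc_{2A}''$ with $\lc_{2A}\NEvalStepAbs{(\abssteps\ \lc_{2A}\ \lc_{2C})}\lc_{2A}''$, i.e.\ $\lc_{2A}\NEvalStepAbs{n}\lc_{2A}''$, and $(\lc_{2A}'',\lc_{2C}')\in\mathcal{R}$; by determinacy of $\EvalStep$ (hence of $\NEvalStepAbs{n}$) together with the premise $\lc_{2A}\NEvalStepAbs{n}\lc_{2A}'$ we get $\lc_{2A}''=\lc_{2A}'$, so $(\lc_{2A}',\lc_{2C}')\in\mathcal{R}$. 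Finally the third fact from $\SimplerRefinementSafe$, applied to $\lc_{1C}\EvalStepConc\lc_{1C}'$ and $\lc_{2C}\EvalStepConc\lc_{2C}'$, delivers $(\lc_{1C}',\lc_{2C}')\in\mathcal{I}$ and $\LCSameMds{\lc_{1C}'}{\lc_{2C}'}$, completing the cube.

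The main obstacle is the step in which the ``other'' concrete configuration $\lc_{2C}$ is forced to make progress and its induced abstract run is shown to coincide with the one named in the cube's premises: this is precisely where the ambient assumptions on $\EvalStep$ --- determinacy of single steps, and $\stops$ characterising exactly those configurations admitting no step --- must be invoked, and where the pacing consistency $\abssteps\ \lc_{1A}\ \lc_{1C} = \abssteps\ \lc_{2A}\ \lc_{2C}$ from $\SimplerRefinementSafe$ does the essential work of making both abstract runs have the same length $n$. Everything else (threading hypotheses, instantiating the two obligations, carrying the mode-equalities along) is routine bookkeeping; the only delicate point is that the $\lc_{2A}'$ named by the cube and the $\lc_{2A}''$ produced by re-running the square diagram must be identified, which relies on determinacy rather than on any property of $\mathcal{B}$ or $\mathcal{I}$.
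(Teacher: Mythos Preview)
Your proposal is correct and follows essentially the same route as the paper's own proof sketch: reduce to $\CouplingInvPres$, obtain the front face from the square-shaped $\mathcal{R}$-refinement, use $\stops$-consistency to force a step out of $\lc_{2C}$, use $\abssteps$-consistency plus determinism of the abstract semantics to identify the two $n$-step abstract runs, and close the bottom face with the $\mathcal{I}$-preservation clause of $\SimplerRefinementSafe$. Your write-up is in fact more explicit than the paper's sketch about where determinacy is invoked (both for the existence/uniqueness of $\lc_{2C}'$ and for identifying $\lc_{2A}''$ with $\lc_{2A}'$), which is exactly the subtle point the paper alludes to when it says it ``leans on the determinism of the abstract program's evaluation semantics.''
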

In the interests of brevity we relegate proof sketches for all results to
\ifPreprint
Appendices \ref{sec:elided-decomp} and \ref{sec:elided-compiler-section},
\else
the extended version of the paper,
\fi
and for fuller details we refer the reader to our Isabelle/HOL formalisation.

We now devote our attention to two instantiations of this new decomposition principle: (\autoref{sec-h-branch-example}) for a proof of CVDNI-preservation for the refinement of a program that branches on a secret, and (\autoref{sec-wr-compiler-proof}) for the proof of CVDNI-preservation by a compiler.

\section{Proof effort comparison}\label{sec-h-branch-example}

To demonstrate how the decomposition principle reduces proof complexity and effort, we returned to the example refinement discussed in Section V-E of our previous work \cite{Murray_SPR_16}, an excerpt of which is shown in \autoref{fig:refinement-example}.
The abstract program (9 imperative commands) branches on a sensitive value, and executes a single atomic expression assignment in each branch.
Its refinement (to 16 commands) models expansion of the expressions into multiple steps, resolving a timing disparity between the two branches by padding with $\Skip$.

We use proof size as a proxy for proof effort, since the former
is known to be strongly linearly correlated with the latter~\cite{Staples_JAMKK_14}.
Formalised in Isabelle/HOL as \texttt{EgHighBranchRevC.thy} \cite{Murray16-Refinement-AFP}, the proof line count for that theory stood at about 4.6K lines of definitions
and proof, of which approx.\ 3.6K line were proofs.
Adapting the proof instead to use the decomposition principle $\SecureRefineSimpler$ (\autoref{def:secure-refinement-simpler}),
the proof line count drops from 3.6K to approx.\ 2K, a 44\% reduction.
Regarding definition changes, the new proof makes <10 lines of adaptations to a coupling invariant and pacing function used by the old proof, and adds about 30 lines worth of new helper definitions, for use with the decomposition principle.
The rest of the theory and its external dependencies remain in common between the two versions.

As would be expected, the bulk of the deletions are from the full cube-shaped refinement diagram proof (\autoref{fig:coupling-inv-pres}) of $\SecureRefinement$ (\autoref{def:secure-refinement}) for the refinement relation.
The surviving parts of that proof just become the square-shaped refinement diagram proof (\autoref{fig:decomp-R}) of $\SecureRefineSimpler$ without much modification.
The deletions are replaced by newly added proofs of the three sub-obligations of $\SimplerRefinementSafe$ (\autoref{def:simpler-refinement-safe}).

\section{The {\mdseries\Covern} \WRCompiler}\label{sec-compiler}

Having presented our new decomposition principle for CVDNI-preserving refinement, we now turn to our compiler, whose most notable features for formal proof of secure refinement are:
\begin{enumerate}
    \item Its implementation tracks variable stability (\autoref{sec:comprec}) responsive to use of locking primitives, to know when accesses to shared variables are safe to optimise, and when register contents can be still be considered consistent with shared variable contents.
    \item Its verification uses a pacing function (\autoref{sec:abs-steps_wr}) and coupling invariant (\autoref{sec-wr-coupling-invariant}) as the decomposition demands, to ensure it does not introduce timing leaks.
\end{enumerate}
First, we describe its source and target languages, and parameters to the compilation.

\subsection{Source language}

The \Covern \WRCompiler---short for \emph{While-to-RISC compiler}---takes the simple imperative language with while-looping and lock-based synchronisation targeted by the \Covern program logic \cite{Murray_SE_18}, which we will refer to as \texttt{While}, consisting of the commands $cmd$:
\[
\begin{array}{r@{\ }l}
\expr \deftype & n\ |\ v\ |\ \expr\ \oplus \expr \\
\cmd \deftype & \Skip\ |\ \Seqg{\cmd}{\cmd}\ |\ \ITEg{exp}{\cmd}{\cmd}\ |\ \\
         & \Whileg{exp}{\cmd}\ |\ \Assign{v}{exp}\ |\ \\
         & \LockAcq{k}\ |\ \LockRel{k}
\end{array}
\]

The language is parameterised over a type of values~$\Val$, and binary operators $\oplus \oftype \Val \Rightarrow \Val \Rightarrow \Val$. Constants~$n \oftype \Val$;  $v \oftype \Var$ and $k \oftype \Lock$ are (resp.) shared program- and lock-variables.
The semantics of the locking primitives $\LockAcq{k}$ and $\LockRel{k}$ is informed by a locking discipline provided by the user of the theory as a parameter (see \autoref{sec:dma-lock-interp-reqs}).
We leave for future work adding support for pointers and arrays, which we believe will be straightforward because our assume-guarantee framework already provides the means to encode the memory footprint of a command in a way that depends on values in memory.

We assume that the underlying concurrent execution model (e.g.\ operating system, scheduler) for the \texttt{While} language prevents threads from seeing each others' current program location, and thus (as in previous work \cite{Murray_SPR_16,Mantel_SS_11}) the \texttt{While} program command $c \oftype \cmd$ being executed we model as thread-private state: $\LocalConfWhile{c}{\mds}{\mem}$.
In contrast, all program variables $v \oftype \Var$ and lock variables $k \oftype \Lock$ reside in the shared memory $\mem$.

\subsection{Target language}\label{sec:target-lang}

The \WRCompiler's target is a generic RISC-style assembly language like that of Tedesco et al.~\cite{Tedesco16} but with lock-based synchronisation primitives added, which we will refer to as \texttt{RISC}:
\[
\begin{array}{r@{\ }l}
I \deftype & [l :] B \\
B \deftype & \Load\ r\ v\ |\ \Store\ v\ r\ |\ \Jmp\ l\ |\ \Jz\ l\ r\ |\ \Nop \\
      & \MoveK\ r\ n\ |\ \MoveR\ r\ r\ |\ \Op\ \oplus\ r\ r \\
      & \RISCLockAcq\ k\ |\ \RISCLockRel\ k
\end{array}
\]

The language is parameterised over the same value type $Val$ and binary operators $\oplus$, shared program variables $v \oftype \Var$ and shared lock variables $k \oftype \Lock$ as the \texttt{While} language.
Presently, direct-addressing $\Load$ and $\Store$ instructions (referring to registers $r \oftype \Reg$) are adequate for \texttt{RISC} to implement all existing \texttt{While} features, and we expect adding indirect addressing to \texttt{RISC} to be as straightforward as adding pointer and array support to \texttt{While}.

\texttt{RISC} program texts $P$
are just lists of binary instructions $I$, each optionally associated with a label $l \oftype \Lab$.
We assume that the underlying concurrency model for the \texttt{RISC} language (e.g. OS, scheduler etc.) prevents one thread from reading the program code (instructions) of another,\footnote{As is usual for program analyses, we omit any explicit modelling of the microarchitectural state used by superscalar processors (like CPU caches, and state relied on by speculative and out-of-order execution, on whose behaviour attacks like Spectre \cite{Kocher2018spectre} and Meltdown \cite{Lipp2018meltdown} relied).
We argue however that our present assumptions are reasonable under two circumstances: when there is no such state (e.g. on microcontrollers like AVR \cite{Dewald17}),
or when such state is correctly \emph{partitioned} by the underlying hardware \cite{Zhang15} or the OS \cite{Ge_YCH_19} -- if the hardware allows it \cite{Ge_YH_18}!
In the latter case, our analysis assumes that microarchitectural state footprints are partitioned according to thread (for memory containing program text) and according to classification by $\dmaFunc$ (for shared memory), and furthermore that each value-dependently classified region is given a distinct partition that is flushed on reclassification.}
as well as another's registers (including the program counter).
Thus, we model the distinguished program counter register's value $\pc \oftype \nat$, program text $P$, and register bank $\regs \oftype \Reg \Rightarrow \Val$ as thread-private state:
$\LocalConfRISC{\pc}{P}{\regs}{\mds}{\mem}$.
Apart from this adaptation to our triple format, evaluation semantics follows that of the \texttt{RISC} target of \cite{Tedesco16}.


Finally, like Tedesco et al.~\cite{Tedesco16} we generalise over the (user-supplied) register allocation scheme, and assume there are enough registers to service the maximum depth of expressions in the source program.
\ifPreprint
(More details are available in Appendix \ref{sec:reg-alloc}.)
\fi
We leave for future work the modelling and analysis of a compiler phase that spills register contents to memory, in order to make this assumption unnecessary.

\subsection{Locking discipline}\label{sec:dma-lock-interp-reqs}

Like the \Covern logic~\cite{Murray_SE_18}, we assume that the \texttt{While} language program being compiled follows a certain locking discipline, about which the compiler has knowledge, so as to ensure that the \texttt{RISC} program it produces follows the same discipline.

The user of the theory provides the details of the locking discipline in the form of a \emph{lock interpretation} parameter:
$\lockinterp \oftype \Lock \Rightarrow (\Var\ \set \times \Var\ \set)$,
which for each lock gives the two non-overlapping sets of program variables over which acquiring the lock grants exclusive permission to write, (resp.) read and write.
These permissions are then reflected in the way the semantics of the \texttt{While} and \texttt{RISC} locking primitives act on the mode state.

Regarding lock interpretations and the way they interact with the user-provided value-dependent classification function $\dmaFunc$ (see \autoref{sec:cvdni-security}), we inherit a few cleanliness conditions from that earlier work~\cite{Murray_SE_18}, chief of which are that lock variables $k$ cannot be control variables, a lock variable $k$ governing access to a program variable $v$ must govern the same kind of access to all of $v$'s control variables, and $\dmaFunc$ must classify all lock variables as $\Low$.

\subsection{Compiler implementation and tracking of shared variable stability}\label{sec:comprec}

We chose as a starting point the
compilation scheme of \cite{Tedesco16}, on the basis of their preserving a noninterference property that like ours exhibits resilience to changes made by an environment---in their case, intended for fault-resilience.
Aiming to repurpose that for shared-variable concurrency, we adapted it to Isabelle, implementing it as a primitive recursive function:
\begin{align*}
    \CompileCmd \oftype\ 
    & CompRec \Rightarrow \Lab\ option \Rightarrow \Lab \Rightarrow cmd \Rightarrow \\
    & (I \times CompRec)\ list \times \Lab\ option \times \Lab \times CompRec \times bool
\end{align*}
where we choose $\Lab=\nat$ for \texttt{RISC} instruction labels, and the \emph{compilation record} type $CompRec$ is bookkeeping maintained by the compiler that we will describe further below.

A typical invocation to compile a \texttt{While} program $c \oftype cmd$ takes the form:
\begin{equation}\label{eqn:example-invocation}
  (\PCs, l', \nl', C', \failed) = \CompileCmd\ C\ l\ \nl\ c
\end{equation}
Here, $\CompileCmd$ takes an \emph{initial compilation record} $C$, an optional \emph{entry label} $l$, and the \emph{next available label} $\nl$, and for the benefit of the next invocation returns an optional \emph{exit label} $l'$ if one is used by the program just compiled, the \emph{new next available label} $\nl'$, and a \emph{final compilation record} $C'$.
We leave details of label allocation and its impact on achieving sequential composability for compiled \texttt{RISC} programs to
\ifPreprint
Appendix \ref{sec:labels-seq}.
\else
the extended version of the paper.
\fi

In addition to the output \texttt{RISC} program $P \oftype I\ list$ itself, a call to $\CompileCmd$ also outputs every $CompRec$ associated with the state of the program just before executing every instruction in $P$.
These are returned zipped up together with $P$ as the \emph{$CompRec$-annotated \texttt{RISC} program} $\PCs \oftype (I \times CompRec)\ list$.
($P$ can trivially be recovered as $\mapfst\ \PCs$.)
Finally, $\CompileCmd$ may return $\True$ for $\failed$ to reject the input program,
such as when it detects a data race (see below),
or if expression depth exceeds the assumed limit (\autoref{sec:target-lang}).

In the style of the compilation scheme on which it was based \cite{Tedesco16}, the \WRCompiler maintains a \emph{register record} $\Phi \oftype reg \rightharpoonup exp$, i.e. a partial map of registers to expressions on shared variables.
In addition to using it to compile away any unnecessary loads from variables in shared memory, we also use it
to ensure that an expression calculated by \texttt{RISC} in registers
is equal to the value of the expression as if it had all been calculated by \texttt{While} in one step.
This is especially important when writing the result of an expression back to shared memory, because the refinement is required to maintain all shared memory values.

New to the \WRCompiler is the responsibility of maintaining an \emph{assumption record}, which it uses primarily to detect and reject programs with data races on shared memory, and to rule out the introduction of any new ones.
Each assumption record $\mathcal{S} \oftype (\Var\ \set \times \Var\ \set)$ is a pair tracking the set of variables on which (resp.) \textbf{AsmNoW}, \textbf{AsmNoRW} assumptions are currently active at a given point in the program being compiled.
As a secondary concern we also use it to assert that the two sides of any if-conditional branches act consistently on the mode state, and that while-loops restore the original mode state on termination.

A compilation record $C = (\Phi, \mathcal{S}) \oftype CompRec$ is then just a register/assumption record pair.
For readability, we use $\Regrec$, $\Asmrec$ to denote (resp.) a $CompRec$'s $\fst$, $\snd$ projections.

To explain how the compilation record is used to rule out data races, and to ensure consistency of expression evaluation between source and target program, firstly we must introduce the concept of \emph{stability} of a variable $v$ according to an assumption record $\mathcal{S}$:
\[
    \VarStable\ \mathcal{S}\ v\ \equiv \ 
    v \in (\fst\ \mathcal{S} \cup \snd\ \mathcal{S})\ \land \ 
    (\forall v' \in \Cvars\ v.\ 
    v' \in (\fst\ \mathcal{S} \cup \snd\ \mathcal{S}))
\]
In short, this means that the variable and all its control variables ($\Cvars\ v$) are recorded as having either of \textbf{AsmNoW} or \textbf{AsmNoRW} active on them.

For register record entries to be of any help in ensuring consistency of \texttt{While} and \texttt{RISC} expression evaluation, we exclude expression evaluation on data race-prone variables by lifting the concept of stability to register records.
The following predicate asserts internal consistency of the compilation record $C$ created by $\CompileCmd$, in the sense that the register record may only map to expressions that mention variables that are recorded as \textsf{stable} by the assumption record accompanying it.
(Here, $\defined{ran}$ denotes the \emph{range} of a map.)
\[
    \RegrecStable\ C\ \equiv \ 
    \forall e \in \defined{ran}\ (\Regrec\ C). \ 
    (\forall v \in \mathsf{exp{\text -}vars}\ e.\ 
    \VarStable\ (\Asmrec\ C)\ v)
\]

To ensure that an input \texttt{While} program maintains register record stability, we define the predicate \textsf{no-unstable-exprs} $c$ $C$ to capture the requirement that a program~$c$, if started with a configuration consistent with compilation record $C$, will never access a lock-protected variable without holding the relevant lock.
(It also checks the secondary, mode-state consistency concerns of the assumption record mentioned earlier.)
We implement it as a simple static check carried out by a primitive recursive function on the structure of \texttt{While} programs.

Together, $\RegrecStable$ and \textsf{no-unstable-exprs} make up the main two requirements of a predicate $\CompilerInputReqs\ C\ l\ \nl\ c$ imposed on the input arguments to $\CompileCmd$,
which gives us enough information to prove a lemma that $\CompileCmd$ only ever outputs stable register records.
Full details of these we leave to
\ifPreprint
Appendix \ref{sec:elided-compiler-proofs}.
\else
the extended version of the paper.
\fi

\subsection{Proof of CVDNI-preserving compilation}\label{sec-wr-compiler-proof}

Having covered the most significant aspects of the \Covern \WRCompiler's parameters and machinery, we can now present the refinement relation $\RefRelWR$ (\autoref{sec:refrel_wr}), pacing function $\AbsStepsWR$ (\autoref{sec:abs-steps_wr}), and coupling invariant $\CouplInvWR$ (\autoref{sec-wr-coupling-invariant}) that we use with our new decomposition principle (of \autoref{sec-simpler}) to prove that it preserves CVDNI (\autoref{sec:sec-comp-theorems}).

\subsubsection{Refinement relation $\RefRelWR$ and its invariants}\label{sec:refrel_wr}

Just like our example $\mathcal{R}$ of \autoref{fig:refinement-example}, $\RefRelWR$ pairs abstract with concrete configurations.

Here, we will focus on $\RefRelWR$'s most notable characteristics for understanding why it is suitable to describe a CVDNI-preserving compilation.\footnote{We provide an informal description of all of the cases, their purpose, and the invariants they maintain, along with a code listing from $\CompileCmd$ relevant to the part that will be presented, in
\ifPreprint
Appendices \ref{sec:R_wr-informal} and \ref{sec:compile-cmd-if} (respectively).
\else
the extended version of our paper.
\fi
For full details, we refer the reader to the Isabelle formalisation.}
We focus on the case $\mathtt{if\_expr}$ of $\RefRelWR$, which
relates the expression evaluation part of the \texttt{While} program $\ITEg{e}{c_1}{c_2}$,
with the corresponding part (including the conditional jump $\Jz$ after expression evaluation) of the \texttt{RISC} program obtained by running $\CompileCmd$ on it.
(Variables ignored are in gray.)

\begin{example} [Introduction rule for case \texttt{if\_expr} of $\RefRelWR$]
\[
{\footnotesize
\ifPreprint
\mprset{vskip=0.46ex}
\else
\mprset{vskip=0.5ex}
\fi
\inferrule{
    c = \ITEg{e}{c_1}{c_2} \and
    \CompilerInputReqs\ C\ l\ \nl\ c \and \\
    (\PCs, \unused{l'}, \nl_2, \unused{C'}, \False) = \CompileCmd\ C\ l\ \nl\ c \and
    \qquad(P_e, \unused{r}, C_1, \False) = \CompileExpr\ C\ \varnothing\ l\ e \and \\
    (P_1, \unused{l_1}, \nl_1, \unused{C_2}, \False) = \CompileCmd\ C_1\ \None\ (\Suc\ (\Suc\ \nl))\ c_1 \and
        \quad \pc \leq \length\ P_e\ \ \ \ \ \  \and \\
    \,\, (P_2, \unused{l_2}, \nl_2, \unused{C_3}, \False) = \CompileCmd\ C_1\ (\Some\ \nl)\ \nl_1\ c_2 \and
        \quad C_{\pc} = (\mapsnd\ \PCs\ !\ \pc) \and \\
    \CompiledCmdConfigConsistent\ C_{\pc}\ \regs\ \mds\ \mem \and
    \RegrecStable\ C_{\pc} \and \\
    \forall \mds'\ \mem'\ \regs'. \ 
    \CompiledCmdConfigConsistent\ C_1\ \regs'\ \mds'\ \mem'\, \land \,
    \RegrecStable\ C_1 \\
    \quad\  \longrightarrow ((\LocalConfWhile{c_1}{\mds'}{\mem'}, \LocalConfRISC{0}{\mapfst\ P_1}{\regs'}{\mds'}{\mem'}) \in \RefRelWR\ \land \\
    \qquad \quad (\LocalConfWhile{c_2}{\mds'}{\mem'}, \LocalConfRISC{0}{\mapfst\ P_2}{\regs'}{\mds'}{\mem'}) \in \RefRelWR) 
} {
    (\LocalConfWhile{c}{\mds}{\mem}, \LocalConfRISC{\pc}{\mapfst\ \PCs}{\regs}{\mds}{\mem}) \in \RefRelWR
}}
\]
\end{example}
This is a fairly typical case of $\RefRelWR$ in a number of respects:

Firstly, there is a direct reference to the call to $\CompileCmd$ for the given \texttt{While} program.
Secondly, various guards
($\CompiledCmdConfigConsistent$ introduced below, and
$\RegrecStable$ defined in \autoref{sec:comprec})
are asserted in order to restrict the scope of $\RefRelWR$ only to consider
wellformed local program configurations that line up with the conditions captured by the compilation record.
Thirdly, the inductive references to $\RefRelWR$ for $P_1$ and $P_2$, the branches of the conditional \emph{that have not been reached yet}, are quantified over all configurations that obey the guards $\CompiledCmdConfigConsistent$ and $\RegrecStable$ relative to $C_1$, the initial compilation record for each of the sub-calls to $\CompileCmd$ for those sub-programs.

The guard $\CompiledCmdConfigConsistent$ mentioned
above asserts that the compilation record~$C$ is consistent with the
registers~$\regs$, memory~$\mem$ and mode state~$\mds$. 
\begin{align*}
    & \CompiledCmdConfigConsistent\ C\ \regs\ \mds\ \mem\ \equiv \\
    & \qquad (\forall r\ e.\ (\Regrec\ C)\ r = \Some\ e \longrightarrow regs\ r = \mathsf{ev_{exp}}\ \mem\ e)\ \land \\
    & \qquad \Asmrec\ C = (\mds\ \mathbf{AsmNoW},\ \mds\ \mathbf{AsmNoRW})
\end{align*}
Firstly, for all entries in register record mapping some register $r$ to some expression $e$, the value held in $r$ of the register bank $\regs$ must match the value of $e$ if evaluated under memory $\mem$.
Secondly, the assumption record must consist exactly of the program variables the mode state $\mds$ says have \textbf{AsmNoW}, \textbf{AsmNoRW} on them respectively.

As we will see in \autoref{thm:compile-cmd_correctness_R_wr},
$\CompiledCmdConfigConsistent$ also serves as
\emph{initial configuration requirements} for compiled programs: 
only configurations obeying them may be used to initialise a \texttt{RISC} program compiled by the \WRCompiler with initial compilation record $C$.

With $\RefRelWR$ specified, we then prove the two requirements for $\SecureRefineSimpler$ that pertain to $\RefRelWR$ alone:
$\PreservesMM$ (\autoref{def:preserves_modes_mem}) and $\ClosedOthers$ (\autoref{def:closed-others}).
\begin{lemma} [$\RefRelWR$ preserves modes and memory] \label{thm:preserves-modes-mem-R_wr}
  $\PreservesMM\ \RefRelWR$
\end{lemma}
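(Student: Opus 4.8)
The plan is to prove this by rule induction on the inductive definition of $\RefRelWR$. Unfolding \autoref{def:preserves_modes_mem}, it suffices to show that for every pair $(\lc_A, \lc_C) \in \RefRelWR$ we have $\LCSameMdsMem{\lc_A}{\lc_C}$, i.e.\ that $\lc_A$ and $\lc_C$ agree on both their mode state and their shared memory. So I would fix an arbitrary such pair, perform rule induction on the membership, and discharge each case separately.

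The key observation that makes every case routine is a design invariant baked into all of $\RefRelWR$'s introduction rules: each rule's conclusion relates a \texttt{While} configuration of the form $\LocalConfWhile{\cdot}{\mds}{\mem}$ with a \texttt{RISC} configuration of the form $\LocalConfRISC{\cdot}{\cdot}{\cdot}{\mds}{\mem}$, built from the \emph{same} mode-state variable $\mds$ and the \emph{same} memory variable $\mem$. This is exactly the discipline noted in \autoref{sec:comprec} (``the refinement is required to maintain all shared memory values''), and is visible in the conclusion of the \texttt{if\_expr} rule above. Consequently, in each case of the induction, $\LCSameMdsMem{\lc_A}{\lc_C}$ follows by reflexivity on the shared $\mds$ and $\mem$ components once $\lc_A$ and $\lc_C$ have been instantiated to the shapes the rule's conclusion dictates; the inductive hypotheses --- including those arising from the universally-quantified references to $\RefRelWR$ for the not-yet-reached branches $P_1$, $P_2$ in \texttt{if\_expr} --- are never needed, since those premises themselves only introduce further pairs that already share mode state and memory.

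The main obstacle is bookkeeping rather than conceptual: one must confirm the invariant holds for \emph{every} rule of $\RefRelWR$ --- the cases for $\Skip$, sequential composition, assignment, the while-loop and the locking primitives, as well as the various ``intermediate'' cases tracking partially-executed compiled code --- and not just for \texttt{if\_expr}. In Isabelle this is a short, largely automatic proof once the rule set is unfolded; the only real care lies in having defined $\RefRelWR$ uniformly enough that no rule lets the abstract and concrete memories or mode states drift apart.
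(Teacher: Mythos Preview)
Your approach---rule induction on the inductive definition of $\RefRelWR$---is exactly what the paper does. The paper's proof sketch reads: ``By induction on the structure of $\RefRelWR$. For all cases of $(\lc_{w}, \lc_{r}) \in \RefRelWR$, $\LCSameMdsMem{\lc_w}{\lc_r}$ is either asserted directly by the guards or obtainable from the inductive hypothesis.''

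The one point where your write-up diverges slightly is the claim that the inductive hypotheses are \emph{never} needed. That is true for rules like \texttt{if\_expr} whose conclusion syntactically shares the same $\mds$ and $\mem$ on both sides, but the paper's wording (``or obtainable from the inductive hypothesis'') indicates that some inductive cases---e.g.\ \texttt{join}, \texttt{if\_c1}, \texttt{if\_c2}, \texttt{while\_inner}---establish the equality via the nested $\RefRelWR$ premise for the \emph{current} sub-configuration rather than by a direct syntactic match in the conclusion. In those cases the IH is what hands you $\LCSameMdsMem{\lc_w}{\lc_r}$. This does not affect the correctness of your plan; it just means a handful of cases close by appealing to the IH rather than by reflexivity alone.
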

\begin{lemma} [$\RefRelWR$ is closed under changes by others] \label{thm:closed-others-R_wr}
  $\ClosedOthers\ \RefRelWR$
\end{lemma}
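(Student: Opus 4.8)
The plan is to unfold $\ClosedOthers$ (Definition~\ref{def:closed-others}): we fix thread-private states $\tps_A,\tps_C$, a mode state $\mds$ and memories $\mem,\mem'$ with $(\LocalConfWhile{c}{\mds}{\mem},\LocalConfRISC{\pc}{P}{\regs}{\mds}{\mem}) \in \RefRelWR$ --- where $\tps_A = c$ is the \texttt{While} command and $\tps_C = ((\pc,P),\regs)$ collects the \texttt{RISC} program point, program text and register bank --- together with the hypothesis that $\mem$ and $\mem'$ agree, on both values and classification under $\dmaFunc$, at every variable $x$ with $\Writable\ \mds\ x$ (i.e.\ $x \notin \mds\ \mathbf{AsmNoW}$ and $x \notin \mds\ \mathbf{AsmNoRW}$). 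We must show $(\LocalConfWhile{c}{\mds}{\mem'},\LocalConfRISC{\pc}{P}{\regs}{\mds}{\mem'}) \in \RefRelWR$. Observe first that neither thread-private state nor the mode state changes --- only the memory component, identically on both sides --- so equality of modes and memory as required by $\PreservesMM$ (Lemma~\ref{thm:preserves-modes-mem-R_wr}) is trivially maintained.

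I would then proceed by inversion on the derivation of the $\RefRelWR$ membership, treating each introduction rule. The observation that makes this uniform is that, in every rule, the only premises that mention the shared memory are instances of $\CompiledCmdConfigConsistent\ C_{\pc}\ \regs\ \mds\ \mem$ for the compilation record $C_{\pc}$ associated with the current program point; the purely syntactic premises (the calls to $\CompileCmd$ and $\CompileExpr$, the label- and length-side-conditions), the stability guards $\RegrecStable\ C_{\pc}$, and the universally-quantified self-references (e.g.\ the clauses for $P_1,P_2$ in the $\mathtt{if\_expr}$ rule, which are quantified over \emph{all} memories and hence do not refer to $\mem$) all carry over verbatim. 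Hence it suffices to show, in each case, that $\CompiledCmdConfigConsistent\ C_{\pc}\ \regs\ \mds\ \mem$ implies $\CompiledCmdConfigConsistent\ C_{\pc}\ \regs\ \mds\ \mem'$; the same rule then fires with $\mem'$ in place of $\mem$. (No genuine rule induction is needed, precisely because the self-references are over all memories and so survive the swap untouched.)

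The crux is therefore a small lemma about $\CompiledCmdConfigConsistent$. Its second conjunct, $\Asmrec\ C = (\mds\ \mathbf{AsmNoW},\ \mds\ \mathbf{AsmNoRW})$, does not mention memory, so it transfers immediately and moreover tells us $\fst\,(\Asmrec\ C) \cup \snd\,(\Asmrec\ C) = \mds\ \mathbf{AsmNoW} \cup \mds\ \mathbf{AsmNoRW}$. For the first conjunct, fix $r,e$ with $(\Regrec\ C)\ r = \Some\ e$; we know $\regs\ r = \mathsf{ev_{exp}}\ \mem\ e$ and must derive $\regs\ r = \mathsf{ev_{exp}}\ \mem'\ e$. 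By $\RegrecStable\ C$ (a guard present in every $\RefRelWR$ rule, which the compiler is proved to maintain) every variable $v$ occurring in $e$ satisfies $\VarStable\ (\Asmrec\ C)\ v$, hence $v \in \fst\,(\Asmrec\ C) \cup \snd\,(\Asmrec\ C) = \mds\ \mathbf{AsmNoW} \cup \mds\ \mathbf{AsmNoRW}$, i.e.\ $\neg\,\Writable\ \mds\ v$. By the hypothesis relating $\mem$ and $\mem'$ we then have $\mem\ v = \mem'\ v$ for all such $v$, so $\mathsf{ev_{exp}}$ agrees on $e$ under the two memories (a routine induction on the structure of $e$, since evaluation only inspects the variables $e$ mentions), giving $\mathsf{ev_{exp}}\ \mem'\ e = \mathsf{ev_{exp}}\ \mem\ e = \regs\ r$.

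The main obstacle is the bookkeeping: checking case by case across the many rules of $\RefRelWR$ that no memory-sensitive premise other than the $\CompiledCmdConfigConsistent$ guards is present --- in particular confirming that each case does carry a $\RegrecStable$ guard on the relevant compilation record, so the evaluation-agreement argument always applies. Once this uniform pattern is pinned down, each case is discharged by re-applying its introduction rule with $\mem'$ substituted for $\mem$, and the lemma follows.
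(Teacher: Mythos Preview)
Your core argument matches the paper's: the only memory-sensitive guard in the $\RefRelWR$ rules is $\CompiledCmdConfigConsistent$ (specifically its $\RegrecMemConsistent$ half), and this is preserved because $\RegrecStable$ ensures every variable appearing in a recorded expression is non-$\Writable$, hence unchanged between $\mem$ and $\mem'$. That is exactly the paper's reasoning.

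However, your parenthetical claim that ``no genuine rule induction is needed, precisely because the self-references are over all memories'' is wrong. Only \emph{some} of the recursive premises in $\RefRelWR$ are universally quantified over configurations (the sub-branch clauses in $\mathtt{if\_expr}$ that you cite, and the analogous not-yet-reached clauses in $\mathtt{seq}$, $\mathtt{while\_expr}$, etc.). Several cases --- $\mathtt{seq}$, $\mathtt{join}$, $\mathtt{if\_c1}$, $\mathtt{if\_c2}$, $\mathtt{while\_inner}$, $\mathtt{while\_loop}$, $\mathtt{epilogue\_step}$ --- also carry recursive $\RefRelWR$ premises instantiated at the \emph{current} memory $\mem$ (e.g.\ in $\mathtt{seq}$ the premise that $c_1$ is related to the present location in $P_1$; in $\mathtt{join}$ the premise that $c$ is related to the offset into $P_2$). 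Inversion alone gives you these premises only at $\mem$; to re-establish them at $\mem'$ you need the inductive hypothesis furnished by rule induction on $\RefRelWR$, which is precisely what the paper uses. The fix is immediate --- replace inversion by rule induction --- and the rest of your argument goes through unchanged.
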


\subsubsection{Refinement pacing function $\AbsStepsWR$} \label{sec:abs-steps_wr}

We now nominate an $\abssteps$ function, determining the pace at which \texttt{While} programs progress in comparison to the \texttt{RISC} programs that they are compiled to by the \WRCompiler.

To assist here and elsewhere, we define a primitive recursive helper \textsf{leftmost-cmd} that given a sequence of ;-separated \texttt{While} commands, strips all but the first: given $\Seqg{c_1}{c_2}$ it returns \textsf{leftmost-cmd} $c_1$, and given any other \texttt{While} program $c$ it returns $c$.

Our pacing function $\AbsStepsWR$ primarily looks at the form of the \texttt{RISC} program instruction about to be executed.
The \texttt{RISC} instructions are divided into three categories:
\begin{itemize}
    \item Instructions output by $\CompileExpr$: $\Load$, $\Op$, and $\MoveK$.
        For these, $\AbsStepsWR$ returns 1 if the \textsf{leftmost-cmd} of the \texttt{While} program is $\Whileg{e}{c}$, to allow it to step to $\ITEg{e}{(\Seqg{c}{\Whileg{e}{c}})}{\Stop}$ concurrently with the first \texttt{RISC} step of the compiled expression itself.
        Otherwise, $\AbsStepsWR$ returns 0 to indicate the \texttt{While} program standing still while the \texttt{RISC} program takes \emph{new} steps to evaluate the expression.
    \item ``Epilogue'' steps: $\Jmp$ and $\Nop$ when used for control flow at the end of a smaller compiled program in the context of a larger one.
        For these, $\AbsStepsWR$ returns 0.
    \item All other \texttt{RISC} instructions are assumed to proceed at a lockstep pace with the \texttt{While} command they were compiled from, and for these $\AbsStepsWR$ returns 1.
\end{itemize}

Having nominated $\AbsStepsWR$ and $\RefRelWR$, we now have the parameters over which we are obliged to prove refinement preservation (\autoref{fig:decomp-R}) as demanded by $\SecureRefineSimpler$ (\autoref{def:secure-refinement-simpler}).
To this end, we prove firstly (elided to
\ifPreprint
Appendix \ref{sec:elided-compiler-proofs})
\else
the extended version)
\fi
that every step of execution of a \texttt{RISC} program produced by the \WRCompiler from a \texttt{While} program, maintains the consistency demanded by $\CompiledCmdConfigConsistent$ between configurations and compilation records.
Also, we must prove a correctness lemma for the expression compiler:
\begin{lemma} \label{thm:compile-expr-correct}
$(\PCs, r, C', \False) = \CompileExpr\ C\ A\ l\ e\ 
\implies (\Regrec\ C')\ r = \Some\ e$
\end{lemma}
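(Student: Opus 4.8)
The plan is to prove the lemma by structural induction on the expression~$e$, which coincides with the recursion of the primitive recursive function $\CompileExpr$. The claim concerns only the final compilation record's register record, asserting that the register~$r$ returned by a successful compilation is recorded there as holding~$e$; so in each case it suffices to unfold the relevant clause of $\CompileExpr$ and read off the register record it returns.

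For the base case $e = n$ (a constant), $\CompileExpr$ allocates a fresh result register~$r$ (one outside the set~$A$ of registers the allocator must avoid), emits $\MoveK\ r\ n$, and updates the register record of the incoming record~$C$ to map $r \mapsto n$; hence $(\Regrec\ C')\ r = \Some\ n = \Some\ e$. For the base case $e = v$ (a variable) there are two sub-cases: if $\Regrec\ C$ already maps some register~$r$ to the expression~$v$, the compiler reuses it (the optimisation that compiles away unnecessary loads), returning that~$r$ and leaving the compilation record unchanged, so $(\Regrec\ C')\ r = (\Regrec\ C)\ r = \Some\ v$ precisely because~$r$ was located that way; otherwise it allocates a fresh~$r$, emits $\Load\ r\ v$, and sets $(\Regrec\ C')\ r = \Some\ v$. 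Either way the conclusion holds.

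For the inductive case $e = e_1 \oplus e_2$, I would first use the assumption $\failed = \False$: since $\CompileExpr$ ORs together the failure flags of its two recursive calls (together with an expression-depth check), both sub-compilations must also have returned $\False$, so the definition reduces to its intended branch. There $\CompileExpr$ compiles $e_1$ (with avoid-set~$A$, yielding record $C_1$ and register $r_1$), then $e_2$ (with avoid-set $A$ extended by~$r_1$, yielding $C_2$ and $r_2$), emits $\Op\ \oplus\ r_1\ r_2$ into the result register~$r$, and updates the register record of $C_2$ to map $r \mapsto e_1 \oplus e_2$; reading this off gives $(\Regrec\ C')\ r = \Some\ (e_1 \oplus e_2) = \Some\ e$.

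I expect no genuine mathematical obstacle: the proof is entirely definitional. The fiddly points are (i) matching the precise clauses of the primitive recursive $\CompileExpr$ — in particular the two sub-cases of the variable case and the exact shape of the register-record update performed at the tail of each case — and (ii) threading the $\failed = \False$ hypothesis through the disjunction of failure flags in the $\oplus$ case so that $\CompileExpr$ unfolds to the update-performing branch. Note that the induction hypotheses on $e_1$ and $e_2$, while available, are not actually needed here, since in every case the final register record is written directly; it is the stronger correctness invariants relating register contents to memory (in the spirit of $\CompiledCmdConfigConsistent$) and the stability invariant $\RegrecStable$ that genuinely rely on inductive reasoning, and those are established separately.
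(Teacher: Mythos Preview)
Your proposal is correct and follows the same approach as the paper: structural induction on~$e$, unfolding the clauses of $\CompileExpr$ in each case. The paper's sketch additionally highlights reliance on the correctness assumptions for the user-supplied register allocation scheme (in particular that $reg\_alloc\_cached$ really returns a register whose entry in $\Regrec\ C$ is $\Some\ v$), which you use implicitly in your variable sub-case when you write ``precisely because~$r$ was located that way''; making that dependence explicit is the only refinement the paper adds.
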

Armed with these facts, we can now prove the main refinement preservation result:
\begin{lemma} [$\RefRelWR$ is a refinement paced by $\AbsStepsWR$] \label{thm:R_wr-refinement-abs-steps_wr}
\begin{align*}
    & \forall \lc_w\ \lc_r.\ (\lc_w, \lc_r) \in \RefRelWR \longrightarrow \ 
      (\forall \lc_r'.\ \lc_r \EvalStepRISC \lc_r' \longrightarrow \\
    & \qquad (\exists \lc_w'.\ \lc_w \NEvalStepWhile{(\AbsStepsWR\ \lc_w\ \lc_r)} \lc_w'\ \land \ (\lc_w', \lc_r') \in \RefRelWR))
\end{align*}
\end{lemma}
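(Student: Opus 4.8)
The plan is to prove this by rule induction on the derivation of $(\lc_w, \lc_r) \in \RefRelWR$ --- equivalently, by case analysis on which of $\RefRelWR$'s introduction rules applies, since the rules for compound \texttt{While} programs (such as \texttt{if\_expr} above) already carry, as premises, the $\RefRelWR$ facts needed for their not-yet-reached sub-programs, quantified over all configurations obeying the guards $\CompiledCmdConfigConsistent$ and $\RegrecStable$. In each case the rule hands us the defining equation $(\PCs,\ldots) = \CompileCmd\ C\ l\ \nl\ c$ (and the analogous equations for $\CompileExpr$ and any sub-commands), together with $\CompiledCmdConfigConsistent\ C_{\pc}\ \regs\ \mds\ \mem$ and $\RegrecStable\ C_{\pc}$ for the current position $\pc$. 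I would then invert the deterministic \texttt{RISC} step $\lc_r \EvalStepRISC \lc_r'$ --- which is pinned down by the instruction $\mapfst\ \PCs\ !\ \pc$ about to execute --- exhibit a matching \texttt{While} execution of exactly $\AbsStepsWR\ \lc_w\ \lc_r$ steps, and re-establish $(\lc_w', \lc_r') \in \RefRelWR$, usually by the same rule with $\pc$ advanced, or by a successor rule.

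Two auxiliary results carry most of the weight. The first is the (elided) lemma stated just above, that every \texttt{RISC} step preserves $\CompiledCmdConfigConsistent$ between the running configuration and the $CompRec$ annotating the next instruction: this is what lets a step that merely advances $\pc$ through a compiled expression remain inside $\RefRelWR$, because the register just written now maps, via the updated register record, to an expression whose value under $\mem$ (by Lemma~\ref{thm:compile-expr-correct} together with the definition of $\mathsf{ev_{exp}}$) is precisely the value \texttt{RISC} placed in that register. The second is a structural lemma about $\CompileCmd$'s output fixing which instruction sits at each $\pc$ --- expression-compiler output ($\Load$/$\Op$/$\MoveK$), the conditional $\Jz$, a $\Store$, a lock primitive, or the $\Jmp$/$\Nop$ ``epilogue'' glue that stitches a smaller compiled program into a larger one --- so that the inversion of the \texttt{RISC} step and the value returned by $\AbsStepsWR$ line up case by case: $0$ abstract steps while \texttt{RISC} works through an expression or runs epilogue glue, and $1$ abstract step for the genuinely lockstep instructions ($\Store$ matching the \texttt{While} assignment, $\RISCLockAcq$/$\RISCLockRel$ matching \kw{lock}/\kw{unlock}), with the memory- and mode-state effects shown to agree using $\CompiledCmdConfigConsistent$ (for $\Store$, to know the source register holds the value of the compiled expression).

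The main obstacles are the branching and looping cases. For \texttt{if\_expr}, once expression evaluation completes and the $\Jz$ fires, we must produce $\RefRelWR$ for $P_1$ or $P_2$ according to whether the computed guard is zero; this is exactly where the rule's universally-quantified premises for $P_1$ and $P_2$ are discharged, after checking the post-$\Jz$ configuration obeys $\CompiledCmdConfigConsistent\ C_1$ and $\RegrecStable\ C_1$ (the register record reset appropriately, the assumption record unchanged across the branch). For the while-loop, the subtlety is the special case of $\AbsStepsWR$ that returns $1$ on the first instruction of the compiled guard, so that the \texttt{While} program unfolds $\Whileg{e}{c}$ into $\ITEg{e}{(\Seqg{c}{\Whileg{e}{c}})}{\Stop}$ in lockstep with \texttt{RISC}'s first expression step; aligning those, and then matching the loop's back-edge $\Jmp$ and its exit, needs care. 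Sequential composition threads through all of this: when \textsf{leftmost-cmd} of $\lc_w$ is the head of a larger $\Seqg{c_1}{c_2}$ while $\lc_r$ is inside compiled $c_1$, the $0$-paced epilogue $\Jmp$/$\Nop$ steps must be shown to transfer control exactly to the start of compiled $c_2$ as $c_1$ finishes, which is where $\CompileCmd$'s label-allocation bookkeeping is invoked.
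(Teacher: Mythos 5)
Your proposal matches the paper's own proof: both proceed by rule induction on $\RefRelWR$, invert the \texttt{RISC} step at the current $\pc$, use the lemma that execution preserves $\CompiledCmdConfigConsistent$ together with \autoref{thm:compile-expr-correct} to re-establish the guards, and discharge the universally-quantified premises for compiled sub-programs at the $\Jz$ and loop boundaries, with the main labour being the control-flow case analysis. No gaps worth noting.
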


\subsubsection{Concrete coupling invariant $\CouplInvWR$}\label{sec-wr-coupling-invariant}

The next element needed is the concrete coupling invariant $\CouplInvWR$, which we define as follows:

$\CouplInvWR \equiv
\{(\LocalConfRISC{\pc}{P}{\regs}{\mds}{\mem},
   \LocalConfRISC{\pc'}{P'}{\regs'}{\mds'}{\mem'})\ |\ 
   (\pc, P) = (\pc', P')\}$

In other words, $\CouplInvWR$ asserts that we only need compare local configurations that are at the same location $\pc = \pc'$ of the same \texttt{RISC} program $P = P'$.
When used in concert with a $\NoHighBranching\ \mathcal{B}$ (see \autoref{sec:sec-comp-theorems}), the effect of $\CouplInvWR$ is to ensure that the \WRCompiler has not introduced any \emph{new} branching on sensitive values.

\subsubsection{Successful compilations are CVDNI-preserving refinements}\label{sec:sec-comp-theorems}

We are ready to prove preservation.
First we qualify that we allow only $\StrongLowBisimMM\ \mathcal{B}$ that describe only \texttt{While}-programs with no branching on $\High$-classified values, as follows:
\begin{align*}
    & \NoHighBranching\ \mathcal{B} \equiv \\
    & \forall c\ c'\ \mds\ \mem\ \mem'.\ 
      (\LocalConfWhile{c}{\mds}{\mem},
       \LocalConfWhile{c'}{\mds}{\mem'}) \in \mathcal{B}\ 
       \longrightarrow \ c = c'\ \land \\
    & \qquad (\forall e\ c_1\ c_2.\ 
              \mathsf{leftmost{\text -}cmd}\ c = \ITEg{e}{c_1}{c_2} \longrightarrow \ 
              \mathsf{ev_{exp}}\,\mem\ e = \mathsf{ev_{exp}}\,\mem'\,e)
\end{align*}
That is, it refuses to relate configurations at different program locations.
Furthermore if it is at a conditional branching point, the expression $e$ determining which branch will be taken evaluates to the same boolean value for both configurations' memories.
When imposed on a relation that already ensures $\Low$-equivalent memory modulo modes, this effectively disallows any present or past branching on sensitive values.
Then, for such programs:

\begin{lemma} \label{thm:simpler-refinement-safe_wr}
$\begin{aligned}
\inferrule{
    \StrongLowBisimMM\ \mathcal{B} \and
    \NoHighBranching\ \mathcal{B}
} {
    \SecureRefineSimpler\ \mathcal{B}\ \RefRelWR\ \CouplInvWR\ \AbsStepsWR
}\end{aligned}$
\end{lemma}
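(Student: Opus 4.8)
The plan is to unfold $\SecureRefineSimpler\ \mathcal{B}\ \RefRelWR\ \CouplInvWR\ \AbsStepsWR$ (\autoref{def:secure-refinement-simpler}) into its six conjuncts and discharge them in turn. Three are already in hand: $\PreservesMM\ \RefRelWR$ is Lemma~\ref{thm:preserves-modes-mem-R_wr}, $\ClosedOthers\ \RefRelWR$ is Lemma~\ref{thm:closed-others-R_wr}, and the final paced-refinement clause is precisely Lemma~\ref{thm:R_wr-refinement-abs-steps_wr}. Two more follow at once from the shape of $\CouplInvWR$: it relates two \texttt{RISC} configurations exactly when their thread-private $(\pc, P)$ components agree, which is visibly symmetric (giving $\Sym\ \CouplInvWR$), and whose membership mentions neither $\mem$ nor the classification $\dmaFunc$, so it is trivially closed under the shared-memory and reclassification changes quantified over by $\CgConsistent$ (giving $\CgConsistent\ \CouplInvWR$). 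What remains---and where all the work lies---is $\SimplerRefinementSafe\ \mathcal{B}\ \RefRelWR\ \CouplInvWR\ \AbsStepsWR$ (\autoref{def:simpler-refinement-safe}).

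For $\SimplerRefinementSafe$, fix $\lc_{1A}, \lc_{2A}, \lc_{1C}, \lc_{2C}$ satisfying its hypotheses; in particular $(\lc_{1A}, \lc_{2A}) \in \mathcal{B}$ with $\LCSameMds{\lc_{1A}}{\lc_{2A}}$, each $\RefRelWR$-related to its concrete counterpart, and $(\lc_{1C}, \lc_{2C}) \in \CouplInvWR$. The stopping conjunct $\stops\ \lc_{1C} = \stops\ \lc_{2C}$ holds because $\stops$ of a \texttt{RISC} configuration is determined by its $(\pc, P)$, which $\CouplInvWR$ equates. For the pacing conjunct $\AbsStepsWR\ \lc_{1A}\ \lc_{1C} = \AbsStepsWR\ \lc_{2A}\ \lc_{2C}$: $\AbsStepsWR$ derives its value from (i) the \texttt{RISC} instruction about to execute, fixed by $(\pc, P)$ and hence equal by $\CouplInvWR$, and (ii) the \textsf{leftmost-cmd} of the paired \texttt{While} command; these agree because $\NoHighBranching\ \mathcal{B}$ applied to $(\lc_{1A}, \lc_{2A}) \in \mathcal{B}$ (legitimate since $\LCSameMds{\lc_{1A}}{\lc_{2A}}$) forces $\lc_{1A}$ and $\lc_{2A}$ to carry identical \texttt{While} commands. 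Both inputs coincide, so the outputs do.

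The substantial conjunct is lockstep closure: from $\lc_{1C} \EvalStepConc \lc_{1C}'$ and $\lc_{2C} \EvalStepConc \lc_{2C}'$, derive $(\lc_{1C}', \lc_{2C}') \in \CouplInvWR$ and $\LCSameMds{\lc_{1C}'}{\lc_{2C}'}$. Program text is immutable under \texttt{RISC} execution, so $P$ remains equal on the two sides. The mode state is changed only by $\RISCLockAcq/\RISCLockRel$, whose effect on it is a fixed function of $\lockinterp$ and the prior mode state (equal by $\LCSameMds{\lc_{1C}}{\lc_{2C}}$); moreover the program-counter update of these primitives depends only on the lock variable's value, which agrees across $\lc_{1C}$ and $\lc_{2C}$ because lock variables are $\Low$-classified and readable, so $\StrongLowBisimMM\ \mathcal{B}$ gives $\LCLowMdsEq{\lc_{1A}}{\lc_{2A}}$ and $\PreservesMM\ \RefRelWR$ carries the memory agreement down to the concrete side. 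Hence $\LCSameMds{\lc_{1C}'}{\lc_{2C}'}$, and for every instruction other than $\Jz$ the updated program counter agrees by plain determinism of the semantics. For $\Jz\ l\ r$---the only instruction whose control flow depends on dynamic state---I would case-split on the $\RefRelWR$ rule that places a $\Jz$ at the current $\pc$ (the \texttt{if\_expr} case, and its analogues arising after a \texttt{while} loop is desugared, in both top-level and nested positions), using the fact that the \WRCompiler emits $\Jz$ only for conditionals. Each such rule carries a $\CompiledCmdConfigConsistent$ guard for a compilation record whose register record maps $r$ to the compiled guard expression $e$ (by the expression-compiler correctness Lemma~\ref{thm:compile-expr-correct}), and pairs the \texttt{RISC} configuration with a \texttt{While} configuration whose \textsf{leftmost-cmd} is $\ITEg{e}{c_1}{c_2}$. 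From the first fact, $\regs\ r = \mathsf{ev_{exp}}\ \mem\ e$ on each side; and since $\NoHighBranching\ \mathcal{B}$ forces $\lc_{1A}, \lc_{2A}$ to carry the same command, the two sides refer to the \emph{same} guard $e$. Then $\NoHighBranching\ \mathcal{B}$ applied to $(\lc_{1A}, \lc_{2A})$ gives $\mathsf{ev_{exp}}\ \mem_{1A}\ e = \mathsf{ev_{exp}}\ \mem_{2A}\ e$, which $\PreservesMM\ \RefRelWR$ lifts to $\mathsf{ev_{exp}}\ \mem_{1C}\ e = \mathsf{ev_{exp}}\ \mem_{2C}\ e$; hence $\regs_{1C}\ r = \regs_{2C}\ r$, the conditional jump goes the same way, and $(\pc', P')$ coincides, establishing $(\lc_{1C}', \lc_{2C}') \in \CouplInvWR$ and finishing the lemma.

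I expect the main obstacle to be exactly this $\Jz$ case: matching each $\Jz$-at-$\pc$ rule of $\RefRelWR$ to the fact that the current \textsf{leftmost-cmd} is an if-then-else whose guard is precisely the expression that $\CompileExpr$ has left in the tested register. This is bookkeeping-heavy, since a $\Jz$ can originate from a source \texttt{if} or from a desugared \texttt{while} and in arbitrarily nested contexts, so the invariants baked into $\RefRelWR$ must be strong enough to pin down that guard and the current \textsf{leftmost-cmd} simultaneously; everything else reduces to determinism of the \texttt{RISC} semantics together with transporting low-equivalence across $\PreservesMM$.
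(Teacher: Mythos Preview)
Your proposal is correct and follows essentially the same approach as the paper's proof: discharge the five ``easy'' conjuncts of $\SecureRefineSimpler$ via Lemmas~\ref{thm:preserves-modes-mem-R_wr}, \ref{thm:closed-others-R_wr}, \ref{thm:R_wr-refinement-abs-steps_wr} and the shape of $\CouplInvWR$, then handle $\SimplerRefinementSafe$ by (i) deriving equal stopping from equal $(\pc,P)$, (ii) deriving equal pacing from equal $(\pc,P)$ plus $\NoHighBranching$ forcing equal \texttt{While} commands, and (iii) for lockstep closure, treating locks via $\Low$-classification of lock variables transported through $\PreservesMM$, and treating $\Jz$ via $\NoHighBranching$ plus memory preservation. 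Your write-up is in fact more explicit than the paper's sketch on the $\Jz$ case---you spell out the bridge from ``$e$ evaluates equally in both memories'' to ``$\regs\ r$ agrees'' via the $\CompiledCmdConfigConsistent$ guard and $\CompileExpr$ correctness (Lemma~\ref{thm:compile-expr-correct}), which the paper leaves implicit---but the underlying argument is the same.
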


From this it follows immediately via \autoref{thm:secure-refinement-simpler-sound} that $\RefRelWR$ with the help of $\CouplInvWR$ describes a CVDNI-preserving refinement for non-$\High$-branching \texttt{While} programs:
\begin{corollary} [$\RefRelWR$ is a CVDNI-preserving refinement for non-High-branching programs] \label{thm:secure-refinement-R_wr}
\[
    \StrongLowBisimMM\ \mathcal{B}\ \land \ 
   \NoHighBranching\ \mathcal{B} \implies
   \SecureRefinement\ \mathcal{B}\ \RefRelWR\ \CouplInvWR
\]
\end{corollary}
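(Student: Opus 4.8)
The plan is to obtain the conclusion purely by composing results already in hand. Observe first that $\SecureRefinement$ (\autoref{def:secure-refinement}) takes three parameters whereas the decomposed predicate $\SecureRefineSimpler$ (\autoref{def:secure-refinement-simpler}) takes a fourth, the pacing function; so the only real choice to make is to supply $\AbsStepsWR$ (\autoref{sec:abs-steps_wr}) as that witness. With this choice, \autoref{thm:simpler-refinement-safe_wr} gives us, from the two hypotheses $\StrongLowBisimMM\ \mathcal{B}$ and $\NoHighBranching\ \mathcal{B}$, exactly $\SecureRefineSimpler\ \mathcal{B}\ \RefRelWR\ \CouplInvWR\ \AbsStepsWR$. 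We then instantiate \autoref{thm:secure-refinement-simpler-sound} at $\mathcal{R} := \RefRelWR$, $\mathcal{I} := \CouplInvWR$, $\abssteps := \AbsStepsWR$ to discharge the goal $\SecureRefinement\ \mathcal{B}\ \RefRelWR\ \CouplInvWR$. Nothing else is required: the corollary is a two-step chain, and the only thing to check is that the parameters line up, which they do by construction.

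Since the corollary itself is immediate, the substance is in \autoref{thm:simpler-refinement-safe_wr}, which I would attack by unfolding \autoref{def:secure-refinement-simpler} into its conjuncts and discharging them in turn. $\PreservesMM\ \RefRelWR$ and $\ClosedOthers\ \RefRelWR$ are already available as \autoref{thm:preserves-modes-mem-R_wr} and \autoref{thm:closed-others-R_wr}, and the square-shaped refinement obligation is \autoref{thm:R_wr-refinement-abs-steps_wr}. That $\CouplInvWR$ is symmetric and $\CgConsistent$ (\autoref{def:cg-consistent}) is trivial, since $\CouplInvWR$ constrains only the $(\pc, P)$ pair, which is thread-private and hence insensitive to the memory and classification changes that $\CgConsistent$ quantifies over. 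The remaining, genuinely non-trivial conjunct is $\SimplerRefinementSafe$ (\autoref{def:simpler-refinement-safe}).

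For $\SimplerRefinementSafe$ I would take $\mathcal{B}$-related, modes-equal $\lc_{1A},\lc_{2A}$, their $\RefRelWR$-images $\lc_{1C},\lc_{2C}$, and the hypothesis $(\lc_{1C},\lc_{2C}) \in \CouplInvWR$, which pins both concrete configurations to the same program counter and the same \texttt{RISC} program text. The three obligations then reduce to: (i) $\stops\ \lc_{1C} = \stops\ \lc_{2C}$, since stopping is a function of $\pc$ relative to $P$ and thus fixed by $\CouplInvWR$ alone; (ii) $\AbsStepsWR\ \lc_{1A}\ \lc_{1C} = \AbsStepsWR\ \lc_{2A}\ \lc_{2C}$, since $\AbsStepsWR$ inspects only the \texttt{RISC} instruction about to execute (determined by $\CouplInvWR$) and the \textsf{leftmost-cmd} of the \texttt{While} command, which $\NoHighBranching\ \mathcal{B}$ forces to coincide for $\lc_{1A},\lc_{2A}$; and (iii) closure of $\CouplInvWR$ under lockstep \texttt{RISC} execution, with mode-state equality preserved.

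Part (iii) is where I expect the main difficulty. One must show that, from a shared $(\pc,P)$, the two executions step to the same new $(\pc',P')$ and make identical mode-state updates. For straight-line instructions and for the $\Jmp$/$\Nop$ epilogue cases this is immediate. The delicate case is $\Jz$, whose target depends on a register whose value encodes a compiled guard expression: here I would invoke $\CompiledCmdConfigConsistent$ (carried by the $\mathtt{if\_expr}$ case of $\RefRelWR$, together with \autoref{thm:compile-expr-correct}) to relate that register back to the \texttt{While} guard expression, and then use $\NoHighBranching\ \mathcal{B}$ to conclude that the guard evaluates identically in both memories, so both executions take the same branch. The mode-state updates are a function only of the instruction ($\RISCLockAcq$/$\RISCLockRel$) and the user-supplied $\lockinterp$, so they agree automatically once $(\pc,P)$ agrees.
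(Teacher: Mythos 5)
Your proposal is correct and matches the paper's own proof of this corollary exactly: instantiate the pacing function with $\AbsStepsWR$, obtain $\SecureRefineSimpler\ \mathcal{B}\ \RefRelWR\ \CouplInvWR\ \AbsStepsWR$ from \autoref{thm:simpler-refinement-safe_wr}, and conclude via \autoref{thm:secure-refinement-simpler-sound}. Your further sketch of how \autoref{thm:simpler-refinement-safe_wr} itself is established also follows the paper's argument (reusing \autoref{thm:preserves-modes-mem-R_wr}, \autoref{thm:closed-others-R_wr}, \autoref{thm:R_wr-refinement-abs-steps_wr}, the trivial symmetry and $\CgConsistent$ of $\CouplInvWR$, and the $\NoHighBranching$ argument for the conditional-jump case), so no further work is needed for the corollary.
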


Finally, we prove that successful compilation produces a \texttt{RISC} program related by $\RefRelWR$ to its input \texttt{While} program, when started with corresponding and reasonable initial configurations:

\begin{theorem} [Successful compilations are refinements in $\RefRelWR$] \label{thm:compile-cmd_correctness_R_wr}
\[
\mprset{vskip=0.5ex}
\inferrule{
  (\PCs, l', \nl', C', \failed) = \CompileCmd\ C\ l\ \nl\ c \and
    \CompilerInputReqs\ C\ l\ \nl\ c \and \\
  \failed = \False \and
    \CompiledCmdConfigConsistent\ C\ \regs\ \mds\ \mem \and
    P = \mapfst\ \PCs
} {
  (\LocalConfWhile{c}{\mds}{\mem},
    \LocalConfRISC{0}{P}{\regs}{\mds}{\mem}) \in \RefRelWR
}
\]
\end{theorem}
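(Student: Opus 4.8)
The plan is to prove \autoref{thm:compile-cmd_correctness_R_wr} by structural induction on the \texttt{While} command $c$, with the initial compilation record $C$, the entry label $l$, the next available label $\nl$, and the initial state $(\regs,\mds,\mem)$ all universally quantified in the induction hypothesis. Unlike \autoref{thm:R_wr-refinement-abs-steps_wr}, which is about \emph{preserving} $\RefRelWR$ under a step, this is the complementary \emph{initialisation} statement --- it only has to establish $\RefRelWR$-membership at program counter $0$ --- so the argument is essentially a syntactic one driven by the definitions of $\CompileCmd$ and $\RefRelWR$. The relation $\RefRelWR$ is given as a family of introduction rules, one matching each syntactic form of \texttt{While} command (the \texttt{if\_expr} rule shown above being representative), so the skeleton is: for each form of $c$, unfold the primitive-recursive definition of $\CompileCmd$ to expose the sub-invocations and the shape of $\PCs$, then apply the corresponding introduction rule of $\RefRelWR$ at $\pc = 0$ and discharge its premises.

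Four families of premises recur across the cases. (i) The \emph{direct-reference} premises --- equations of the form $(\PCs,\ldots) = \CompileCmd\ C\ l\ \nl\ c$ and their analogues for sub-invocations --- are closed immediately by the unfolding, together with a small monotonicity fact that $\failed = \False$ for the top-level call forces $\failed = \False$ for every recursive call. (ii) The \emph{guard} premises $\CompiledCmdConfigConsistent\ C_{0}\ \regs\ \mds\ \mem$ and $\RegrecStable\ C_{0}$, where $C_0 = \mapsnd\ \PCs\ !\ 0$: one first notes $C_0 = C$ (the first annotation returned by $\CompileCmd$ is its initial record), so the first premise is exactly a hypothesis of the theorem, and the second follows from $\CompilerInputReqs\ C\ l\ \nl\ c$ (which bundles $\RegrecStable\ C$) together with the lemma from \autoref{sec:comprec} that $\CompileCmd$ only ever produces stable register records. (iii) The arithmetic premises such as $\pc \le \length\ P_e$ are trivial at $\pc = 0$. (iv) The \emph{inductive} premises: for a subcommand compiled with some initial record $C_i$, one invokes the induction hypothesis, which requires checking $\CompilerInputReqs\ C_i\ \ldots$ --- this is where one uses that the static checks folded into $\CompilerInputReqs$ (notably $\NoUnstableExprs$) are themselves defined by recursion on program structure and hence descend to subprograms, and again the stable-record-output lemma to pass $\RegrecStable\ C_i$ down. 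Crucially, the \texttt{if\_expr} rule demands the relation for \emph{all} configurations obeying $\CompiledCmdConfigConsistent\ C_1$ and $\RegrecStable\ C_1$, and because the induction hypothesis is stated for arbitrary such configurations it supplies exactly that.

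The cases $\Skip$, $\Assign{v}{e}$, $\LockAcq{k}$ and $\LockRel{k}$ are the easy ones: they bottom out in a single instruction or a short fixed block (using \autoref{thm:compile-expr-correct} to know the result register holds the value of $e$ after $\CompileExpr$), with no recursive obligations. The two cases requiring real care are $\Seqg{c_1}{c_2}$ and $\Whileg{e}{c'}$. For sequential composition, $\CompileCmd$ concatenates the code of $c_1$ and $c_2$, so the induction hypothesis for $c_1$ yields a $\RefRelWR$-membership for the \emph{standalone} compiled $c_1$; one then needs the fact --- part of the compiler's sequential-composability machinery, whose label-allocation details are relegated to the appendix --- that such a membership lifts to the same code embedded as a prefix with $c_2$'s code attached, and the $\RefRelWR$ rule for sequential composition is designed to carry precisely this. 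The \texttt{while} case is the genuine obstacle: the compiled loop runs the body and then jumps back to its head, so the $\RefRelWR$ rule for $\Whileg{e}{c'}$ must be self-supporting under loop re-entry --- it references $\RefRelWR$ for the body $c'$ (discharged by the induction hypothesis, suitably generalised over the states reachable at the loop head) and, implicitly, for the whole loop at its entry point again. I expect this to be handled by formulating $\RefRelWR$'s \texttt{while} case so that re-entry lands back in the same rule instance (mirroring the semantic unrolling $\Whileg{e}{c'} \to \ITEg{e}{(\Seqg{c'}{\Whileg{e}{c'}})}{\Stop}$ used by $\AbsStepsWR$, \autoref{sec:abs-steps_wr}), rather than by a separate fixpoint argument; checking that this is watertight and meshes with the guards $\CompiledCmdConfigConsistent$ and $\RegrecStable$ being re-established at each iteration is where the bulk of the effort lies. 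Everything else reduces to bookkeeping about label freshness and the propagation of $\failed = \False$ and the input requirements through the recursion.
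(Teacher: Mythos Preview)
Your proposal is correct and matches the paper's approach: structural induction on the \texttt{While} command~$c$, with each syntactic form discharged by the correspondingly named introduction rule of $\RefRelWR$ (\texttt{skip}, \texttt{seq}, \texttt{if\_expr}, \texttt{while\_expr}, \texttt{assign\_expr}, \texttt{lock\_acq}, \texttt{lock\_rel}). One clarification on the point you flag as uncertain: the \texttt{while} case is not in fact harder here than \texttt{if}, because the \texttt{while\_expr} rule's inductive premise---like \texttt{if\_expr}'s---only asks that the body $c'$ be in $\RefRelWR$ for all configurations meeting the guards, which your generalised induction hypothesis supplies directly; the self-supporting re-entry machinery you anticipate (carried by the \texttt{while\_inner} and \texttt{while\_loop} cases) is exercised in the \emph{preservation} proof (\autoref{thm:R_wr-refinement-abs-steps_wr}), not in this initialisation result.
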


\section{Case study: the \WRCompiler in action} \label{sec:exec-instantiation}

To test the theory, we instantiated it and applied the $\WRCompiler$ to a
\texttt{While}-language model of the Cross Domain Desktop Compositor \cite{Beaumont_MM_16} (CDDC), a non-trivial concurrent program
that facilitates a trusted user's interaction with multiple desktop machines of differing clearance.

The CDDC model to which we applied the compiler is a 2-thread program that
was a precursor to the 3-thread model that was verified using the
\Covern program logic \cite{Murray_SE_18}.\footnote{We leave for future work an adaptation of the refinement theory and $\WRCompiler$ in order to support the \emph{shared data invariants} added by the \Covern logic, required to verify the 3-thread CDDC model.}
Each of the threads of the CDDC program
(together about 150 lines of \texttt{While}) we proved
satisfy the compositional security property
$\ComSecure$ (\autoref{def:com-secure}), using a precursor to
the \Covern logic that yields
CVDNI-witness bisimulations that are non-$\High$-branching.

The resulting compiler is \emph{executable} in Isabelle, meaning
that $\CompileCmd$ can be executed on the \texttt{While} program text
for each of the two threads to obtain their compilations
(together totalling about 250 \texttt{RISC} instructions)
using the Isabelle tactic \textbf{eval}.
The secure compilation theorems (\autoref{sec:sec-comp-theorems}), together with $\StrongLowBisimMM$ preservation and compositionality for $\ComSecure$
(Theorems 5.1, 3.1 of \cite{Murray_SPR_16}, mentioned in \autoref{sec-background})
then allow us to derive that the compiled program
is secure when its threads are run concurrently.

To our knowledge this is
the first proof of source-level information-flow security being carried by a verified compiler to an assembly-level model of a non-trivial concurrent program.

\section{Related work} \label{sec:related-work}


The following three works, like ours, focus on compilation preserving a form of noninterference.

Tedesco et al.~\cite{Tedesco16} present a type-directed compilation scheme that preserves
a \emph{fault-resilient} noninterference property. The compilation scheme
of our \WRCompiler was inspired by theirs. Like our $\ComSecure$ CVDNI
security property that \WRCompiler preserves, Tedesco et al.'s security property
is also \emph{strong bisimulation}-based~\cite{Sabelfeld00}.
But where our property accounts (via mode states) for \emph{controlled interference} by other threads, theirs instead quantifies over all possible interference by the environment with the memory contents.
While this simplifies their task of proving that their security property is preserved under compilation---as it need not require the compiler to preserve the contents of memory---it means their security property cannot capture value-dependent noninterference.
In contrast, our \WRCompiler must obey our $\SecureRefinement$ notion's requirement that memory contents are preserved.\footnote{Consequently, we found and fixed a bug in their expression compiler (acknowledged privately) whereby registers in use were incorrectly reallocated.
Expressions like $v + (v + 1)$ were thus compiled incorrectly to programs yielding $(v + 1) + (v + 1)$ instead, causing a violation of memory contents preservation.}

Barthe et al.~\cite{Barthe18} consider the problem of preserving
\emph{cryptographic constant-time policies}, 
a class of noninterference properties similar to CVDNI in its
explicit consideration for capturing timing-sensitivity.
Barthe et al.~consider a wider scope of common categories of compile-time optimisations (than those performed by our \WRCompiler), and mechanise proofs in Coq that
such optimisations preserve various constant-time security properties.
The sharing of variables in our setting
severely limits the scope of our optimisations, to those that the compiler can
perform knowing that a shared variable is stable because it has been locked.
At present, our \WRCompiler avoids redundant loads during expression
compilation, but other optimisations like loop hoisting and constant folding we are
yet to implement.
Their preservation proof technique, \emph{constant-time simulation} was developed
independently to our original cube-shaped secure refinement definition~\cite{Murray_SPR_16}. Like ours, theirs is also a cube-shaped obligation and makes use of a pacing
function analogous to our $\abssteps$.
Unlike our work here, Barthe et al.~do not give a general method for decomposing their
cube-shaped simulation diagrams.


Neither of the above consider per-thread compositional compilation of concurrent, shared memory programs, nor value-dependent noninterference policies -- the focus of our theory and compiler.
Barthe et al.~\cite{Barthe10} however did aim to preserve noninterference of multithreaded programs by compilation, extending a prior \emph{(security) type-preserving} compilation approach~\cite{Barthe07}.
Their noninterference property however was termination- and timing-\emph{insensitive}, so preventing internal timing leaks relied on the scheduler disallowing certain interleavings between threads.
Also, their type-preservation argument was derived from a big-step semantics preservation property for their compiler.
Here we instead rely on preservation of a small-step semantics (specifically memory contents), which is necessary for us to preserve value-dependent security under compilation, as well as to avoid imposing non-standard requirements on the scheduler.


Other recent works have improved on \emph{fully abstract compilation} (surveyed \cite{PatrignaniAC19}) by mapping out the spectrum \cite{Abate18} or developing specific forms \cite{PatrignaniG19} of \emph{robust property preservation}, concerned with \emph{robustness} of source program (hyper)properties to concrete \emph{adversarial} contexts.
Like Tedesco et al.~\cite{Tedesco16}, these works differ from ours in quantifying over a wider range of hostile interference.
They also focus prominently on changes to data types, which we do not support.
Thus, as a 2-safety hyperproperty quantifying over a lesser range of interference, we expect CVDNI-preservation to be implied by R2HSP (robust 2-hypersafety preservation), but do not expect it to imply any other secure compilation criterion on Abate et al.'s \cite{Abate18} spectrum.

While recently Patrignani and Garg \cite{PatrignaniG19} instantiated their \emph{robustly safe compilation} for shared-memory fork-join concurrent programs, it only preserves (1-)safety properties.
Previously however, Patrignani et al.~\cite{Patrignani17} proved their
\emph{trace-preserving compilation} preserves $k$-safety
hyperproperties~\cite{Clarkson10}, including
noninterference properties.
However, it disallows the removal or addition of trace entries, which would be necessary to change the passage of time as seen in the observable trace events.
Thus it excludes optimisations carried out by our compiler (when it permits changes to pacing regulated by $\abssteps$) and studied by the two other works \cite{Tedesco16,Barthe18} on timing-sensitive security-preserving compilation mentioned above.


Finally, there has been much work on large-scale verified compilation~\cite{Leroy09,Kumar_MNO_14} some of which has also treated compilation of shared-memory concurrent
programs~\cite{Lochbihler18} including taking weak-memory consistency into
account~\cite{Podkopaev19}. Our work here does not consider the effects of weak-memory models.
However, it differs to prior work on verified concurrent compilation, in that it formalises and proves a compiler's ability to use information about the application's locking protocol, to exclude unsafe access to shared variables, and conversely to know when it is safe to allow optimisations that would typically be excluded (see \autoref{sec:comprec}).

\section{Conclusion}

To our knowledge, we have presented the first mechanised verification that a compiler
preserves concurrent, value-dependent noninterference.
To this end, we provided a general
decomposition principle for compositional, secure refinement. Although our compiler is a proof-of-concept targeting simple source and target languages, we nevertheless
applied it to produce a verified assembly-level model of the CDDC \cite{Beaumont_MM_16}, a non-trivial concurrent program.

This work
serves to demonstrate that verified security-preserving compilation for concurrent programs is now
within reach, by augmenting traditional proof obligations for
verified compilation (e.g.\ square-shaped semantics preservation) with those specific
to security (e.g.\ absence of termination- and timing-leaks) as depicted in
\autoref{fig:decomp}. We hope that this work paves the way for future large-scale verified security-preserving compilation efforts.


\bibliography{references}

\ifPreprint
\appendix

\section{Informal descriptions of the cases of refinement relation $\RefRelWR$}\label{sec:R_wr-informal}

\subsection{Base cases}

\begin{itemize}
    \item \texttt{stop}: This case relates a terminated \texttt{While} program with a terminated \texttt{RISC} program (i.e. one where the program counter is at the length of the program text).

    \item \texttt{skip\_nop}: This case relates the \texttt{While} program $\Skip$ with the configuration where the program counter is at the start of the \texttt{RISC} program $[\Nop]$.

    \item \texttt{assign\_expr}: This case relates the expression evaluation part (for the expression $e$) of the \texttt{While} program $\Assign{v}{e}$ with the corresponding part of the \texttt{RISC} program obtained by compiling it with the \WRCompiler.

    \item \texttt{assign\_store}: As for \texttt{assign\_expr}, but for the very last $\Store$ instruction that commits the result of the expression evaluation back to shared memory variable $v$.

        It asserts additionally that $v$ must be stable if lock-governed, and non-lock-governed otherwise.
        This prevents threads from violating the locking discipline (see \autoref{sec:dma-lock-interp-reqs}).

    \item \texttt{lock\_acq}: This case relates $\LockAcq{k}$ with $\RISCLockAcq\ k$.
    \item \texttt{lock\_rel}: This case relates $\LockRel{k}$ with $\RISCLockRel\ k$.
\end{itemize}

\subsection{Inductive cases}

\begin{itemize}
    \item \texttt{seq}: This case relates the \texttt{While} program $\Seqg{c_1}{c_2}$ with the \emph{concatenation} $P_1 @ P_2$ of the \texttt{RISC} programs $P_1$ and $P_2$ that are respectively the outputs of successful consecutive compilation of $c_1$ and $c_2$ by the \WRCompiler.
        It is intended for cases where the \texttt{While} (resp. \texttt{RISC}) program is currently in $c_1$ (resp. $P_1$).

        It is an inductive case of $\RefRelWR$, in that:
        \begin{itemize}
            \item $c_1$ is required to be related by $\RefRelWR$ to the present location in $P_1$.
            \item For all local configurations that obey the $\CompiledCmdConfigConsistent$ requirements,
                $c_2$ is required to be related by $\RefRelWR$ to the first instruction of $P_2$.
                This quantification ensures that $\RefRelWR$ remains closed when execution progresses from the first program to the second program.
        \end{itemize}

        It asserts that $P_1$ and $P_2$ are $\Joinable$ (\autoref{sec:labels-seq}), particularly relevant here to ensure that $P_1$ can only jump to locations within or at the end of itself (i.e. the start of $P_2$).

    \item \texttt{join}: This case relates a \texttt{While} program $c$ with an offset $\pc > \length\ P_1$ into a \texttt{RISC} program $P_1 @ P_2$, assuming the inductive hypothesis that $c$ is related by $\RefRelWR$ with the offset $\pc - \length\ P_1$ into the \texttt{RISC} program $P_2$ alone.

        It is intended primarily for cases where the \texttt{While} (resp. \texttt{RISC}) program is currently in the $c_2$ (resp. $P_2$) of some consecutively compiled $\Seqg{c_1}{c_2}$ (resp. $P_1$ concatenated with $P_2$) but applies more broadly to allow any prepend of dead, unreachable instructions onto the front of a \texttt{RISC} program without breaking $\RefRelWR$.

        It also asserts that $P_1$ and $P_2$ are $\Joinable$, which is important here to ensure that $P_2$ cannot jump back into $P_1$.

    \item \texttt{if\_expr}: This case relates the expression evaluation part (for the expression $e$) of the \texttt{While} program $\ITEg{e}{c_1}{c_2}$ with the corresponding part (including the conditional jump $\Jz$ at the end of expression evaluation) of the \texttt{RISC} program obtained by compiling it with the \WRCompiler.

        It relies on both $c_1$ and $c_2$ being related by $\RefRelWR$ to its compiled \texttt{RISC} counterparts when started with initialisation states judged valid by $\CompiledCmdConfigConsistent$.

    \item \texttt{if\_c1}: This case relates some \texttt{While} program $c_1'$ reachable from $c_1$ with the corresponding part within the $c_1$ part of the \texttt{RISC} program obtained by compiling $\ITEg{e}{c_1}{c_2}$ with the \WRCompiler.

        It relies on $c_1$ being related by $\RefRelWR$ to its compiled \texttt{RISC} counterpart at the appropriate program counter offset.

    \item \texttt{if\_c2}: As for \texttt{if\_c1}, but for $c_2$.

    \item \texttt{epilogue\_step}: This case relates a terminated \texttt{While} program to the silent control flow steps navigating to the end of a \texttt{RISC} program from the end of the ``then'' and ``else'' branches of a compiled if-conditional.

        It works only for epilogue step forms (see \autoref{sec:abs-steps_wr}).

        It is inductive in that it asserts closedness of $\RefRelWR$ over pairwise reachability from the pair currently under consideration -- the only case to do so directly.

    \item \texttt{while\_expr}: This case relates the \texttt{While} program ($\Whileg{e}{c}$)'s initial intermediate step to $\ITEg{e}{(\Seqg{c}{\ \Whileg{e}{c}})}{\Stop}$, and its expression evaluation part, with the expression evaluation and conditional jump of the \texttt{RISC} program that $\Whileg{e}{c}$ was compiled to by $\CompileCmd$.

        It relies on $c$ being related by $\RefRelWR$ to its compiled \texttt{RISC} counterpart when started with initialisation states judged valid by $\CompiledCmdConfigConsistent$.

    \item \texttt{while\_inner}: This case relates some program $\Seqg{c_I}{\Whileg{e}{c}}$ reachable from $\Seqg{c}{\Whileg{e}{c}}$ to the loop body part of the \texttt{RISC} program compiled from $\Whileg{e}{c}$.

        It relies on $c_I$ being related by $\RefRelWR$ to its compiled \texttt{RISC} counterpart at the appropriate program counter offset.

        It also carries around the same reliance on $c$ being related by $\RefRelWR$ to its compiled \texttt{RISC} counterpart for all initialisation states judged valid by $\CompiledCmdConfigConsistent$.

    \item \texttt{while\_loop}: This case handles epilogue steps for the inner loop body program, and the final jump back to the beginning of the While-loop.

        It requires $\RefRelWR$ to relate the terminated \texttt{While} program to the end of the compiled loop body, and furthermore also carries around the same reliance on $c$ being related by $\RefRelWR$ to its compiled \texttt{RISC} counterpart for all initialisation states judged valid by $\CompiledCmdConfigConsistent$.

\end{itemize}

\section{Code listing for the case of $\CompileCmd$ for if-conditionals}\label{sec:compile-cmd-if}

This code listing has been adapted slightly to improve the clarity of the presentation.
$\Phi \sqcap_R \Phi'$ denotes the subset of mappings on which $\Phi$ and $\Phi'$ agree.

\lstset{mathescape}
\begin{lstlisting} [caption={Implementation of $\CompileCmd$ case for $\ITEg{e}{c_1}{c_2}$}]
compile_cmd C l nl (If e c$_1$ c$_2$) =
  (let (P$_e$, r, C$_1$, fail$_e$) = (compile_expr C {} l e);
       (br, nl') = (nl, Suc nl); (ex, nl'') = (nl', Suc nl');
       (P$_1$, l$_1$, nl$_1$, C$_2$, fail$_1$) = (compile_cmd C$_1$ None nl'' c$_1$);
       (P$_2$, l$_2$, nl$_2$, C$_3$, fail$_2$) = (compile_cmd C$_1$ (Some br) nl$_1$ c$_2$);
       (* Pre-compilation check ensures asmrec C$_2$ = asmrec C$_3$ *)
       C' = (regrec C$_2$ $\sqcap_R$ regrec C$_3$, asmrec C$_2$)
    in (P$_e$ @ [((if P$_e$ = [] then l else None, Jz br r), C$_1$)] @
        P$_1$ @ [((l$_1$, Jmp ex), C$_2$)] @ P$_2$ @ [((l$_2$, Nop'), C$_3$)],
        Some ex, nl$_2$, C', fail$_e$ $\lor$ fail$_1$ $\lor$ fail$_2$))
\end{lstlisting}

\section{Proof sketch for decomposition principle soundness result}\label{sec:elided-decomp}

\begin{theorem} [Soundness of $\SecureRefineSimpler$] \label{thm:secure-refinement-simpler-sound-elided}
\begin{align*}
\SecureRefineSimpler\ \mathcal{B}\ \mathcal{R}\ \mathcal{I}\ \abssteps\ \implies \SecureRefinement\ \mathcal{B}\ \mathcal{R}\ \mathcal{I}
\end{align*}
\end{theorem}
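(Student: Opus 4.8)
The plan is to unfold both $\SecureRefineSimpler$ (Def.~\ref{def:secure-refinement-simpler}) and $\SecureRefinement$ (Def.~\ref{def:secure-refinement}) and observe that they share the conjuncts $\PreservesMM\ \mathcal{R}$, $\ClosedOthers\ \mathcal{R}$, $\CgConsistent\ \mathcal{I}$, and $\Sym\ \mathcal{I}$ verbatim; these therefore transfer with no work. The entire content of the theorem is thus to derive $\CouplingInvPres\ \mathcal{B}\ \mathcal{R}\ \mathcal{I}$ (the cube of Fig.~\ref{fig:coupling-inv-pres}) from the remaining hypotheses: the square-shaped $\mathcal{R}$-preservation clause paced by $\abssteps$, and the three side-conditions bundled in $\SimplerRefinementSafe$ (Def.~\ref{def:simpler-refinement-safe}).

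For the main derivation I would fix data matching the premises of $\CouplingInvPres$: configurations $\lc_{1A}, \lc_{1C}$ with $(\lc_{1A},\lc_{1C})\in\mathcal{R}$, a concrete step $\lc_{1C}\EvalStepConc\lc_{1C}'$, and the ``back edge'' data $\lc_{2A},\lc_{2C}$ with $(\lc_{1A},\lc_{2A})\in\mathcal{B}$, $\LCSameMds{\lc_{1A}}{\lc_{2A}}$, $(\lc_{2A},\lc_{2C})\in\mathcal{R}$, $(\lc_{1C},\lc_{2C})\in\mathcal{I}$, $\LCSameMds{\lc_{1C}}{\lc_{2C}}$. First, apply the square clause to $(\lc_{1A},\lc_{1C})\in\mathcal{R}$ and the step $\lc_{1C}\EvalStepConc\lc_{1C}'$ to obtain $\lc_{1A}'$ with $\lc_{1A}\NEvalStepAbs{n}\lc_{1A}'$ and $(\lc_{1A}',\lc_{1C}')\in\mathcal{R}$, where $n = \abssteps\ \lc_{1A}\ \lc_{1C}$; this $n$ is the witness the cube demands. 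Now, assuming the nested hypothesis $\lc_{2A}\NEvalStepAbs{n}\lc_{2A}'$ with $\LCSameMds{\lc_{1A}'}{\lc_{2A}'}$, I must produce $\lc_{2C}'$ with $\lc_{2C}\EvalStepConc\lc_{2C}'$, $\LCSameMds{\lc_{1C}'}{\lc_{2C}'}$, $(\lc_{2A}',\lc_{2C}')\in\mathcal{R}$, and $(\lc_{1C}',\lc_{2C}')\in\mathcal{I}$. The key move is to show $\lc_{2C}$ actually takes a step: by $\SimplerRefinementSafe$ we get $\stops\ \lc_{1C} = \stops\ \lc_{2C}$ and $\abssteps\ \lc_{1A}\ \lc_{1C} = \abssteps\ \lc_{2A}\ \lc_{2C}$; since $\lc_{1C}$ stepped it does not stop, so neither does $\lc_{2C}$, hence $\exists\lc_{2C}'.\ \lc_{2C}\EvalStepConc\lc_{2C}'$ (this requires that ``$\neg\stops$'' coincides with ``a step is available'', which I expect to hold by definition of $\stops$ in the theory). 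Given that step, the square clause applied to $(\lc_{2A},\lc_{2C})\in\mathcal{R}$ yields an abstract run $\lc_{2A}\NEvalStepAbs{n'}\lc_{2A}''$ with $n' = \abssteps\ \lc_{2A}\ \lc_{2C} = n$ and $(\lc_{2A}'',\lc_{2C}')\in\mathcal{R}$; by determinism of $\NEvalStepAbs{n}$ (the semantics $\EvalStep$ is assumed deterministic) $\lc_{2A}'' = \lc_{2A}'$, so $(\lc_{2A}',\lc_{2C}')\in\mathcal{R}$ as required. Finally the third conjunct of $\SimplerRefinementSafe$, applied to the lockstep pair $\lc_{1C}\EvalStepConc\lc_{1C}'$ and $\lc_{2C}\EvalStepConc\lc_{2C}'$, delivers both $(\lc_{1C}',\lc_{2C}')\in\mathcal{I}$ and $\LCSameMds{\lc_{1C}'}{\lc_{2C}'}$, discharging the last two obligations.

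The main obstacle I anticipate is not any single step above but rather the bookkeeping linking ``$\stops$'' to actual steppability and ensuring the $\abssteps$-indexed abstract runs line up: one must confirm that $\neg\,\stops\ \lc_{2C}$ genuinely yields a concrete successor (rather than, say, a stuck-but-not-stopped state), and that the determinism argument is valid even when $n = 0$ (the abstract side stands still, so $\lc_{2A}' = \lc_{2A}$, and one needs $\LCSameMds{\lc_{1A}'}{\lc_{2A}}$ to have been usable --- which it is, since it is a hypothesis of the nested implication). A secondary subtlety is that $\SimplerRefinementSafe$ must be invoked with exactly the modes-equality and $\mathcal{B}$-, $\mathcal{R}$-, $\mathcal{I}$-edges that are in scope; all of these are precisely the premises of the cube's innermost implication, so the instantiation is immediate. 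Given that, the proof is essentially a diagram chase assembling the cube face-by-face from the square plus the three side-conditions, and I would expect the Isabelle proof to be short.
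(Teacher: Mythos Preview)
Your proposal is correct and follows essentially the same approach as the paper's proof: both observe that only $\CouplingInvPres$ remains to be shown, then build the cube by applying the square-shaped refinement clause on both the front and back faces, using the $\stops$-consistency to guarantee $\lc_{2C}$ steps, the $\abssteps$-consistency together with determinism of the abstract semantics to identify the back-face abstract run with the given $\lc_{2A}\NEvalStepAbs{n}\lc_{2A}'$, and the remaining clause of $\SimplerRefinementSafe$ to close $\mathcal{I}$ and maintain modes-equality. Your anticipated subtleties (that $\neg\,\stops$ must entail steppability, and the handling of $n=0$) are exactly the bookkeeping the paper glosses over.
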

\begin{proof}
    The only obligation for $\SecureRefinement$ (\autoref{def:secure-refinement}) not obtained immediately from $\SecureRefineSimpler$ (\autoref{def:secure-refinement-simpler}) is the cube-shaped $\CouplingInvPres$ (\autoref{fig:coupling-inv-pres}).

    The front face of the cube is just ordinary square-shaped refinement preservation (depicted in \autoref{fig:decomp-R}), given to us by $\SecureRefineSimpler$.
    This gives us that a single concrete step from $\lc_{1C}$ is simulated by $n$ abstract steps $\lc_{1A}$, where $n$ is given by $\abssteps$.

    We are then obliged to prove a simulation in the other direction (the back face of the cube), that $n$ abstract steps from all configurations $\lc_{2A}$ related by $\mathcal{B}$ to $\lc_{1A}$ are simulated by some concrete step from $\lc_{2C}$ related by $\mathcal{R}$ to $\lc_{2A}$ and by $\mathcal{I}$ to $\lc_{1C}$.

    Here, we lean on the determinism of the abstract program's evaluation semantics (required by the theory) to flip the direction of simulation, knowing that $n$ abstract steps from $\lc_{2A}$, simulating a single concrete step from $\lc_{2C}$, could only be the very same $n$ abstract steps from $\lc_{2A}$ that we were required to consider.
    This allows us to use once again the square-shaped refinement preservation (\autoref{fig:decomp-R}) given to us by $\SecureRefineSimpler$.

    Consistency of refinement pacing and stopping behaviour (depicted in \autoref{fig:decomp-abs-steps}) given by $\SimplerRefinementSafe$ (\autoref{def:simpler-refinement-safe}) then respectively ensure that $n$ (via $\abssteps$) is the correct number of abstract steps to consider, and that there will indeed be a concrete step from $\lc_{2C}$ to drive the matching simulation step.

    Finally, the remainder of $\SimplerRefinementSafe$ (depicted in \autoref{fig:decomp-I}) discharges the requirement of closedness and modes-equality maintenance of $\mathcal{I}$ under lockstep execution, demanded by the bottom face of the cube.
\end{proof}

\section{More details on the {\mdseries\Covern} \WRCompiler}\label{sec:elided-compiler-section}

\subsection{Register allocation scheme model} \label{sec:reg-alloc}

We model the (user-supplied) register allocation scheme by two functions $reg\_alloc$ and $reg\_alloc\_cached$ on the \emph{register record} $\Phi$ (see \autoref{sec:comprec}) and the set $A$ of registers whose contents are needed to evaluate the current expression.
In order to avoid loading from memory unnecessarily, the compiler may first call $reg\_alloc\_cached\ \Phi\ A\ v$ to identify a register that $\Phi$ records as already containing the variable $v$.
When the compiler needs a fresh register, it will call $reg\_alloc\ \Phi\ A$.
Neither function is allowed to allocate a register in $A$, so the allocator is permitted to fail if it cannot find any suitable register.
As mentioned in \autoref{sec:target-lang} we assume there are enough registers to service the expressions in the source program.
Also, registers typically become available again as expression evaluation is resolved.

\subsection{Label allocation and sequential composability}\label{sec:labels-seq}

For allocating natural numbers to use as labels for \texttt{RISC} instructions the \WRCompiler ensures freshness merely by using the highest number reached so far on a ``next label'' counter ($\nl$ in the invocation example~(\ref{eqn:example-invocation})), incrementing the counter before passing it along to subsequent calls, and outputting the next available unused label on return (as $\nl'$ in the example).

We define two \texttt{RISC} programs $P_1, P_2$ to be $\Joinable$ if they are both:
\begin{itemize}
    \item $\JoinableFwd$: $P_1$ only ever jumps to labels that are either
        \begin{itemize}
            \item labelling an instruction in $P_1$ itself, or
            \item the label of the very first instruction in $P_2$.
        \end{itemize}
    \item $\JoinableBwd$: $P_2$ does not jump to any of the labels of instructions in $P_1$.
\end{itemize}
We prove a lemma that says that two \texttt{RISC} programs that were compiled by the \WRCompiler \emph{consecutively}---in the sense that the relevant outputs from the first call are fed directly into the second call---are $\Joinable$.

\subsection{More detail on $\CompilerInputReqs$ and the \WRCompiler proofs}\label{sec:elided-compiler-proofs}

The first two requirements to the predicate $\CompilerInputReqs\ C\ l\ \nl\ c$ were given in \autoref{sec:comprec}.
Its other two requirements reflect that the terminated \texttt{While} program $\Stop$ has no valid compilation, and that the initial label (if provided) must be valid (see \autoref{sec:labels-seq} for more information on label allocation).

\begin{definition} [Requirements on inputs to $\CompileCmd$]
\[
\begin{aligned}
    \CompilerInputReqs\ C\ l\ \nl\ c\ \equiv \ \ 
    & c \ne \Stop\ \land \ 
      (\forall x. \ l = \Some\ x \longrightarrow x < \nl)\ \land \\
    & \NoUnstableExprs\ c\ C\ \land \ 
      \RegrecStable\ C
\end{aligned}
\]
\end{definition}

These input conditions give us enough information to prove that every instruction of a $CompRec$-annotated \texttt{RISC} program output by a successful run of $\CompileCmd$ is annotated by a stable register record, and that the output $CompRec$'s register record is also stable:

\begin{lemma} [Successful compilations output only stable register records] \label{thm:compiled-cmd-regrec-stable}
\[
  \mprset{vskip=0.5ex}
  \inferrule{
    (\PCs, l', \nl', C', \False) = \CompileCmd\ C\ l\ \nl\ c \and
      \CompilerInputReqs\ C\ l\ \nl\ c
  } {
      (\forall \pc < \length\ \PCs.\ \RegrecStable\ (\mapsnd\ \PCs\ !\ \pc))\,\land\,
      \RegrecStable\ C'
  }
\]
\end{lemma}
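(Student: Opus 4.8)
The plan is to prove both conjuncts simultaneously by structural induction on the \texttt{While} command $c$, following the primitive-recursive structure of $\CompileCmd$. The induction statement must be generalised over the initial compilation record $C$, the entry label $l$, and the next label $\nl$, since these vary between recursive invocations. The key auxiliary result, proved separately, is a companion lemma for $\CompileExpr$: if $\RegrecStable\ C$ holds and every variable mentioned in $e$ is $\VarStable\ (\Asmrec\ C)\ v$ (which is exactly what $\NoUnstableExprs$ guarantees for the expressions occurring in $c$), then $\CompileExpr\ C\ A\ l\ e = (\PCs, r, C_1, \False)$ satisfies $\RegrecStable\ C_1$ and $\RegrecStable\ (\mapsnd\ \PCs\ !\ \pc)$ for all $\pc < \length\ \PCs$. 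This holds because $\CompileExpr$ never modifies the assumption record -- so the stable-variable set is invariant throughout expression compilation -- and every entry it adds to the register record maps some register to a subexpression of $e$ (culminating, by \autoref{thm:compile-expr-correct}, in an entry for $e$ itself), all of which mention only variables of $e$, hence only stable variables. (Here the register-reallocation bug noted in the paper matters: once fixed, reallocating a register only rewrites its entry to another expression over the same stable variables.)

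For the base cases: $\Skip$ compiles to a single $\Nop$ annotated with the unchanged $C$ and returns $C$, so $\RegrecStable\ C$ from $\CompilerInputReqs$ suffices. $\LockAcq{k}$ and $\LockRel{k}$ are similar but adjust the assumption record; growing it only enlarges the stable set, preserving $\RegrecStable$, while $\NoUnstableExprs$ is precisely the condition that forbids a $\LockRel{k}$ from shrinking the stable set in a way that would strand a surviving register-record entry. The case $\Assign{v}{e}$ reduces to the $\CompileExpr$ lemma for the prefix $P_e$, with the trailing $\Store\ v\ r$ instruction annotated by the (stable) output record $C_1$ of expression compilation; any register-record entry affected by the store preserves $\RegrecStable$ (invalidation only shrinks the domain, and any entry repointed to $v$ arises only when the compiler has checked $v$ stable, as in the \texttt{assign\_store} case of $\RefRelWR$).

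For the inductive cases the pattern is uniform: decompose the $\CompileCmd$ call into its sub-calls (on $e$ via $\CompileExpr$, and on $c_1$, $c_2$, or the loop body via $\CompileCmd$), apply the $\CompileExpr$ lemma to the expression part, and apply the induction hypothesis to each $\CompileCmd$ sub-call. To invoke each IH we must re-establish $\CompilerInputReqs$ for the sub-call; its non-trivial conjunct is $\NoUnstableExprs$ for the sub-command with respect to the \emph{updated} compilation record threaded in from the preceding sub-call, which follows by unfolding the primitive-recursive definition of $\NoUnstableExprs$ on the compound command and checking it matches how $\CompileCmd$ threads the assumption record. Collecting results: for $\Seqg{c_1}{c_2}$ the output annotations are just those of $P_1$ and $P_2$ plus the returned record, all covered. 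For $\ITEg{e}{c_1}{c_2}$ there are additionally the $\Jz$, $\Jmp$, $\Nop$ instructions annotated by $C_1$, $C_2$, $C_3$ (all shown stable), and the returned record is $C' = (\Regrec\ C_2 \sqcap_R \Regrec\ C_3,\ \Asmrec\ C_2)$; since $\sqcap_R$ yields a sub-map of $\Regrec\ C_2$ (so its range is contained in that of $\Regrec\ C_2$) and $\NoUnstableExprs$ forces $\Asmrec\ C_2 = \Asmrec\ C_3$, $\RegrecStable\ C_2$ immediately gives $\RegrecStable\ C'$. The $\Whileg{e}{c}$ case is analogous, using the ``while-loops restore the original mode state on termination'' conjunct of $\NoUnstableExprs$ to reconcile the assumption record at the loop-back edge and in the returned record (where the register record is cleared, hence trivially stable).

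I expect the main obstacle to be the bookkeeping around $\NoUnstableExprs$: making precise -- and keeping in lockstep -- exactly how the static check tracks the evolution of the assumption record through lock operations, branches, and loops, and how $\CompileCmd$'s actual threading of the compilation record matches it, so that $\CompilerInputReqs$ genuinely propagates to every recursive call. The subordinate obstacles are the $\CompileExpr$ lemma (needing the fixed-assumption-record invariant together with the post-bug-fix reallocation behaviour) and the $\sqcap_R$ step in the \texttt{if} case, which is a routine sub-map argument once $\Asmrec\ C_2 = \Asmrec\ C_3$ is in hand.
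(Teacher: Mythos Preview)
Your overall structure---structural induction on~$c$, generalised over $C,l,\nl$, with a companion lemma for $\CompileExpr$ proved by induction on expressions---is exactly the paper's approach, and your treatment of the inductive cases ($\Seqg{}{}$, $\ITEg{}{}{}$, $\Whileg{}{}$) and of the $\sqcap_R$ merge is sound.

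There is, however, a genuine error in your $\LockRel{k}$ case. You claim that $\NoUnstableExprs$ ``is precisely the condition that forbids a $\LockRel{k}$ from shrinking the stable set in a way that would strand a surviving register-record entry.'' It is not. $\NoUnstableExprs$ is a static check on the \emph{source program's} variable accesses relative to the locking discipline; it says nothing about the contents of the compiler's register record at the point $\LockRel{k}$ is reached. In a program that acquires~$k$, loads a $k$-governed variable~$v$ into a register, and then releases~$k$, the register record just before compiling $\LockRel{k}$ will legitimately contain an entry mentioning~$v$, and $\NoUnstableExprs$ raises no objection. If the compiler merely shrank the assumption record and left the register record untouched, $\RegrecStable$ would fail. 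The paper's proof instead relies on the implementation detail that $\CompileCmd$, when handling $\LockRel{k}$, \emph{flushes} every register-record entry that mentions a variable made unstable by the release; stability of the output record then follows because flushing only removes entries. You need to invoke this flushing behaviour explicitly for the $\LockRel$ case to go through. (You already do exactly this for the $\Whileg{}{}$ case, where you correctly note the register record is cleared.)
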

\begin{proof}
    By induction on the structure of the \texttt{While} language program $c$, making reference to the implementation of $\CompileCmd$.
    For cases that must compile expressions, we furthermore prove and make use of a lemma by induction on the structure of expressions, making reference to the implementation of the expression compiler function $\CompileExpr$ called by $\CompileCmd$.
    In essence, we prove that (sub)expressions that appear in register records must be stable, for two reasons.
    Firstly, they are always only ever subexpressions over variables that must have been stable in the input program when their contents were first loaded into registers.
    Furthermore, when compiling an $\LockRel$, the \WRCompiler will always flush all register records that make reference to any variables that the $\LockRel$ makes unstable.
\end{proof}

Before proceeding, we name the parts of $\CompiledCmdConfigConsistent$ more explicitly:

\begin{definition} [Configuration consistency requirements for compiled commands] \label{def:compiled-cmd-config-consistent-elided}
\begin{align*}
    & \CompiledCmdConfigConsistent\ C\ \regs\ \mds\ \mem\ \equiv \\
    & \RegrecMemConsistent\ (\Regrec\ C)\ \regs\ \mem\ \land \ 
      \AsmrecMdsConsistent\ (\Asmrec\ C)\ \mds
\end{align*}
\end{definition}

\begin{definition} [Consistency between a register record, register bank, and shared memory] \label{def:regrec-mem-consistent-elided}
\[
    \RegrecMemConsistent\ \Phi\ \regs\ \mem\ \equiv \ 
    \forall r\ e.\ \Phi\ r = \Some\ e \longrightarrow regs\ r = \mathsf{ev_{exp}}\ \mem\ e
\]
\end{definition}

\begin{definition} [Consistency between an assumption record and a mode state] \label{def:asmrec-mds-consistent-elided}
 \[
     \AsmrecMdsConsistent\ \mathcal{S}\ \mds\ \equiv \ 
     \mathcal{S} = (\mds\ \mathbf{AsmNoW},\ \mds\ \mathbf{AsmNoRW})
 \]
\end{definition}

\begin{lemma} [$\RefRelWR$ preserves modes and memory] \label{thm:preserves-modes-mem-R_wr-elided}
  $\PreservesMM\ \RefRelWR$
\end{lemma}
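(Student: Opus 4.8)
The plan is to prove \autoref{thm:preserves-modes-mem-R_wr-elided} by rule induction on the inductive definition of $\RefRelWR$, establishing $\LCSameMdsMem{\lc_w}{\lc_r}$ for the configurations related by each introduction rule. The observation that makes this tractable is that every rule of $\RefRelWR$ is set up so that, in its conclusion, the abstract \texttt{While} configuration and concrete \texttt{RISC} configuration it relates carry the \emph{same} mode-state and memory metavariables $\mds$ and $\mem$: this is visible in the \texttt{if\_expr} rule displayed above, whose conclusion pairs $\LocalConfWhile{c}{\mds}{\mem}$ with $\LocalConfRISC{\pc}{\mapfst\ \PCs}{\regs}{\mds}{\mem}$. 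Consequently, for all of the base cases (\texttt{stop}, \texttt{skip\_nop}, \texttt{assign\_expr}, \texttt{assign\_store}, \texttt{lock\_acq}, \texttt{lock\_rel}) and for the structurally inductive cases that relate sub-programs (\texttt{seq}, \texttt{join}, \texttt{if\_expr}, \texttt{if\_c1}, \texttt{if\_c2}, \texttt{while\_expr}, \texttt{while\_inner}, \texttt{while\_loop}), the required equality of modes and memory follows immediately by inspection of the rule shape, without appeal to the induction hypothesis.

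The one rule requiring slightly more care is \texttt{epilogue\_step}, which (as noted in \autoref{sec:R_wr-informal}) is the only rule of $\RefRelWR$ that closes the relation directly under pairwise reachability: it derives $(\lc_w, \lc_r')$ from a previously related pair $(\lc_w, \lc_r)$ where $\lc_r \EvalStepRISC \lc_r'$ is an \emph{epilogue step} --- a control-flow $\Jmp$ or $\Nop$ navigating to the end of a compiled if-conditional (and the analogous fragment appears in the loop-back handling of \texttt{while\_loop}). Here I would invoke the induction hypothesis to obtain $\LCSameMdsMem{\lc_w}{\lc_r}$, then observe via a small lemma that executing a $\Jmp$ or $\Nop$ instruction in the \texttt{RISC} semantics affects only the program counter, leaving the mode state and shared memory untouched; hence $\lc_r'$ agrees with $\lc_r$, and therefore with $\lc_w$, on both $\mds$ and $\mem$. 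This discharges the case transitively.

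I expect the main (and essentially only) obstacle to be bookkeeping rather than mathematical depth: faithfully enumerating all the rules of $\RefRelWR$ and checking, for each, that the two related configurations are instantiated with identical $\mds$ and $\mem$ --- which in the Isabelle development is carried out as a single rule induction. Notably, no properties of the compiler, of variable stability, or of $\CompiledCmdConfigConsistent$ are needed for this lemma; those enter only when proving the refinement-preservation obligation \autoref{thm:R_wr-refinement-abs-steps_wr}.
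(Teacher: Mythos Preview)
Your proposal is correct and matches the paper's own proof: both proceed by rule induction on $\RefRelWR$, observing that for every case the required $\LCSameMdsMemOp$ is either immediate from the rule's conclusion (the shared $\mds$, $\mem$ metavariables you point out) or obtained from the induction hypothesis. Your treatment of \texttt{epilogue\_step}---using the IH together with the fact that $\Jmp$/$\Nop$ leave modes and memory unchanged---is a reasonable elaboration of what the paper compresses into ``obtainable from the inductive hypothesis''; note that depending on whether the rule is phrased as forward or backward closure under the epilogue step, the direction of the transitivity argument may flip, but the same observation about $\Jmp$/$\Nop$ carries it either way.
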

\begin{proof}
By induction on the structure of $\RefRelWR$.
For all cases of $(\lc_{w}, \lc_{r}) \in \RefRelWR$, $\LCSameMdsMem{\lc_w}{\lc_r}$ is either asserted directly by the guards or obtainable from the inductive hypothesis.
\end{proof}

\begin{lemma} [$\RefRelWR$ is closed under changes by others] \label{thm:closed-others-R_wr-elided}
  $\ClosedOthers\ \RefRelWR$
\end{lemma}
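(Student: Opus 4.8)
The plan is to proceed by induction on the structure of the derivation of $(\LocalConfAbs{\tps_A}{\mds}{\mem}, \LocalConfConc{\tps_C}{\mds}{\mem}) \in \RefRelWR$, re-establishing membership for the memory $\mem'$ obtained from $\mem$ by environmental interference restricted (per the hypothesis of $\ClosedOthers$, \autoref{def:closed-others}) to $\Writable$ variables. The key structural observation is that such interference leaves unchanged both the mode state $\mds$ and the thread-private states --- in particular the \texttt{RISC} register bank $\regs$ resides in thread-private state and so is fixed. Consequently, in every case of $\RefRelWR$ the side-conditions referring to $\CompileCmd$/$\CompileExpr$ outputs, the label-counter bookkeeping, the $\Joinable$ facts, the $\pc$ bounds, the stability-of-lock-governed-variables check in \texttt{assign\_store}, and the $\RegrecStable$ guards all depend only on program text, labels, and mode state, and so transfer verbatim; the only guard that is not manifestly memory-independent is $\CompiledCmdConfigConsistent$.

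For that guard, its $\AsmrecMdsConsistent$ component holds immediately because $\mds$ is unchanged, so the work reduces to re-establishing $\RegrecMemConsistent\ (\Regrec\ C)\ \regs\ \mem'$ from $\RegrecMemConsistent\ (\Regrec\ C)\ \regs\ \mem$. I would factor out the following auxiliary fact: whenever $\RegrecStable\ C$ and $\AsmrecMdsConsistent\ (\Asmrec\ C)\ \mds$ hold and $\mem'$ agrees with $\mem$ on every non-$\Writable$ variable, then $\mathsf{ev_{exp}}\ \mem\ e = \mathsf{ev_{exp}}\ \mem'\ e$ for every $e \in \defined{ran}\ (\Regrec\ C)$. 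Its proof is the characteristic stability chain: $\RegrecStable\ C$ gives $\VarStable\ (\Asmrec\ C)\ v$ for every $v \in \mathsf{exp{\text -}vars}\ e$, hence $v \in \fst\ (\Asmrec\ C) \cup \snd\ (\Asmrec\ C)$, which by $\AsmrecMdsConsistent$ is exactly $\mds\ \mathbf{AsmNoW} \cup \mds\ \mathbf{AsmNoRW}$; thus $v$ is not $\Writable$ by definition, forcing $\mem\ v = \mem'\ v$, and the two evaluations of $e$ coincide because expression evaluation inspects only the mentioned variables. (Note the classification half of the $\ClosedOthers$ hypothesis, concerning $\dmaFunc$, is never used, since $\RefRelWR$ nowhere inspects $\dmaFunc$.)

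In the inductive cases (\texttt{seq}, \texttt{join}, \texttt{if\_c1}, \texttt{if\_c2}, \texttt{epilogue\_step}, \texttt{while\_inner}, \texttt{while\_loop}), the direct recursive references to $\RefRelWR$ --- e.g.\ the demand in \texttt{seq} that $c_1$ be related to the current location of $P_1$ --- are closed by the induction hypothesis instantiated with the \emph{same} interference (legitimate because $\mds$ is common to the sub-configuration), while the recursive references that are universally quantified over all configurations satisfying $\CompiledCmdConfigConsistent$ and $\RegrecStable$ for a fixed compilation record (such as the hypotheses for the not-yet-entered branches $P_1$, $P_2$ in \texttt{if\_expr}) are memory-independent and carry over directly.

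I expect the main obstacle to be organisational rather than conceptual: uniformly discharging the $\CompiledCmdConfigConsistent$-preservation obligation across all (roughly a dozen) cases of $\RefRelWR$, and getting the stability chain exactly right --- in particular the treatment of control variables via $\Cvars$ inside $\VarStable$, which must also be covered by the $\Writable$-exclusion argument. Once the auxiliary fact on stable register records is available, the remaining case analysis is routine, mirroring the proof of \autoref{thm:preserves-modes-mem-R_wr-elided}.
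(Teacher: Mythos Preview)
Your proposal is correct and follows essentially the same approach as the paper: induction on the structure of $\RefRelWR$, observing that the only memory-sensitive guard is $\RegrecMemConsistent$, which is preserved because $\RegrecStable$ ensures every recorded expression mentions only non-$\Writable$ variables. The paper's sketch is terser but identical in substance; your spelled-out stability chain and the observation that the $\dmaFunc$ clause of $\ClosedOthers$ is never used are accurate elaborations.
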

\begin{proof}
    By induction on the structure of $\RefRelWR$.
    Changes by others (\autoref{def:closed-others}) only modify $\Writable$ variables the same way for both configurations, so preservation of $\LCSameMdsMemOp$ is immediate.
    Also, $\RegrecMemConsistent$ is unaffected because $\CompileCmd$ only creates $\RegrecStable$ records (referring to no $\Writable$ variables).
    No other $\RefRelWR$ guards mention shared memory.
\end{proof}

\begin{lemma} [Successfully compiled programs maintain config consistency requirements]\label{thm:compiled-cmd-eval-maintains-comprec-consistency-elided}
\[
  \mprset{vskip=0.5ex}
  \inferrule {
    (\PCs, l', \nl', C', \failed) = \CompileCmd\ C\ l\ \nl\ c \and
      \CompilerInputReqs\ C\ l\ \nl\ c \and \\
    \failed = \False \and
      \pc < \length\ \PCs \and
      P = \mapfst\ \PCs \and
      \Cs = \mapsnd\ \PCs \and \\
    \CompiledCmdConfigConsistent\ (\Cs\ !\ \pc)\ \regs\ \mds\ \mem \and \\
    \LocalConfRISC{\pc}{P}{\regs}{\mds}{\mem}
      \EvalStepRISC
      \LocalConfRISC{\pc'}{P}{\regs'}{\mds'}{\mem'})
  } {
    \CompiledCmdConfigConsistent\ 
      (\mathtt{if}\ \pc' < \length\:P\ 
       \mathtt{then}\ (\Cs\:!\:\pc')\ 
       \mathtt{else}\ C')\:
      \regs'\:\mds'\:\mem'
  }
\]
\end{lemma}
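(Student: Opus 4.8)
The plan is to prove this by structural induction on the source \texttt{While} program $c$, following the recursive definition of $\CompileCmd$, and to split the goal into its two independent halves by unfolding $\CompiledCmdConfigConsistent$ (\autoref{def:compiled-cmd-config-consistent-elided}): the \emph{mode-state half} $\AsmrecMdsConsistent\ (\Asmrec\ \cdot)\ \mds'$ and the \emph{register half} $\RegrecMemConsistent\ (\Regrec\ \cdot)\ \regs'\ \mem'$. The mode-state half is straightforward: the only \texttt{RISC} instructions that alter $\mds$ are $\RISCLockAcq\ k$ and $\RISCLockRel\ k$, which $\CompileCmd$ emits exactly when compiling $\LockAcq{k}$ and $\LockRel{k}$, updating the assumption record in precisely the way the locking semantics (driven by $\lockinterp$) updates the $\mathbf{AsmNoW}$/$\mathbf{AsmNoRW}$ components of $\mds$; every other instruction leaves both $\mds$ and the annotating assumption record unchanged from one position to the next (or, at a program boundary, agreeing with $\Asmrec\ C'$ by the emitted layout). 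So the substance of the proof is the register half.

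For the compound cases ($\Seqg{c_1}{c_2}$, $\ITEg{e}{c_1}{c_2}$, $\Whileg{e}{c}$), inspection of $\CompileCmd$ (e.g.\ the listing in \autoref{sec:compile-cmd-if}) shows the output $\PCs$ is a literal concatenation of the annotated subprograms with a few ``glue'' instructions ($\Jz$, $\Jmp$, $\Nop$) carrying the subcommands' input/output records. A single \texttt{RISC} step from $\pc$ then splits into two situations. If $\pc$ lies in the range of a subprogram $P_i$ and the step target is also in $P_i$, then since $\mapsnd\ \PCs$ restricted to that range is exactly $\mapsnd\ P_i$ and $\mem,\mds$ are manipulated identically, we close the goal by the induction hypothesis for the corresponding sub-call to $\CompileCmd$ — after checking its side-condition $\CompilerInputReqs$ holds, which follows because $\NoUnstableExprs$ is defined recursively and $\RegrecStable$ of the sub-call's initial record follows from \autoref{thm:compiled-cmd-regrec-stable} applied to the preceding expression compilation. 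Otherwise the step exits $P_i$ or executes a glue instruction, and we are in one of the primitive cases below, using additionally that the record laid down at the exit point (or at $\length\ P$, namely $C'$) is the one the layout prescribes (e.g.\ $C_1$ after $P_e$, which is also the $\Jz$ target record, and $C'$'s register record being built from the branch records).

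The primitive cases are: (i) pure control flow ($\Jmp\ l$, $\Jz\ l\ r$, $\Nop$) leaves $\regs$ and $\mem$ untouched, so we only need the record at the landing instruction to carry the same register record as the one at $\pc$ — again a fact about the emitted layout; (ii) $\MoveK$, $\MoveR$, $\Op\ \oplus$, and $\Load\ r\ v$ each overwrite a single register $r$, and since register records map registers to expressions over \emph{shared variables only}, all other entries are unaffected, while for the entry for $r$ we use that the source record was consistent (so the recorded expressions for the operand registers evaluate under $\mem$ to their current values, for $\Op$/$\MoveR$/$\MoveK$) or that $\regs'\ r = \mem\ v = \mathsf{ev_{exp}}\ \mem\ v$ (for $\Load$), matching the expression that $\CompileExpr$ records for $r$; this needs an auxiliary induction on the structure of $e$ tracking $\CompileExpr$ (including its cached-register allocation), in the spirit of \autoref{thm:compile-expr-correct} and the expression lemma underlying \autoref{thm:compiled-cmd-regrec-stable}.

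The main obstacle is the $\Store\ v\ r$ case, the only instruction that mutates shared memory: writing $\regs\ r$ to $v$ can invalidate any register-record entry whose expression mentions $v$ — exactly the aliasing that caused the bug noted in \autoref{sec:related-work} for expressions like $v+(v+1)$. The argument rests on the fact that when $\CompileCmd$ emits the committing $\Store$ for $\Assign{v}{e}$ it flushes from the register record every entry mentioning $v$ (mirroring the flush performed on $\LockRel$), so that each surviving entry's expression avoids $v$ and hence evaluates identically under $\mem$ and $\mem' = \mem(v := \regs\ r)$, preserving $\RegrecMemConsistent$; and the register written can be recorded as holding $v$, consistently, because $\mem'\ v = \regs\ r = \mathsf{ev_{exp}}\ \mem\ e = \regs'\ r$ by the source-record consistency together with \autoref{thm:compile-expr-correct}. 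Establishing that this flushing is both performed by the implementation and sufficient for $\RegrecMemConsistent$ — together with the list-surgery relating the whole compilation's annotations to those of its subprograms across the induction — is where essentially all the real effort lies; the remaining obligations are routine bookkeeping once $\CompiledCmdConfigConsistent$ has been unfolded.
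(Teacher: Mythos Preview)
Your overall approach matches the paper's: structural induction on~$c$, splitting $\CompiledCmdConfigConsistent$ into its two conjuncts and handling each separately, with an auxiliary induction on expression structure for the register-record half. Two discrepancies are worth flagging.

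First, the paper locates the main difficulty not in $\Store$ but in the control-flow case analysis---accounting for (and ruling out) which $\pc'$ are actually reachable from each $\pc$ in the presence of jumps. Relatedly, your treatment of control flow is slightly too strong: you claim that for $\Jmp$/$\Jz$/$\Nop$ it suffices that the landing instruction carry \emph{the same} register record as at $\pc$. This fails for the back-edge of a compiled $\Whileg{e}{c'}$, where the record at the $\Jmp$ (the body's output record) is in general not the record at the loop head; the paper notes that the compiler conservatively \emph{flushes} the register record there (``to force evaluation of the loop condition expression''), so consistency at the landing holds vacuously. Likewise, exiting a compiled $\kw{if}$ via $\Jmp\ \ex$ lands on the merged record $\Regrec\ C_2 \sqcap_R \Regrec\ C_3$, not on $C_2$. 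The correct claim is a monotonicity one---a sub-record of a consistent record is consistent---not identity of records.

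Second, your aside about the bug in the related-work footnote is a misattribution: that bug is about incorrect register \emph{reallocation} during expression compilation (a live register being clobbered while evaluating a sibling sub-expression, turning $v+(v+1)$ into $(v+1)+(v+1)$), not about $\Store$ invalidating record entries that mention the assigned variable. Your $\Store$ argument itself is fine; it just is not what that footnote refers to.
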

\begin{proof}
    We in fact prove it separately for $\RegrecMemConsistent$ and $\AsmrecMdsConsistent$, in both cases by induction on the structure of the \texttt{While} program $c$.
    In each case, we use the simplifiers for the $\CompileCmd$ implementation to yield the corresponding \texttt{RISC} program fragment in question, and then prove the lemma for each of the possible locations of $\pc$ in the compiled program.
    For both proofs, there is some trickiness in accounting for (and ruling out) which destination $\pc'$ must be considered for each of these cases of $\pc$, particularly for those \texttt{While} programs that compile to \texttt{RISC} programs that may have jumps in them.

    Control flow trickiness aside, the intuition for $\RegrecMemConsistent$ is that it tests the correctness of the compilation of expressions, and so for this we must prove a sub-lemma for maintenance of $\CompiledCmdConfigConsistent$ by induction on the structure of expressions $e$ that are encountered in the \texttt{While} programs $\ITEg{e}{c_1}{c_2},\ \Whileg{e}{c'},\ \Assign{v}{e}$.
    Additionally, $\LockRel{}$ flushes register record entries mentioning variables that are to become unstable, and $\Whileg{e}{c'}$ conservatively flushes entries to force evaluation of the loop condition expression.
    This is safe trivially because flushing entries can never make a consistent register record inconsistent.
    The rest of the cases for $c$ are straightforward because they do not touch the register record.

    Then for $\AsmrecMdsConsistent$, the substantial part of the proof is as a test of the correctness of the compiler's bookkeeping of assumptions being consistent with the semantics of $\LockAcq{}$ and $\LockRel{}$.
    The other cases for $c$ do not touch the mode state.
\end{proof}

\begin{lemma} [Correctness of the expression compiler] \label{thm:compile-expr-correct-elided}
\[
    (\PCs, r, C', \False) = \CompileExpr\ C\ A\ l\ e\ 
    \implies (\Regrec\ C')\ r = \Some\ e
\]
\end{lemma}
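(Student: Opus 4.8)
The plan is to prove this by structural induction on the expression $e$, unfolding the definition of $\CompileExpr$ in each case and appealing to the behaviour of the register-allocation functions $reg\_alloc$ and $reg\_alloc\_cached$ (see Appendix \ref{sec:reg-alloc}).

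For the base cases: when $e = n$ is a constant, a successful $\CompileExpr$ call allocates a fresh register $r$ via $reg\_alloc$, emits $\MoveK\ r\ n$, and extends the register record with the entry $r \mapsto n$; hence $(\Regrec\ C')\ r = \Some\ n$ immediately. When $e = v$ is a variable, either the compiler finds via $reg\_alloc\_cached$ an existing register $r$ with $(\Regrec\ C)\ r = \Some\ v$ and leaves the record otherwise unchanged, or it allocates a fresh $r$, emits $\Load\ r\ v$, and extends the record with $r \mapsto v$; in both subcases $(\Regrec\ C')\ r = \Some\ v$, using $\failed = \False$ to discharge the case where the allocator would otherwise fail.

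For the inductive case $e = e_1 \oplus e_2$: $\CompileExpr$ compiles $e_1$ from $C$ (obtaining register $r_1$ and record $C_1$), then compiles $e_2$ from $C_1$ with $r_1$ added to the in-use set $A$ (obtaining $r_2$ and $C_2$), allocates the result register $r$, emits the $\Op$ instruction over $r_1$ and $r_2$, and extends $\Regrec\ C_2$ with $r \mapsto e_1 \oplus e_2$. The two induction hypotheses give $(\Regrec\ C_1)\ r_1 = \Some\ e_1$ and $(\Regrec\ C_2)\ r_2 = \Some\ e_2$, so the goal follows from the final record extension \emph{provided} the combined entry really is $e_1 \oplus e_2$ rather than a stale expression.

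The main obstacle is exactly that last proviso: I must show that compiling $e_2$ does not clobber the entry $r_1 \mapsto e_1$ produced while compiling $e_1$. I would establish, as a separate lemma proved by the same structural induction, that $\CompileExpr\ C\ A\ l\ e$ with $\failed = \False$ preserves every register-record entry for registers in $A$ — this holds because neither $reg\_alloc$ nor $reg\_alloc\_cached$ is permitted to return a register in $A$ (precisely the discipline whose violation is the $v + (v+1)$ bug noted in the footnote in \autoref{sec:related-work}). Applying this preservation lemma where $e_2$ is compiled with $r_1 \in A$ then gives $(\Regrec\ C_2)\ r_1 = \Some\ e_1$ intact, so combining it with $(\Regrec\ C_2)\ r_2 = \Some\ e_2$ yields exactly $r \mapsto e_1 \oplus e_2$. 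The remaining bookkeeping — that $r_1 \ne r_2$, that the freshly chosen $r$ is disjoint from what it needs to be, and that the cached-lookup subcase for variables never returns a register still in use elsewhere — is discharged from the allocator side-conditions along the way.
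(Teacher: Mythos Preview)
Your proposal is correct and follows the same approach as the paper's proof: structural induction on~$e$, unfolding the implementation of $\CompileExpr$, and appealing to the correctness assumptions on the register-allocation scheme. The paper's sketch is terser and does not spell out the auxiliary preservation fact you isolate for the $e_1 \oplus e_2$ case, but your more detailed account (including the connection to the allocator bug noted in \autoref{sec:related-work}) is consistent with what the paper relies on.
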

\begin{proof}
    By induction on the structure of expressions $e$, using the simplification rules for the implementation of $\CompileExpr$, and also relying on assumptions of correctness of the register allocation scheme supplied by the instantiator of the theory.
\end{proof}

\begin{lemma} [$\RefRelWR$ is a refinement paced by $\AbsStepsWR$] \label{thm:R_wr-refinement-abs-steps_wr-elided}
\begin{align*}
    & \forall \lc_w\ \lc_r.\ (\lc_w, \lc_r) \in \RefRelWR \longrightarrow \ 
      (\forall \lc_r'.\ \lc_r \EvalStepRISC \lc_r' \longrightarrow \\
    & \qquad (\exists \lc_w'.\ \lc_w \NEvalStepWhile{(\AbsStepsWR\ \lc_w\ \lc_r)} \lc_w'\ \land \ (\lc_w', \lc_r') \in \RefRelWR))
\end{align*}
\end{lemma}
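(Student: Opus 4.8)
\end{lemma}

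\begin{proof}[Proof plan]
The plan is to proceed by rule induction on the derivation of $(\lc_w, \lc_r) \in \RefRelWR$, and within each case to perform an inner case analysis on the single \texttt{RISC} step $\lc_r \EvalStepRISC \lc_r'$, which in every case of $\RefRelWR$ is tightly constrained by the shape of the program text and the value of $\pc$ recorded there. For each resulting \texttt{RISC} step I read off $\AbsStepsWR\ \lc_w\ \lc_r$ from the form of the instruction about to execute together with the \textsf{leftmost-cmd} of $\lc_w$'s \texttt{While} program (as specified in \autoref{sec:abs-steps_wr}), exhibit the matching $(\AbsStepsWR\ \lc_w\ \lc_r)$-fold \texttt{While} execution, and re-establish $(\lc_w', \lc_r') \in \RefRelWR$. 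The glue throughout is provided by the guards carried by the rules of $\RefRelWR$---chiefly $\CompiledCmdConfigConsistent$ and $\RegrecStable$---whose preservation under \texttt{RISC} execution is exactly \autoref{thm:compiled-cmd-eval-maintains-comprec-consistency-elided} and \autoref{thm:compiled-cmd-regrec-stable}; these are what let the post-step configurations be reconnected by the appropriate rule.

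For the lockstep base cases (\texttt{skip\_nop}, \texttt{lock\_acq}, \texttt{lock\_rel}) and the terminal \texttt{stop} case, $\AbsStepsWR$ returns $1$ and one \texttt{While} step matches the one \texttt{RISC} step, landing in \texttt{stop} or in whatever directly follows. The interesting base cases are expression evaluation, \texttt{assign\_expr} and \texttt{assign\_store}: while $\pc$ ranges over the body $P_e$ emitted by $\CompileExpr$, each step is a $\Load$/$\Op$/$\MoveK$ for which $\AbsStepsWR$ returns $0$, so the \texttt{While} program stands still, and I must show the register record is extended in a way that keeps $\CompiledCmdConfigConsistent$ true and thus remains inside \texttt{assign\_expr}. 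Then at the final $\Store$ (case \texttt{assign\_store}), $\AbsStepsWR$ returns $1$, the \texttt{While} program performs $\Assign{v}{e}$, and \autoref{thm:compile-expr-correct-elided} tells us the source register records exactly $e$, so $\mathsf{ev_{exp}}$ of $e$ under $\mem$ is precisely the value written---matching \texttt{While}'s one-step semantics and preserving all of shared memory, as $\PreservesMM$ demands.

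The inductive cases lift the induction hypothesis for the relevant sub-program. For \texttt{seq}/\texttt{join} the $\Joinable$ facts those rules carry (\autoref{sec:labels-seq}) pin down where $P_1$ may jump, so a step inside $P_1$ either stays in $P_1$ (handled by the IH) or, on reaching the end of $P_1$, crosses into $P_2$ via the universally quantified clause of the \texttt{seq} rule instantiated at the now-reached configuration---which is $\CompiledCmdConfigConsistent$ by \autoref{thm:compiled-cmd-eval-maintains-comprec-consistency-elided}---and \texttt{join} is the mirror image. The \texttt{if}- and \texttt{while}-cases combine expression-evaluation handling with control flow: in \texttt{if\_expr} (resp.\ \texttt{while\_expr}), once $P_e$ is exhausted the $\Jz$ reads the register that by \autoref{thm:compile-expr-correct-elided} records the condition expression, so the branch \texttt{RISC} takes agrees with the branch \texttt{While} takes, and we pass to \texttt{if\_c1}/\texttt{if\_c2} (resp.\ into the unfolded loop body, cases \texttt{while\_inner}/\texttt{while\_loop}). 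For \texttt{while\_expr} the subtle point is that the very first step of $P_e$ has $\AbsStepsWR$ equal to $1$, which is exactly what lets the \texttt{While} program take its intermediate step from $\Whileg{e}{c}$ to $\ITEg{e}{(\Seqg{c}{\Whileg{e}{c}})}{\Stop}$ so that source and target stay aligned. The epilogue steps---$\Jmp$ and $\Nop$ used for control flow at the end of a compiled sub-program, handled by \texttt{epilogue\_step} and \texttt{while\_loop}---have $\AbsStepsWR = 0$; here the \texttt{While} program is already terminated and stands still while the \texttt{RISC} program navigates to the end of its text or jumps back to the loop head, closedness of $\RefRelWR$ over pairwise reachability (built into \texttt{epilogue\_step}) re-establishing the relation.

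I expect the main obstacle to be precisely the expression-evaluation and jump cases: establishing, for every intermediate value of $\pc$ inside a compiled expression or across a compiled conditional, exactly which case of $\RefRelWR$ is re-established, and ruling out the \texttt{RISC} destination program counters that would not correspond to any reachable \texttt{While} configuration. This requires unfolding $\CompileCmd$ and $\CompileExpr$ via their simplification rules to expose the emitted instruction fragments, tracking register-record consistency through each $\Load$/$\Op$/$\MoveK$, and carefully checking the off-by-one bookkeeping---loop unfolding, prologue, and epilogue navigation---that keeps the \texttt{While} program from falling behind the \texttt{RISC} program.
\end{proof}
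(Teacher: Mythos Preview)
Your proposal is correct and follows essentially the same approach as the paper's own proof: rule induction on $\RefRelWR$, with the guards (via \autoref{thm:compiled-cmd-eval-maintains-comprec-consistency-elided}) and expression-compiler correctness (\autoref{thm:compile-expr-correct-elided}) supplying exactly the facts needed to re-establish the relation at each step, and the main difficulty being the control-flow bookkeeping that rules out spurious destination program counters. One small imprecision: the \texttt{stop} case is vacuous because a terminated \texttt{RISC} program takes no step, so it is not that $\AbsStepsWR$ returns $1$ there but rather that the premise $\lc_r \EvalStepRISC \lc_r'$ is unsatisfiable.
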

\begin{proof}
    By induction on the structure of $\RefRelWR$.

    The base case \texttt{stop} is immediate, because it pertains to a terminated \texttt{While} and \texttt{RISC} program.
    The base cases that proceed in one step to a terminating program configuration (\texttt{skip\_nop}, \texttt{assign\_store}, \texttt{lock\_acq}, \texttt{lock\_rel}) are fairly straightforward because after dealing with the single step, the resulting obligation can then be handled by the \texttt{stop} case.
    This leaves the last remaining base case \texttt{assign\_expr}, which proceeds in one step either to itself, or to \texttt{assign\_store}.
    In all of these cases, we use \autoref{thm:compiled-cmd-eval-maintains-comprec-consistency-elided} to obtain the preservation of the guards demanded by the $\RefRelWR$ introduction rule for the destination configuration of the step.
    Particularly, the \texttt{assign\_store} case must make use of $\RegrecMemConsistent$ and the correctness of $\CompileExpr$ (\autoref{thm:compile-expr-correct-elided}) in order to ensure that once the expression evaluation result is written back to shared memory, $\LCSameMdsMem{\lc_w'}{\lc_r'}$ holds as demanded by the \texttt{stop} case.

    The inductive cases that concern expression evaluation (\texttt{if\_expr}, \texttt{while\_expr}) are much like \texttt{assign\_expr} in that they have the possibility of progressing in one step to themselves.
    Unlike \texttt{assign\_expr} however, their other possibility is a conditional jump based on the result of that expression.
    Again we use \autoref{thm:compile-expr-correct-elided} to obtain that the result is an accurate calculation of the expression, and this time we prove by the two different cases whether \texttt{if\_expr} ends up in \texttt{if\_c1} or \texttt{if\_c2}, or if \texttt{while\_expr} ends up in \texttt{while\_inner} or at \texttt{stop} (having jumped to the exit label).
    In these cases, the guards over which the inductive references to $\RefRelWR$ have been quantified are versatile enough to discharge themselves (when \texttt{*\_expr} steps to itself), or to discharge any reachable initial starting state for the nested compiled \texttt{RISC} program, given that \autoref{thm:compiled-cmd-eval-maintains-comprec-consistency-elided} ensures the invariance of these guards.

    This just leaves the inductive cases that pertain to configurations inside a nested compiled \texttt{RISC} program (\texttt{if\_c1}, \texttt{if\_c2}, \texttt{while\_inner}), or at the end of one (\texttt{epilogue\_step}, \texttt{while\_loop}).
    In these cases, the inductive hypotheses obtained from the inductive reference to $\RefRelWR$ are always enough to satisfy the guards demanded by the possible destination cases.
    Like in the proof of \autoref{thm:compiled-cmd-eval-maintains-comprec-consistency-elided}, the trickiness mostly comes from accounting for all the possible cases of control flow (ruling out spurious destinations) that need to be considered.
\end{proof}

\begin{lemma} \label{thm:simpler-refinement-safe_wr-elided}
$\begin{aligned}
\inferrule{
    \StrongLowBisimMM\ \mathcal{B} \and
    \NoHighBranching\ \mathcal{B}
} {
    \SimplerRefinementSafe\ \mathcal{B}\ \RefRelWR\ \CouplInvWR\ \AbsStepsWR
}\end{aligned}$
\end{lemma}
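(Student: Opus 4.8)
The plan is to unfold $\SimplerRefinementSafe$ (\autoref{def:simpler-refinement-safe}) and discharge its three conjuncts in turn. So fix abstract and concrete configurations with $(\lc_{1A}, \lc_{2A}) \in \mathcal{B}$, $\LCSameMds{\lc_{1A}}{\lc_{2A}}$, $(\lc_{1A}, \lc_{1C}) \in \RefRelWR$, $(\lc_{2A}, \lc_{2C}) \in \RefRelWR$, $(\lc_{1C}, \lc_{2C}) \in \CouplInvWR$ and $\LCSameMds{\lc_{1C}}{\lc_{2C}}$. First I would read off, from the definition of $\CouplInvWR$, that the two concrete configurations share a program counter and program text, so $\lc_{iC} = \LocalConfRISC{\pc}{P}{\regs_i}{\mds}{\mem_i}$ for $i \in \{1,2\}$ (their mode states equal by the $\LCSameMds{}{}$ hypothesis); from $\NoHighBranching\ \mathcal{B}$ applied to $(\lc_{1A}, \lc_{2A})$, that both abstract configurations run the same \texttt{While} command $c$, and that whenever $\mathsf{leftmost{\text -}cmd}\ c = \ITEg{e}{c_1}{c_2}$ the guard $e$ evaluates identically under the two abstract memories; and from $\PreservesMM\ \RefRelWR$ (\autoref{thm:preserves-modes-mem-R_wr-elided}), that the abstract memory on each side coincides with the corresponding $\mem_i$.

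The first two conjuncts are then essentially immediate. Whether a \texttt{RISC} configuration $\stops$ depends only on $\pc$ and $\length\ P$, both shared, so $\stops\ \lc_{1C} = \stops\ \lc_{2C}$. Likewise $\AbsStepsWR$ inspects only the instruction about to be executed, $P\ !\ \pc$, together with the $\mathsf{leftmost{\text -}cmd}$ of the current \texttt{While} command; since $P$, $\pc$ and $c$ are all shared between the two pairs, $\AbsStepsWR\ \lc_{1A}\ \lc_{1C} = \AbsStepsWR\ \lc_{2A}\ \lc_{2C}$.

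For the third conjunct, take concrete steps $\lc_{1C} \EvalStepRISC \lc_{1C}'$ and $\lc_{2C} \EvalStepRISC \lc_{2C}'$. A \texttt{RISC} step never alters the program text, so membership in $\CouplInvWR$ reduces to equality of the two resulting program counters, and $\LCSameMds{\lc_{1C}'}{\lc_{2C}'}$ reduces to both mode states being updated the same way. I would then case on the instruction $P\ !\ \pc$, which is the same for both executions. For every instruction except $\Jz$, the program counter advances by one regardless of register or memory contents, so the two destination counters coincide; and the mode state is touched only by $\RISCLockAcq\ k$ / $\RISCLockRel\ k$, whose action is a fixed function of $k$ (via $\lockinterp$) and the previous, equal mode state, so the two updated mode states coincide too. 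This leaves the conditional jump $\Jz\ br\ r$, for which the destination is the label position of $br$ when $\regs_i\ r = 0$ and $\pc + 1$ otherwise, so it suffices to establish $\regs_1\ r = \regs_2\ r$.

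That last equality is the crux, and the step I expect to be the main obstacle. The argument I have in mind is: since $P$ is a \WRCompiler{} output, a $\Jz\ br\ r$ occurring in it is always emitted at the end of the expression-compilation for the guard $e$ of some if-conditional (or of a while-loop, after its first step has unfolded it into one), with the annotated compilation record $C_{\pc}$ being the one $\CompileExpr$ returned for $e$; by correctness of the expression compiler (\autoref{thm:compile-expr-correct-elided}) this gives $\Regrec\ C_{\pc}\ r = \Some\ e$, and the $\RefRelWR$-guard $\CompiledCmdConfigConsistent$, whose invariance along \texttt{RISC} execution is \autoref{thm:compiled-cmd-eval-maintains-comprec-consistency-elided}, then yields $\regs_i\ r = \mathsf{ev_{exp}}\ \mem_i\ e$. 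A companion invariant of $\RefRelWR$ --- that when the \texttt{RISC} side sits at such a $\Jz$ the \texttt{While} side's $\mathsf{leftmost{\text -}cmd}$ is exactly $\ITEg{e}{c_1}{c_2}$ for the same $e$ --- lets us invoke $\NoHighBranching$ to conclude $\mathsf{ev_{exp}}\ \mem_{1A}\ e = \mathsf{ev_{exp}}\ \mem_{2A}\ e$, and since $\mem_{iA} = \mem_i$ by $\PreservesMM$ we are done. Both the placement-of-$\Jz$ fact and this companion invariant would be proved by induction on the derivation of $\RefRelWR$, threading through the cases \texttt{if\_expr}, \texttt{while\_expr}, \texttt{if\_c1}, \texttt{if\_c2}, \texttt{while\_inner}, and so on; as in \autoref{thm:compiled-cmd-eval-maintains-comprec-consistency-elided}, the real work is the control-flow bookkeeping needed to pin down which \texttt{RISC} offsets host a $\Jz$ and to rule out spurious jump targets. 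Finally, combining the resulting $\SimplerRefinementSafe$ with \autoref{thm:preserves-modes-mem-R_wr-elided}, \autoref{thm:closed-others-R_wr-elided} and \autoref{thm:R_wr-refinement-abs-steps_wr-elided}, and observing that $\CgConsistent\ \CouplInvWR$ and $\Sym\ \CouplInvWR$ hold trivially because $\CouplInvWR$ constrains only the thread-private pair $(\pc, P)$ and not shared memory, delivers $\SecureRefineSimpler\ \mathcal{B}\ \RefRelWR\ \CouplInvWR\ \AbsStepsWR$.
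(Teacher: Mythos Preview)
Your handling of consistent stopping behaviour, of $\AbsStepsWR$ consistency, and of the $\Jz$ case all match the paper's approach; for $\Jz$ you in fact spell out in considerably more detail the chain (expression-compiler correctness, $\CompiledCmdConfigConsistent$, then $\NoHighBranching$ transported along $\PreservesMM$) that the paper only sketches as ``use $\NoHighBranching$ for $\mathcal{B}$ with memory preservation via $\RefRelWR$''.

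The gap is in your treatment of the locking primitives. You assert that $\RISCLockAcq\ k$ and $\RISCLockRel\ k$ act on the mode state as ``a fixed function of $k$ (via $\lockinterp$) and the previous, equal mode state'', and earlier that every non-$\Jz$ instruction advances $\pc$ ``regardless of register or memory contents''. In the paper's semantics this is not so: whether $\RISCLockAcq\ k$ actually acquires the lock (updating $\mds$ and advancing $\pc$) or merely spins depends on the value $\mem\ k$ of the lock variable in shared memory. Hence to conclude both $\pc_1' = \pc_2'$ (for $\CouplInvWR$) and $\LCSameMds{\lc_{1C}'}{\lc_{2C}'}$ you need $\mem_1\ k = \mem_2\ k$. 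The paper obtains this from the framework requirement that $\dmaFunc$ classifies all lock variables $\Low$: then $\StrongLowBisimMM\ \mathcal{B}$ gives $\LowMdsEq{\mds}{\mem_{1A}}{\mem_{2A}}$ and hence $\mem_{1A}\ k = \mem_{2A}\ k$, which $\PreservesMM\ \RefRelWR$ transports to the concrete memories---exactly the same pipeline you already use for the $\Jz$ guard. Without this step, the lock case does not go through.

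As a minor note, your final paragraph assembles the full $\SecureRefineSimpler$, which is the content of the lemma immediately following this one rather than of this lemma itself; the argument you give there is correct.
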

\begin{proof}
    \autoref{def:simpler-refinement-safe} gives us the following obligations.

    For consistent stopping behaviour, we prove a lemma that \texttt{RISC} programs stop if and only if their $\pc$ is outside the program text $P$, i.e. $\pc > \length\ P$.
    Because $\CouplInvWR$ equates $\pc$ and $P$ for the two configurations, then clearly both have identical stopping behaviour.

    For consistency of change in timing behaviour, $\AbsStepsWR$ depends only on \texttt{While} and \texttt{RISC} program locations, and $\NoHighBranching$ and $\CouplInvWR$ forces them (resp.) to be equal for the local configurations under consideration.

    For closedness of $\CouplInvWR$ under lockstep execution, the only non-straightforward cases to consider are conditional branching, and the locking primitives.
    For conditional branching, we use $\NoHighBranching$ for $\mathcal{B}$ with memory preservation via $\RefRelWR$ (\autoref{thm:preserves-modes-mem-R_wr}) to ensure that the conditional branching outcome is the same on both sides.

    Finally, as the only operations that touch mode state, the locking primitives are the only non-straightforward cases for mode state equality maintenance under lockstep execution.
    As all lock memory is classified $\Low$ (see \autoref{sec:dma-lock-interp-reqs}),
    we use $\StrongLowBisimMM$ for $\mathcal{B}$ with memory preservation via $\RefRelWR$ to ensure the \texttt{RISC} configurations behave consistently.
\end{proof}

\begin{lemma}
$\begin{aligned}
\inferrule{
    \StrongLowBisimMM\ \mathcal{B} \and
    \NoHighBranching\ \mathcal{B}
} {
    \SecureRefineSimpler\ \mathcal{B}\ \RefRelWR\ \CouplInvWR\ \AbsStepsWR
}\end{aligned}$
\end{lemma}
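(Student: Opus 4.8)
The plan is to unfold $\SecureRefineSimpler\ \mathcal{B}\ \RefRelWR\ \CouplInvWR\ \AbsStepsWR$ according to \autoref{def:secure-refinement-simpler} and discharge each of its conjuncts from the component results already in hand. Three of them are immediate instances of earlier lemmas: $\PreservesMM\ \RefRelWR$ is \autoref{thm:preserves-modes-mem-R_wr}; $\ClosedOthers\ \RefRelWR$ is \autoref{thm:closed-others-R_wr}; and the final $\forall$-quantified conjunct of \autoref{def:secure-refinement-simpler} --- the square-shaped refinement-preservation obligation depicted in \autoref{fig:decomp-R}, with the abstract side paced by exactly $\AbsStepsWR\ \lc_A\ \lc_C$ --- is \autoref{thm:R_wr-refinement-abs-steps_wr}. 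Likewise $\SimplerRefinementSafe\ \mathcal{B}\ \RefRelWR\ \CouplInvWR\ \AbsStepsWR$ is \autoref{thm:simpler-refinement-safe_wr}; this is the only place where the two hypotheses $\StrongLowBisimMM\ \mathcal{B}$ and $\NoHighBranching\ \mathcal{B}$ are actually consumed.

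That leaves only the two conjuncts that constrain $\CouplInvWR$ by itself, namely $\CgConsistent\ \CouplInvWR$ and $\Sym\ \CouplInvWR$, and both follow directly from the shape of $\CouplInvWR$ (\autoref{sec-wr-coupling-invariant}): it relates two \texttt{RISC} local configurations precisely when they agree on the pair $(\pc, P)$ of program counter and program text. Symmetry of $\CouplInvWR$ is then just symmetry of equality. For $\CgConsistent$ (\autoref{def:cg-consistent}) an environment action may alter only the values and classifications of $\Writable$ shared variables; since $\pc$ and $P$ reside in thread-private state and are untouched by any such action, membership in $\CouplInvWR$ is trivially preserved, and the low-equivalence hypothesis on the changed memories is simply not needed.

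Conjoining these six facts yields $\SecureRefineSimpler\ \mathcal{B}\ \RefRelWR\ \CouplInvWR\ \AbsStepsWR$; combined with \autoref{thm:secure-refinement-simpler-sound} it re-establishes \autoref{thm:secure-refinement-R_wr} without ever discharging the cube-shaped $\CouplingInvPres$ directly, which is exactly the payoff of the decomposition principle of \autoref{sec-simpler}. I do not expect a genuine obstacle: all the substantive reasoning is concentrated in \autoref{thm:R_wr-refinement-abs-steps_wr} and \autoref{thm:simpler-refinement-safe_wr}, and the present lemma is a routine assembly step. The one thing to be careful about is lining up the exact statement shapes --- in particular that the pacing argument threaded through the refinement-preservation conjunct of \autoref{def:secure-refinement-simpler} is literally $\AbsStepsWR\ \lc_A\ \lc_C$, so that \autoref{thm:R_wr-refinement-abs-steps_wr} applies verbatim rather than after a rewrite.
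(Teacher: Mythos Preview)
Your proposal is correct and follows exactly the paper's own argument: unfold \autoref{def:secure-refinement-simpler}; discharge $\PreservesMM$, $\ClosedOthers$ and the square-shaped refinement conjunct by Lemmas~\ref{thm:preserves-modes-mem-R_wr}, \ref{thm:closed-others-R_wr} and \ref{thm:R_wr-refinement-abs-steps_wr}; observe that $\CouplInvWR$ is symmetric and $\CgConsistent$ because it constrains only the thread-private pair $(\pc,P)$, which no environment action on shared memory can touch; and invoke the separately established $\SimplerRefinementSafe$ result for the remaining conjunct (indeed the only place the hypotheses on $\mathcal{B}$ are consumed). One labelling caveat: in this paper the label you cite for that last step is attached to the very lemma you are proving (its conclusion is $\SecureRefineSimpler$, not $\SimplerRefinementSafe$), so the reference as written is circular; the paper factors the $\SimplerRefinementSafe$ obligation out as a separate appendix lemma, and that is the result you mean to cite.
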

\begin{proof}
    Referring to \autoref{def:secure-refinement-simpler}, the obligations pertaining only to $\RefRelWR$ and $\AbsStepsWR$ are discharged by \autoref{thm:R_wr-refinement-abs-steps_wr}, \autoref{thm:closed-others-R_wr}, and \autoref{thm:preserves-modes-mem-R_wr}.
    Pertaining to $\CouplInvWR$: clearly $\CouplInvWR$ is symmetric, and furthermore it is $\CgConsistent$ (\autoref{def:cg-consistent}) because the actions over which $\CouplInvWR$ must be closed modify only the shared memory, and $\CouplInvWR$ places only restrictions on the program text and current location.
    The final obligation is discharged by \autoref{thm:simpler-refinement-safe_wr-elided}.
\end{proof}

\begin{theorem} [Successful compilations are refinements in $\RefRelWR$] \label{thm:compile-cmd_correctness_R_wr-elided}
\[
\mprset{vskip=0.5ex}
\inferrule{
    (\PCs, l', \nl', C', \failed) = \CompileCmd\ C\ l\ \nl\ c \and
      \CompilerInputReqs\ C\ l\ \nl\ c \and \\
    \failed = \False \and
      \CompiledCmdConfigConsistent\ C\ \regs\ \mds\ \mem \and
      P = \mapfst\ \PCs
} {
    (\LocalConfWhile{c}{\mds}{\mem},
     \LocalConfRISC{0}{P}{\regs}{\mds}{\mem}) \in \RefRelWR
}
\]
\end{theorem}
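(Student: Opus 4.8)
The plan is to prove the theorem by structural induction on the \texttt{While} command $c$, after first generalising the statement so that the registers $\regs$, mode state $\mds$ and memory $\mem$ are universally quantified over all triples with $\CompiledCmdConfigConsistent\ C\ \regs\ \mds\ \mem$ (keeping $\CompilerInputReqs\ C\ l\ \nl\ c$ and $\failed = \False$ as hypotheses). This generalisation is precisely what lets the induction hypothesis be re-instantiated at the various reachable initial configurations of compiled sub-programs, which is the form in which the \texttt{seq}, \texttt{if\_expr} and \texttt{while\_expr} introduction rules of $\RefRelWR$ demand their inductive premises. In each case of the induction I would use the simplification rules for the $\CompileCmd$ implementation to expose the precise shape of $P = \mapfst\ \PCs$, then discharge $(\LocalConfWhile{c}{\mds}{\mem}, \LocalConfRISC{0}{P}{\regs}{\mds}{\mem}) \in \RefRelWR$ by applying the matching introduction rule from \autoref{sec:R_wr-informal}; offset $0$ always coincides with the head of the first fragment that $c$ compiles to.

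The base cases go quickly. $\Skip$ compiles to $[\Nop]$ and is handled by \texttt{skip\_nop}; $\LockAcq{k}$ and $\LockRel{k}$ compile to the singleton lock instructions and use \texttt{lock\_acq}/\texttt{lock\_rel}; $\Assign{v}{e}$ compiles to $P_e$ followed by a single $\Store\ v\ r$, so I would apply \texttt{assign\_expr} at offset $0$ of $P_e$, the only non-trivial side-obligation being that the eventual $\Store$ (the \texttt{assign\_store} case reachable later) writes back the same value that evaluating $e$ in \texttt{While} would produce --- which is exactly the expression-compiler correctness lemma (\autoref{thm:compile-expr-correct-elided}). In every base case the guards carried by the introduction rule --- $\CompiledCmdConfigConsistent$ at the current location and $\RegrecStable$ of the annotating compilation record --- come from the theorem's hypotheses together with the register-record stability lemma (\autoref{thm:compiled-cmd-regrec-stable}).

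The inductive cases are the substance of the proof. For $\Seqg{c_1}{c_2}$, $\CompileCmd$ emits $P_1 @ P_2$ from two consecutive sub-compilations; I would apply \texttt{seq}, feeding it the induction hypothesis for $c_1$ (at offset $0$ of $P_1$), the configuration-quantified induction hypothesis for $c_2$ (at offset $0$ of $P_2$, with the record returned by the $c_1$ sub-call as its initial record), and the fact that consecutively compiled fragments are $\Joinable$ (\autoref{sec:labels-seq}). For $\ITEg{e}{c_1}{c_2}$ the emitted program is $P_e @ [\Jz\ \br\ r] @ P_1 @ [\Jmp\ \ex] @ P_2 @ [\Nop]$ (\autoref{sec:compile-cmd-if}), and I would apply the \texttt{if\_expr} rule, whose premises are essentially the defining equations of this case of $\CompileCmd$ together with the guards at offset $0$ and the universally-quantified inductive clauses for $P_1$ and $P_2$ --- the latter supplied directly by the induction hypotheses for $c_1$ and $c_2$. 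The $\Whileg{e}{c'}$ case is analogous via \texttt{while\_expr}, additionally routing the \texttt{While} side's first step through the loop unfolding $\ITEg{e}{(\Seqg{c'}{\Whileg{e}{c'}})}{\Stop}$. In all three cases the overall $\failed = \False$ must be propagated to conclude that each sub-compilation also succeeded, so that the induction hypothesis applies.

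The hard part, shared by every inductive case, is re-establishing $\CompilerInputReqs$ for the recursive sub-calls. This needs: (i) tracking the monotone growth of the next-label counter so that entry labels handed to sub-calls stay valid ($< \nl$); (ii) a structural-decomposition lemma for the static check, that $\NoUnstableExprs\ c\ C$ entails $\NoUnstableExprs$ of each immediate subcommand with respect to the record actually passed to its sub-call --- in particular the record returned by $\CompileExpr$ on the guard, in the \texttt{if} and \texttt{while} cases; and (iii) $\RegrecStable$ of those intermediate records, obtained by threading \autoref{thm:compiled-cmd-regrec-stable} (and its expression-level companion for $\CompileExpr$) through the sub-compilations. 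On top of that each case carries the familiar control-flow bookkeeping --- confirming that offset $0$ really is the head of the relevant fragment, and that the epilogue $\Jmp$/$\Nop$ instructions line up with \texttt{epilogue\_step}/\texttt{while\_loop}/\texttt{join} rather than some spurious location. None of this is deep, but it makes the case analysis lengthy; the payoff is that, once $\CompilerInputReqs$ is re-established for each sub-call, the inductive clauses of the introduction rules collapse straight onto the induction hypotheses.
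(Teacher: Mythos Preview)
Your approach is essentially the paper's: structural induction on $c$, with each constructor discharged by the matching introduction rule of $\RefRelWR$ (\texttt{skip\_nop}, \texttt{seq}, \texttt{if\_expr}, \texttt{while\_expr}, \texttt{assign\_expr}, \texttt{lock\_acq}, \texttt{lock\_rel}). The generalisation over $\regs$, $\mds$, $\mem$ satisfying $\CompiledCmdConfigConsistent$ is exactly the right move to make the induction hypotheses match the universally-quantified inductive premises of the \texttt{seq}, \texttt{if\_expr} and \texttt{while\_expr} rules, and the propagation of $\failed = \False$ and re-establishment of $\CompilerInputReqs$ for sub-calls is the bookkeeping the paper's sketch glosses over.

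One small over-reach: a couple of the obligations you list belong to the \emph{refinement preservation} lemma (\autoref{thm:R_wr-refinement-abs-steps_wr-elided}), not to this theorem. In particular, the correctness of the eventual $\Store$ via \autoref{thm:compile-expr-correct-elided}, and the alignment of epilogue $\Jmp$/$\Nop$ instructions with \texttt{epilogue\_step}/\texttt{while\_loop}/\texttt{join}, are about what happens \emph{after} execution steps away from $\pc = 0$; here you only need to place the initial pair $(\langle c,\mds,\mem\rangle_{\mathsf w},\langle((0,P),\regs),\mds,\mem\rangle_{\mathsf r})$ into $\RefRelWR$, so the \texttt{assign\_expr} rule's guards at offset $0$ suffice and the later-reachable cases never arise. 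This does not affect the soundness of your plan, only its economy.
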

\begin{proof}
    By induction on the structure of \texttt{While}.
    The compiler input and initial configuration conditions we impose allow us to have each of
    $\Skip$, $\Seqg{\cmd}{\cmd}$, $\ITEg{exp}{\cmd}{\cmd}$,
    $\Whileg{exp}{\cmd}$, $\Assign{v}{exp}$,
    $\LockAcq{k}$, and $\LockRel{k}$
    and their compiled output meet the guards of the introduction rules for the cases
    \texttt{skip}, \texttt{seq}, \texttt{if\_expr}, \texttt{while\_expr}, \texttt{assign\_expr}, \texttt{lock\_acq}, and \texttt{lock\_rel} of $\RefRelWR$ that were designed for them respectively.
\end{proof}
\fi 

\end{document}